\def\thefootnote{\ifnum\c@footnote>\z@\textasteriskcentered\@arabic\c@footnote\fi}
\renewcommand{\footnoterule}{%
\kern-3\p@
\hrule width 0.4\columnwidth
\kern 2.6\p@}
\def\thefootnote{\ifnum\c@footnote>\z@\@arabic\c@footnote\fi}
\newcommand{\@authornote}[2]{{\def\thefootnote{\fnsymbol{footnote}}\setcounter{footnote}{#1}#2\setcounter{footnote}{0}}}
\newcommand{\authornotemark}[1]{\@authornote#1{\addtocounter{footnote}{-1}\footnotemark}}
\newcommand{\authornotetext}[2]{\@authornote#1{\footnotetext{#2}}}
\begin{document}
\theoremstyle{definition}
\newtheorem*{theorem}{Theorem}
\newtheorem*{proposition}{Proposition}
\newtheorem*{example}{Example}
\newtheorem*{corollary}{Corollary}
\newtheorem*{lemma}{Lemma}
\newtheorem*{definition}{Definition}
\newtheorem*{claim}{Claim}
\newtheorem*{remark}{Remark}
\newtheorem*{observation}{Observation}
\begin{titlepage}

\begin{flushright}
UT--15--04\\
\end{flushright}

\vskip 2.5 cm

\begin{center}
{\Large \bf On the Jeffrey-Kirwan residue of $BCD$-instantons}

\vskip 1.05in

{\Large \bf Satoshi Nakamura${}^\dag$\footnote[0]{${}^\dag${\it E-mail:} \textcolor{black}{satoshi@hep-th.phys.s.u-tokyo.ac.jp}}}
\vskip 0.4in

{\large 
{\it Department of Physics, The University of Tokyo}\\[0.4em]
{\it Hongo 7-3-1, Bunkyo-ku, Tokyo 113-0033, Japan}
}
\vskip 0.1in
\end{center}
\vskip .95in

\begin{abstract}
\noindent We apply the Jeffrey-Kirwan method to compute the multiple integrals for the $BCD$ type Nekrasov partition functions of four dimensional $\mathcal{N}=2$ supersymmetric gauge theories. We construct a graphical distinction rule to determine which poles are surrounded by their integration cycles. We compute the instanton correction of the ``$Sp(0)$" pure super-Yang-Mills theory and find that
\begin{eqnarray*}
Z^{Sp(0)}_{k}=\frac{(-1)^{k}}{2^{k}k!\varepsilon_{1}^{k}\varepsilon_{2}^{k}}
\end{eqnarray*}
for $k\le 8$, which resembles the formula $Z^{U(1)}_{k}=(k!\varepsilon_{1}^{k}\varepsilon_{2}^{k})^{-1}$ for the pure super-Yang-Mills theory with gauge group $U(1)$.
\end{abstract}

\end{titlepage}
\renewcommand{\thefootnote}{\dag\arabic{footnote}}
\setcounter{footnote}{0}

\section{Introduction and Summary} 
From the advent of Seiberg-Witten curves\cite{Seiberg:1994rs}, many works began to appear in order to pursue non-perturbative aspects of supersymmetric gauge theories. In particular, it was found that instanton corrections can be calculable for some theories. 
Nekrasov and others derived instanton partition functions, or Nekrasov partition functions, for four dimensional $\mathcal{N}=2$ gauge theories\cite{Nekrasov:2003af,Nekrasov2003}. They applied a localization method to instanton moduli spaces and gave the partition functions by finite dimensional integrations when the gauge groups are classical. 

When the gauge group is unitary, however, one can avoid direct evaluations of the integral. The $U(1)$ Nekrasov partition function can be rephrased in terms of Hilbert schemes of points on surfaces\cite{nakajimalectures,Nakajima:2003pg} and expressed by a sum over Young diagrams, whose summands can be given by an algebraic manipulation of such diagrams. Each Young diagram corresponds to a fixed point in the instanton moduli space with respect to the localization and  its summand means its weight. Similarly, the $SU(N)$ Nekrasov partition function can be expressed as a sum over fixed points, or $N$-vectors of Young diagrams\cite{Nekrasov:2003af,Flume:2002az,Fucito:2004gi}. In terms of integrations, each fixed point corresponds to a $\mathfrak{S}_{N}$-orbit of poles of the integrand whose sum of residues does not vanish. 

Later, the Alday-Gaiotto-Tachikawa relation\cite{Alday:2009aq} was proposed, which implies that there exists an action of an infinite dimensional nonlinear symmetry, which is of a two dimensional conformal field theory, on (the fixed points of) the instanton moduli space in four dimension. This relation has been proved for some $SU(N)$ gauge theories\cite{Fateev:2009aw,Alba:2010qc,2010arXiv1003.1049Y,schiffmann2013cherednik,maulik2012quantum,Kanno:2013aha,Morozov2014,Matsuo:2014rba}. The fixed points mentioned above have played an role of a good basis where such actions can be seen easily, like simultaneous eigenstates of a Cartan subalgebra of a simple Lie algebra, and been used to construct a certain coherent vector of the nonlinear symmetry.

Taking these results for the unitary gauge group, it may be useful to identify the set of all the fixed points in an instanton moduli space to see how the AGT relation works when the gauge group is general\footnote{The AGT relations for 1-instantons were considered in \cite{Keller:2011ek}. In this case, one can determine all the weights of fixed points without using integral expressions.}. However, the techniques used in the unitary case do not work well for the other gauge groups\footnote{In particular, for the $SO(N)$ or $Sp(N)$ cases, some analogies of the stability condition in \cite{nakajimalectures} for the unitary cases seems to be needed but they are not known at present.}. To the best of our knowledge, one has to evaluate their integrals directly when one identifies the set of fixes points for the $SO(N)$ or $Sp(N)$ cases.
Much efforts have been done for these cases\cite{Nekrasov:2004vw,Fucito:2004gi,Marino:2004cn,Hollands:2010xa}, but were limited for small instanton numbers. 

This limitation may stem not only from their troublesome calculations of multiple integrations but also from a subtle issue of multiple integrations -- the contour choice. For the Nekrasov partition function with one integral variable, the contour lies in the real axis and surrounds the upper half-plane. This expression fixes the integration cycle and then one can perform the integration using Cauchy's formula. Similarly for the partition function with multiple variables, each variable is real and tends to surround the upper half-plane. However, this expression is too naive to identify an integration cycle because one may treat a pole whose (iterated) residue depends on how to surround it\footnote{For example, we have different results from two different cycles
\begin{eqnarray*}
\frac{1}{x(y-x)}\underset{\mathrm{Res}_{x=0}}{\longmapsto}\frac{1}{y}\underset{\mathrm{Res}_{y=0}}{\longmapsto}1,\\
\frac{1}{x(y-x)}\underset{\mathrm{Res}_{y=0}}{\longmapsto}0\underset{\mathrm{Res}_{x=0}}{\longmapsto}0.
\end{eqnarray*}
These results depends on whether the contour that the variable $x$ goes thorough contains the one for the variable $y$. We should split the two circles and determine an integration cycle.}. In other words, one should be careful how to expand an integrant of several variables to get its Laurent series. Also, there may be an pole whose component is real, which means that it is unclear whether the pole is ``in" or ``out" the integration cycle. In the author's previous paper\cite{Nakamura:2014nha}, such unclear poles as well as awkward cancellations of residues appeared after regarding the multiple integral as an iterated integral. These difficulties have motivated the author to construct a better algorithm to compute the integrals representing the Nekrasov partition functions.

Then, when one tackles with a multiple contour integral, one has to specify its integration cycle more clearly. For our cases, the cycle must be determined physically. In recent years, the authors of \cite{Benini:2013xpa,Hori:2014tda,Hwang:2014uwa} considered the choice and showed that
the Jeffrey-Kirwan(JK) residue operator\cite{1993alg.geom..7001J} gives the true cycles in the multiple integration expressions for the Nekrasov partition functions.

Therefore, in this paper, we consider how this JK-residue method works for the $BCD$ type Nekrasov partition functions. We construct a graphical algorithm\footnote{The graphical algorithm concerns the ``multiplicities" of poles in \cite{Hollands:2010xa}. The notion of the multiplicity may arise from the choice of the integration cycle. The pole distribution of the integrand is symmetric under a Weyl group, but the integration cycle may break this symmetry. Then, for a given Weyl orbit of some poles, it is unclear how many poles in the orbit contribute to the partition function. Roughly speaking, the multiplicity for the orbit is the number of its poles in the cycle. The graphical algorithm determines such numbers.} to determine which poles are in the integration cycle and calculate the partition function for the ``$Sp(0)$" case partly. As a result, we get a resultant list of residues in the appendix \ref{listcalc} and observe
\begin{eqnarray}
Z^{Sp(0)}_{k}=\frac{(-1)^{k}}{2^{k}k!\varepsilon_{1}^{k}\varepsilon_{2}^{k}}\quad (k\le 8)
\end{eqnarray}
for the $Sp(0)$ case, which coincides with the Taylor expansion of $\exp (-q/2\varepsilon_{1}\varepsilon_{2})$ at $q=0$. This resembles the one $Z_{k}^{U(1)}=(k!\varepsilon_{1}^{k}\varepsilon_{2}^{k})^{-1}$ for the $U(1)$ case. In the latter $U(1)$ case, it can be rephrased as a combinational formula of Jack polynomials, whose algebraic properties give clearer understandings of the partition function and leaded to the AGT proofs for some unitary cases. Then our observation for the $Sp(0)$ case implies some algebraic backgrounds of its partition function and then may become a key to prove the AGT relations for the $BCD$ cases through a construction of a representation of an infinite dimensional algebra acting on some orthogonal polynomials.

We organize this paper as follows. In section \ref{review}, we first review the Nekrasov partition functions and the JK-residue operator and then consider a sum of residues over a Weyl orbit of poles, representing a fixed point in the localization. We will define a factor $A_{\sigma,\eta}\in\mathbb{N}$, which roughly expresses the number of poles in the integration cycle and then its determination is important to complete the integration. In the appendix \ref{multiplicityfactors}, we comment on relations between the factor and the notion of the multiplicity in \cite{Hollands:2010xa}. In section \ref{section3}, we give an graphical algorithm to determine the $A_{\sigma,\eta}$. This algorithm is applicable for the $BCD$ cases. In section \ref{box_arrangement}, we give a box description of a Weyl orbit of poles\footnote{This box description is not new. It appeared in \cite{Marino:2004cn,Hollands:2010xa}.} and then we apply the above results to the ``$Sp(0)$" instanton partition function. We conclude in section \ref{discuss} with some discussions, giving some related materials that we do not treat in this paper.

\section{Nekrasov partition function and Jeffrey-Kirwan\\ residue}\label{review}
In this section, first we recall the integral expressions of the Nekrasov partition functions and pick up some aspects of the Jeffrey-Kirwan residue. Then we rewrite the Nekrasov partition functions into the summation over all the Weyl orbits of poles in their integrands. We will use the final form (\ref{Zeta}) in the following sections.
\subsection{Nekrasov partition function}
The non-perturbative correction of the partition function for an $\mathcal{N}=2$ supersymmetric gauge theory on the $\Omega$-deformed $\mathbb{R}^{4}$, called the Nekrasov partition function, can be represented as a multiple integral when the gauge group $G$ of the theory is classical\cite{Nekrasov:2003af,Nekrasov:2004vw}.  Here we trace how it was derived and explain some backgrounds why we compute the partition function by another way. {\it One may skip the following explanation if one assumes that the integration cycle that appears in the Nekrasov partition function is the one that the Jeffrey-Kirwan residue operator gives. }

One can lift a four dimensional $\mathcal{N}=2$ supersymmetric gauge theory on the spacetime $\mathbb{R}^{4}$ to a five dimensional $\mathcal{N}=1$ supersymmetric gauge theory on $\mathbb{R}^{4}\times S^{1}_{\beta}$, where the $\beta$ means the circumference of the circle $S^{1}$. The $\Omega$-background is translated as the boundary condition $(z_{1},z_{2},y+\beta)\sim(e^{i\varepsilon_{1}}z_{1},e^{i\varepsilon_{2}}z_{2},y)\ ((z_{1},z_{2})\in\mathbb{C}^{2}\simeq\mathbb{R}^{4}, y\in S^{1})$.

In the weak 5d gauge coupling limit, or $\beta\to 0$, the $k$-instanton sector of the theory reduces to a supersymmetric quantum mechanics whose target space is the finite dimensional $k$-instanton moduli space $\mathcal{M}_{k}$\cite{Nekrasov:1996cz}. Then the $k$-instanton partition function $Z^{\mathrm{(5d)}}_{k}$ for this 5d theory is given by the index
\begin{eqnarray}\label{5dindex}
Z^{\mathrm{(5d)}}_{k}=\mathrm{Tr}_{\mathcal{H}_{k}}(-1)^{F}e^{-\beta H}e^{i\beta(\varepsilon_{1}J_{1}+\varepsilon_{2}J_{2})}e^{i\beta a_{j}\Pi^{j}}e^{i\beta u\cdot K},
\end{eqnarray}
where $\mathcal{H}_{k}$ is the Hilbert space of the quantum mechanics, $J_{1,2}$ are the generators of the Cartan algebra of $SO(4)$, $\Pi^{j}$ is the charge coupling to the vacuum expectation value $a_{j}$ of the adjoint scalar in the vector multiplet, and $K$ is the charge of the other global symmetry in the quantum mechanics, which couples with the parameter $u$. Taking the $\beta\to0$ limit, $Z^{\mathrm{(5d)}}_{k}$ reduces to the Nekrasov partition function $Z_{k}$ for the 4d theory with instanton number $k$.

This index can be written by an integral over the target space as a result of the index theory. Moreover, if the gauge group is classical, one can parametrize the instanton moduli spaces by the ADHM matrices\cite{Atiyah:1978ri}, constrained by the ADHM equation and by the action of the inner symmetry group $\hat{G}$ of instantons. {\it If the target space were a smooth manifold}, one could apply the localization method to this parametrized manifold. Then, one can find all the fixed points with respect to the global symmetry that appears in (\ref{5dindex}) and evaluate their weights around the fixed points. These data allow one to perform the localization explicitly and then one reaches the integral expressions of the instanton partition functions whose integration variables come from the maximal torus of $\hat{G}$. In the 4d limit $\beta\to0$, each integration variable surrounds the upper half-plane. The positive imaginary parts with the coefficients like $\varepsilon_{1,2}$ are imposed for the convergence of the index (\ref{5dindex}). 

In particular, for a pure super-Yang-Mills theory(SYM) with gauge group $G$, the Nekrasov partition function $Z^{G}_{k}$ is written explicitly as follows\cite{Nekrasov:2003af,Nekrasov:2004vw}.
\begin{example}[{\bf the Nekrasov partition functions for pure SYM theories}]
\quad
\vspace{10pt}

\noindent
(i) For $G=U(N)$,
\begin{eqnarray}\label{unitaryN}
Z_{k}^{U(N)}=\frac{1}{k!}\left(\frac{\varepsilon}{2\pi i\varepsilon_{1}\varepsilon_{2}}\right)^{k}\int\frac{d\phi_{1}\cdots d\phi_{k}}{\prod_{j=1}^{k}P(\phi_{j}-\varepsilon_{+})P(\phi_{j}+\varepsilon_{+})}\frac{\Delta(0)\Delta(\varepsilon)}{\Delta(\varepsilon_{1})\Delta(\varepsilon_{2})}
\end{eqnarray}
where 
$\varepsilon=2\varepsilon_{+}=\varepsilon_{1}+\varepsilon_{2}$, 
$P(x)=\prod_{l=1}^{N}(x-a_{l})$ and $\Delta(x)=\prod_{i<j}((\phi_{i}-\phi_{j})^{2}-x^{2})$.
\vspace{10pt}

\noindent
(ii) For $G=SO(N)$ with $N=2n+\chi\ (n\in\mathbb{N}, \chi=0,1)$,
\begin{eqnarray}\label{orthogonalN}
Z_{k}^{SO(N=2n+\chi)}=\frac{(-1)^{k(N+1)}}{2^{k}k!}\left(\frac{\varepsilon}{2\pi i\varepsilon_{1}\varepsilon_{2}}\right)^{k}\int\frac{d\phi_{1}\cdots d\phi_{k}\prod_{j=1}^{k}\phi_{j}^{2}(\phi_{j}^{2}-\varepsilon_{+}^{2})}{\prod_{j=1}^{k}P(\phi_{j}-\varepsilon_{+})P(\phi_{j}+\varepsilon_{+})}\frac{\Delta(0)\Delta(\varepsilon)}{\Delta(\varepsilon_{1})\Delta(\varepsilon_{2})},
\end{eqnarray}
where
$P(x)=x^{\chi}\prod_{l=1}^{n}(x^{2}-a^{2}_{l})$ and $\Delta(x)=\prod_{i<j}((\phi_{i}-\phi_{j})^{2}-x^{2})((\phi_{i}+\phi_{j})^{2}-x^{2})$.
\vspace{10pt}

\noindent
(iii) For $G=Sp(N)$ with $k=2n+\chi\ (n\in\mathbb{N}, \chi=0,1)$,
\begin{align}\label{symplecticN}
Z_{k}^{Sp(N)}=\frac{1}{2}\frac{(-1)^{n}}{2^{n-1}n!}&\left(\frac{\varepsilon}{2\pi i\varepsilon_{1}\varepsilon_{2}}\right)^{n}\left(-\frac{1}{2\varepsilon_{1}\varepsilon_{2}P(\varepsilon_{+})}\right)^{\chi}\nonumber\\
&\times\int\frac{d\phi_{1}\cdots d\phi_{n}}{\prod_{j=1}^{n}P(\phi_{j}-\varepsilon_{+})P(\phi_{j}+\varepsilon_{+})(4\phi_{j}^{2}-\varepsilon_{1}^{2})(4\phi_{j}^{2}-\varepsilon_{2}^{2})}\frac{\Delta(0)\Delta(\varepsilon)}{\Delta(\varepsilon_{1})\Delta(\varepsilon_{2})}
\end{align}
where
$P(x)=\prod_{l=1}^{N}(x^{2}-a_{l}^{2})$ and $\Delta(x)=\prod_{i<j}((\phi_{i}-\phi_{j})^{2}-x^{2})((\phi_{i}+\phi_{j})^{2}-x^{2})\prod_{j=1}^{n}(\phi_{j}^{2}-x^{2})^{\chi}$.
\hspace{\fill}$\blacksquare$
\end{example}

However, this expression is inconvenient to seek a combinational formula for the Nekrasov partition function, which is already known for the case $G=U(N)$\cite{Nekrasov:2003af,Flume:2002az,Fucito:2004gi}. To get a combinational formula, one will need to specify which poles are in the integration cycle. Each fixed point appearing in the localization of the moduli space corresponds to a class of such poles. For $G=SO(N),Sp(N)$, however, there are poles whose components are real and it is unclear whether such poles are in or out the cycles\footnote{For example, there is a pole at $(\phi_{*1},\phi_{*2})=(\varepsilon_{1},0)$ when one considers the 4-instanton correction for the case $G=Sp(N)$.}.  It is desirable to get such a combinational expression not only because we can compute the partition function easily, but also because it was found to be useful to understand the action of an infinite dimensional symmetry to the instanton moduli space, which results in the proof of the AGT relation\cite{Alday:2009aq} for some theories with unitary gauge group\cite{Fateev:2009aw,Alba:2010qc,2010arXiv1003.1049Y,schiffmann2013cherednik,maulik2012quantum,Kanno:2013aha,Morozov2014,Matsuo:2014rba}.

This obstruction to get the sum of residues may stem from the fact that the instanton moduli spaces are generally singular. There are UV singularities where two small instantons approach each other and IR singularities where an instanton goes far away. The latter singularities are resolved by the $\Omega$-background and then we have to resolve the UV singularities. In fact, the known combinational formula for the case with unitary gauge group can be achieved by resolving these singularities\cite{nakajimalectures,Nakajima:2003pg}.

One way to resolve the UV singularities is to lift the quantum mechanics to the gauged linear sigma model (GLSM) with gauge group $\hat{G}$, the internal symmetry of instantons, by adding a vector multiplet to the original matter multiplets. In the limit when the coupling constant $g_{QM}$ of the GLSM becomes strong, $g_{QM}\to\infty$, the original quantum mechanics is restored. What is important is that we can compute an index 
\begin{eqnarray}\label{GLSMindex}
\tilde{Z}_{k}=\mathrm{Tr}_{\tilde{\mathcal{H}}_{k}}(-1)^{F}e^{-\beta H}e^{i\beta(\varepsilon_{1}J_{1}+\varepsilon_{2}J_{2})}e^{i\beta a_{j}\Pi^{j}}e^{i\beta u\cdot K}
\end{eqnarray}
which is parallel to the $Z^{(\mathrm{5d})}_{k}$, where $\tilde{\mathcal{H}}_{k}$ is the Hilbert space of this GLSM. We regard the theory as a circular reduction of a 2d GLSM on a torus $T^{2}$ and connect the index (\ref{GLSMindex}) to the elliptic genus\cite{Benini:2013xpa}, which is a 2d index given by an integration of a meromorphic top form whose integration cycle is explicitly determined. Note that the integration variables $\{\phi_{j}\}_{j=1}^{\mathrm{rank}\hat{G}}$ correspond to the additional vector multiplet by $\phi_{j}=\varphi_{j}+iA_{j}$, where $\varphi_{j}$ and $A_{j}$ are the scalar and the gauge field in the multiplet, which represent a zero mode in $g_{QM}\to 0$. 
\begin{remark}[\!\cite{Hwang:2014uwa}]
One should be careful about a certain continuum spectrum in $\tilde{\mathcal{H}}_{k}$ coming from the additional vector multiplet, which is formed on the $\mathcal{H}_{k}$. Such a continuum can contribute to $\tilde{Z}_{k}$ and prevents one from using the index theorem, but the former contribution should be neglected to obtain the true instanton correction $Z^{(\mathrm{5d})}_{k}$. This continuum stems from that the range of the zero modes $\varphi_{j}+iA_{j}$ in 1d becomes a non-compact cylinder. Its non-compactness may give a continuum spectrum on the original space $\mathcal{H}_{k}$. Thus we should get rid of the contribution coming from the region $|\phi_{j}|\to\infty$ and then connect (\ref{GLSMindex}) to the elliptic genus, which becomes an insertion of a cutoff to the integration. Then if the integrand vanishes well at infinity, or the 5d theory has sufficiently few matters, there is no extra contribution caused by the non-compactness.
\end{remark}
In fact, the authors of \cite{Hwang:2014uwa} performed the above program. For the 4d Nekrasov partition functions, this work plays a role of rewriting the naive integrals into the ones whose integrands are the same ones given by a naive application of the localization method and whose integration cycles are expressed by clearer forms. More precisely, it was found that performing the integrals with respect to these cycles is equivalent to acting the Jeffrey-Kirwan residue operator\cite{1993alg.geom..7001J} to their integrands. 

As a result, we now have a UV completed version of the Nekrasov partition function by use of the GLSMs, and
then we can rewrite them into clearer forms
\begin{eqnarray}
\begin{matrix}
Z=\int d\phi_{1}\cdots d\phi_{n} \mathcal{Z} &\Rightarrow& \tilde{Z}=\sum_{\phi_{*}}\mathrm{JK}\text{-}\mathrm{Res}_{\eta,\phi_{*}}\mathcal{Z}
\end{matrix}
\end{eqnarray}
where the sum is taken over all the poles $\phi_{*}$ of the integrand $\mathcal{Z}$ and the operator $\mathrm{JK}\text{-}\mathrm{Res}_{\eta,\phi_{*}}$ is the Jeffrey-Kirwan residue with respect to a vector $\eta\in\mathbb{R}^{n}$ around the pole $\phi_{*}$. In the following, we explain a detail of the operator. 

\subsection{Jeffrey-Kirwan residue}
The Jeffrey-Kirwan(JK) residue is an operator which maps a rational function to a value, the ``residue" of the function at a point.  Here, following \cite{1999math......3178B,2004InMat.158..453S}, we introduce some minimum essentials of the operator in order to apply it to Nekrasov partition functions\footnote{
In this paper, however, we do not explain why the JK-residue gives the correct integration cycle. See \cite{Benini:2013xpa,Hori:2014tda,Hwang:2014uwa}.}.
\subsubsection{Notations and the definition of the Jeffrey-Kirwan residue}
Here we introduce some notations and then define the JK-residue of rational functions of $n$ variables.

Let ${\bf Q}_{*}\subset\mathbb{R}^{n}$ be a finite subset of nonzero elements and fix a point $\phi_{*}=(\phi_{*j})_{1\le j\le n}\in\mathbb{C}^{n}$. We denote by $\hat{F}_{\phi_{*}}$ the set of formal power series of $n$ variables $\phi_{j}-\phi_{*j}\ (1\le j \le n)$ and by
\begin{eqnarray}
\hat{R}_{{\bf Q}_{*},\phi_{*}}:=\mathrm{span}_{\mathbb{C}}\left\{\frac{f}{\prod_{Q\in{\bf Q}_{*}}\left(Q\cdot (\phi-\phi_{*})\right)^{m_{Q}}} \middle|\ f\in\hat{F}_{\phi_{*}}, m_{Q}\in\mathbb{N}\ (Q\in{\bf Q}_{*}) \right\}
\end{eqnarray}
the ring generated over $\hat{F}_{\phi_{*}}$ by inverting the linear functions $Q\cdot (\phi-\phi_{*})\ (Q\in{\bf Q}_{*})$. This is graded by the degree at $\phi_{*}$ and then we denote its degree $-n$ part by $\hat{R}_{{\bf Q}_{*},\phi_{*}}[-n]$.

Let $\kappa$ be a subset of ${\bf Q}_{*}$. The subset $\kappa$ is called a basis of ${\bf Q}_{*}$ if $\kappa$ forms a basis of $\mathbb{R}^{n}$. For a basis $\sigma$ of ${\bf Q}_{*}$, set
\begin{eqnarray}
f_{\sigma}:=\frac{1}{\prod_{Q\in\sigma}Q\cdot (\phi-\phi_{*})}.
\end{eqnarray}
We call such a fraction $f_{\sigma}$ is basic. Then we denote by $S_{{\bf Q}_{*},\phi_{*}}$ the linear span of the $f_{\sigma}$ where $\sigma$ ranges over all the bases of ${\bf Q}_{*}$. Clearly, $S_{{\bf Q}_{*},\phi_{*}}\subset\hat{R}_{{\bf Q}_{*},\phi_{*}}[-n]$.

Also, we denote by $NS_{{\bf Q}_{*},\phi_{*}}$ the linear span of degree $-n$ rational functions in a form
\begin{eqnarray}
\frac{\psi}{\prod_{Q\in\kappa}\left(Q\cdot (\phi-\phi_{*})\right)^{n_{Q}}}
\end{eqnarray}
where $\psi$ is a polynomial function, $\kappa$ does not contain any bases of ${\bf Q}_{*}$ and the $n_{Q}$ are non-negative integers.

It is proved that one can decompose $\hat{R}_{{\bf Q}_{*},\phi_{*}}[-n]$ into these two spaces:
\begin{proposition}[\!\cite{1999math......3178B}]
We have a direct sum decomposition
\begin{eqnarray}\label{decomp}
\hat{R}_{{\bf Q}_{*},\phi_{*}}[-n]=S_{{\bf Q}_{*},\phi_{*}}\oplus NS_{{\bf Q}_{*},\phi_{*}}
\end{eqnarray}
\hspace{\fill}$\blacksquare$
\end{proposition}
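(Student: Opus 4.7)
The plan is to establish the decomposition (\ref{decomp}) in two stages: first showing $\hat{R}_{{\bf Q}_{*},\phi_{*}}[-n]=S_{{\bf Q}_{*},\phi_{*}}+NS_{{\bf Q}_{*},\phi_{*}}$, and then showing the sum is direct. Throughout I will abbreviate $L_{Q}:=Q\cdot(\phi-\phi_{*})$. An arbitrary element of $\hat{R}_{{\bf Q}_{*},\phi_{*}}[-n]$ is a finite sum of fractions $\psi/\prod_{Q\in\tau}L_{Q}^{m_{Q}}$ with $\tau\subset{\bf Q}_{*}$; by Taylor-expanding a power-series numerator past the relevant order and discarding the tail (whose contribution to the degree $-n$ piece vanishes) one may assume $\psi$ is a polynomial. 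The argument then reduces to each such summand.

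For the spanning claim I would induct on the lexicographic pair $\bigl(|\tau|,\sum_{Q\in\tau\setminus\{Q_{0}\}}m_{Q}\bigr)$ for a fixed choice of distinguished $Q_{0}\in\tau$. If $\tau$ contains no basis of $\mathbb{R}^{n}$, the summand already lies in $NS_{{\bf Q}_{*},\phi_{*}}$ and there is nothing to do. Otherwise, when $|\tau|>n$, the vectors $\{Q\}_{Q\in\tau}$ are linearly dependent, giving a nontrivial relation $\sum_{Q\in\tau}c_{Q}L_{Q}=0$; isolating $L_{Q_{0}}$ and substituting yields the identity
\begin{equation*}
\frac{\psi}{\prod_{Q\in\tau}L_{Q}^{m_{Q}}}=-\frac{1}{c_{Q_{0}}}\sum_{Q\ne Q_{0}}c_{Q}\frac{\psi}{L_{Q_{0}}^{m_{Q_{0}}+1}L_{Q}^{m_{Q}-1}\prod_{Q'\ne Q_{0},Q}L_{Q'}^{m_{Q'}}},
\end{equation*}
whose summands have strictly smaller complexity in the chosen order (either $m_{Q}$ drops to $0$ and so $|\tau|$ shrinks, or $m_{Q}$ drops by one and the second coordinate shrinks). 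Iterating brings us to $|\tau|\le n$; if $\tau$ is then a basis $\sigma$, I expand $\psi$ in the local linear coordinates $\{L_{Q}\}_{Q\in\sigma}$ around $\phi_{*}$. Monomials $\prod L_{Q}^{\alpha_{Q}}$ for which some $\alpha_{Q}\ge m_{Q}$ cancel an $L_{Q}$ from the denominator and land in $NS_{{\bf Q}_{*},\phi_{*}}$, while the unique monomial that keeps the full support, namely $\prod L_{Q}^{m_{Q}-1}$, produces a scalar multiple of $f_{\sigma}$. This closes the spanning argument.

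For the directness claim I would construct, for every basis $\sigma$ of ${\bf Q}_{*}$ and every generic covector $\eta$ in the interior of the cone spanned by $\sigma$, the associated iterated-residue linear functional $\mathrm{Res}_{\sigma,\eta}\colon\hat{R}_{{\bf Q}_{*},\phi_{*}}[-n]\to\mathbb{C}$, obtained by taking successive one-variable residues along a flag of hyperplanes compatible with $\eta$. Such a functional vanishes identically on $NS_{{\bf Q}_{*},\phi_{*}}$ because a non-spanning support misses a direction transverse to the flag, while a sufficient supply of these functionals, as $\sigma$ and $\eta$ vary, separates $S_{{\bf Q}_{*},\phi_{*}}$ from zero. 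Hence any element of $S_{{\bf Q}_{*},\phi_{*}}\cap NS_{{\bf Q}_{*},\phi_{*}}$ is annihilated by every $\mathrm{Res}_{\sigma,\eta}$ and, being in $S_{{\bf Q}_{*},\phi_{*}}$, must vanish.

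The main technical obstacle is the inductive step in the spanning argument: the identity above raises the multiplicity of the distinguished factor $L_{Q_{0}}$, so termination is not visible from $\sum_{Q}m_{Q}$ alone and one must instead track either $|\tau|$ or the multiplicities away from $Q_{0}$. Pinning down the correct well-founded order and verifying that every term produced by an elimination step is strictly smaller in that order is where most of the genuine work lies; once the ordering is set up, and once one has a sufficient supply of iterated residues to separate $S_{{\bf Q}_{*},\phi_{*}}$, the remainder is essentially bookkeeping.
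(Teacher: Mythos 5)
The paper does not actually prove this proposition --- it is imported verbatim from Brion--Vergne \cite{1999math......3178B} --- so there is no internal proof to compare against; your proposal has to stand on its own, and it follows the same route as the cited reference. The spanning half of your argument is essentially complete and correct: the elimination identity is the standard one, and your lexicographic order $\bigl(|\tau|,\sum_{Q\in\tau\setminus\{Q_{0}\}}m_{Q}\bigr)$ does terminate, since each term either shrinks the support or lowers the second coordinate by one while the first is unchanged. Two small repairs are needed. First, you cannot fix $Q_{0}$ arbitrarily in advance: the dependency relation $\sum_{Q}c_{Q}L_{Q}=0$ need not have $c_{Q_{0}}\neq 0$, so you must choose $Q_{0}$ inside (the support of) a circuit of $\tau$, and re-choose it after the support shrinks; this is harmless for the lexicographic order because the first coordinate has already dropped. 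Second, the case $|\tau|=n$ with $\tau$ linearly dependent should be noted explicitly as landing in $NS_{{\bf Q}_{*},\phi_{*}}$ (no subset of size $n$ other than $\tau$ itself exists, and $\tau$ is not a basis), rather than being folded into the $|\tau|>n$ step.

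The genuine gap is in the directness half. That the iterated residues annihilate $NS_{{\bf Q}_{*},\phi_{*}}$ is believable (a degree $-n$ fraction whose denominator forms span a proper subspace is polynomial in a transverse direction, and a residue in that direction vanishes --- though even here you need the flag to be generic enough that the transverse direction is actually reached in the iteration). But the assertion that ``a sufficient supply of these functionals, as $\sigma$ and $\eta$ vary, separates $S_{{\bf Q}_{*},\phi_{*}}$ from zero'' is precisely the content of the direct-sum statement, not a step toward it. The $f_{\sigma}$ satisfy nontrivial linear relations (the paper itself remarks on this), so $\dim S_{{\bf Q}_{*},\phi_{*}}$ is strictly smaller than the number of bases, and one must exhibit a subcollection of bases (e.g.\ the unbroken ones) for which the pairing matrix $\mathrm{Res}_{F}(f_{\sigma})$ against a matching collection of flags is triangular with nonzero diagonal. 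Without that triangularity argument --- which is where Brion--Vergne spend their effort --- you have shown only $\hat{R}_{{\bf Q}_{*},\phi_{*}}[-n]=S+NS$, not that the sum is direct.
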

From now on, we restrict ourselves to the case when the ${\bf Q}_{*}$ satisfies the following projectivity condition.
\begin{definition}
We call ${\bf Q}_{*}$ is projective if there is a vector $\delta\in\mathbb{R}^{n}$ which has positive inner products with any elements of ${\bf Q}_{*}$.
\hspace{\fill}$\blacksquare$
\end{definition}

We call a vector $\eta\in\mathbb{R}^{n}$ generic if $\eta\notin\mathrm{span}_{\mathbb{R}}\{Q_{1},\cdots,Q_{n-1}\}$ for any $n-1$ elements $Q_{j}$\quad$\ (1\le j\le n-1)$ of ${\bf Q}_{*}$. 

Fix a generic vector $\eta$ and then we define the Jeffrey-Kirwan residue of a basic fraction $f_{\sigma}$ for the projective ${\bf Q}_{*}$ by
\begin{eqnarray}\label{JK}
\mathrm{JK}\text{-}\mathrm{Res}_{\eta,\phi_{*}}\left(f_{\sigma}d\phi_{1}\wedge\cdots\wedge d\phi_{n}\right):=\begin{cases}|\mathrm{det}(Q_{i}\cdot e_{j})|^{-1}&(\eta\in\mathrm{Cone}(Q_{1},\cdots,Q_{n}))\\ 0&(\text{otherwise}),
\end{cases}
\end{eqnarray}
where $\{ e_{j}\}_{1\le j\le n}$ is the canonical basis of $\mathbb{R}^{n}$, $\sigma=\{Q_{1},\cdots, Q_{n}\}$, and
\begin{eqnarray}
\mathrm{Cone}(\sigma)=\mathrm{Cone}(Q_{1},\cdots,Q_{n}):=\left\{ \sum_{j=1}^{n}a_{j}Q_{j}\in\mathbb{R}^{n}\middle|\ a_{1},\cdots a_{n}>0\right\}.
\end{eqnarray}
\begin{proposition}[\!\cite{1999math......3178B}]
Given that ${\bf Q}_{*}$ is projective, the definition in (\ref{JK}) gives a well-defined linear functional $\mathrm{JK}\text{-}\mathrm{Res}_{\eta,\phi_{*}}$ on $S_{{\bf Q}_{*},\phi_{*}}$\footnote{We define the JK-residue of a rational function by that of the meromorphic top form obtained by multiplying a fixed top form $d\phi_{1}\wedge\cdots\wedge d\phi_{n}$ to the function.}.
\hspace{\fill}$\blacksquare$
\end{proposition}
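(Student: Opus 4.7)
The challenge in this proposition is that the basic fractions $\{f_{\sigma}\}$, while they span $S_{{\bf Q}_{*},\phi_{*}}$, satisfy nontrivial linear relations, so one must check that the prescription (\ref{JK}) respects every such relation. My plan is to reduce well-definedness to a single identity attached to each \emph{circuit} of ${\bf Q}_{*}$ (a minimal linearly dependent subset), and then to verify that identity by a cone/sign analysis that uses projectivity in an essential way.

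First I would classify the relations. Any $n+1$ elements $Q_{0},\ldots,Q_{n}\in{\bf Q}_{*}$ admit a nonzero linear dependence $\sum_{i=0}^{n}a_{i}Q_{i}=0$; dotting with $\phi-\phi_{*}$ and dividing by $\prod_{i}Q_{i}\cdot(\phi-\phi_{*})$ yields the Pl\"ucker-type identity
\begin{equation}
\sum_{i=0}^{n}a_{i}\,f_{\sigma_{i}}=0,\qquad \sigma_{i}:=\{Q_{0},\ldots,\widehat{Q_{i}},\ldots,Q_{n}\},
\end{equation}
with $f_{\sigma_{i}}$ interpreted as $0$ when $\sigma_{i}$ is not a basis. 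A standard argument (\!\cite{1999math......3178B}) shows that all linear relations among basic fractions are generated by such circuit identities, so it suffices to show that
\begin{equation}
\sum_{i=0}^{n}a_{i}\,\mathrm{JK}\text{-}\mathrm{Res}_{\eta,\phi_{*}}\!\left(f_{\sigma_{i}}\,d\phi_{1}\wedge\cdots\wedge d\phi_{n}\right)=0
\end{equation}
for every circuit and every generic $\eta$.

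A direct determinant manipulation, using the relation $\sum a_{i}Q_{i}=0$ to substitute one vector out of each basis, gives the compatibility $|a_{i}|\cdot|\det(Q_{\ell}\cdot e_{k})_{\ell\in\sigma_{j}}|=|a_{j}|\cdot|\det(Q_{\ell}\cdot e_{k})_{\ell\in\sigma_{i}}|$ whenever both $\sigma_{i}$ and $\sigma_{j}$ are bases, so the nonzero summands share a common denominator and the identity collapses to a purely combinatorial statement: the signed count of those $\sigma_{i}$ whose cone contains $\eta$, weighted by $\mathrm{sgn}(a_{i})$, must vanish. This is where projectivity enters. Writing the circuit as $\sum_{i\in I_{+}}a_{i}Q_{i}=\sum_{j\in I_{-}}(-a_{j})Q_{j}=:v$, the existence of $\delta\in\mathbb{R}^{n}$ with $\delta\cdot Q_{i}>0$ for all $i$ forces both $I_{+}$ and $I_{-}$ to be nonempty, and the two resulting cones $\mathrm{Cone}(\sigma_{i})$ (for $i\in I_{+}$) and $\mathrm{Cone}(\sigma_{j})$ (for $j\in I_{-}$) sit on opposite sides of their common hyperplane. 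A generic $\eta$ then lies in exactly the bases $\sigma_{i}$ from one of $I_{\pm}$, and the prescribed sign conventions produce the required cancellation.

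The main obstacle I anticipate is bookkeeping the signs in the determinant reduction while correctly handling the degenerate $\sigma_{i}$ that fail to be bases (these contribute $0$ but must be accounted for when the coefficient $a_{i}$ is nonzero); the geometric cancellation itself is forced once the projectivity-based sign pattern of $\{a_{i}\}$ is in hand. Once the circuit identity is established, linearity and the extension of $\mathrm{JK}\text{-}\mathrm{Res}_{\eta,\phi_{*}}$ to all of $S_{{\bf Q}_{*},\phi_{*}}$ follow immediately, proving the proposition.
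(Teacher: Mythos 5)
First, note that the paper itself offers no proof of this proposition: it is imported verbatim from Brion--Vergne \cite{1999math......3178B} and used as a black box, the only in-paper commentary being the remark that the statement amounts to independence of the chosen decomposition into basic fractions. So your argument must stand on its own. Its skeleton is indeed the standard one: reduce to the circuit identities $\sum_{i}a_{i}f_{\sigma_{i}}=0$; observe that $\sigma_{i}$ is a basis precisely when $a_{i}\neq 0$ (if $a_{i}\neq 0$ then $Q_{i}\in\mathrm{span}(\sigma_{i})$, so $\sigma_{i}$ spans $\mathbb{R}^{n}$; if $a_{i}=0$ the dependence lives inside $\sigma_{i}$), which already disposes of the degenerate terms you were worried about; and the substitution $Q_{i}=-a_{i}^{-1}\sum_{j\neq i}a_{j}Q_{j}$ does give $|a_{i}|\,|\det\sigma_{j}|=|a_{j}|\,|\det\sigma_{i}|$, reducing everything to $\sum_{i}\mathrm{sgn}(a_{i})\,{\bf 1}_{\mathrm{Cone}(\sigma_{i})}(\eta)=0$. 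Projectivity is also used correctly to force $I_{+}$ and $I_{-}$ to be both nonempty.

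The gap is in the final geometric step, which as written is false and would in fact prove the opposite of what you need. The families $\{\mathrm{Cone}(\sigma_{i})\}_{i\in I_{+}}$ and $\{\mathrm{Cone}(\sigma_{j})\}_{j\in I_{-}}$ do not sit on opposite sides of a hyperplane: for $n=1$ with $Q_{0}=1$, $Q_{1}=2$ the two cones coincide, and for $Q_{0}=e_{1}$, $Q_{1}=e_{2}$, $Q_{2}=e_{1}+e_{2}$ (a configuration occurring inside $\widetilde{\bf Q}^{BCD}$) the two $I_{+}$-cones are contained in the single $I_{-}$-cone. Moreover, if a generic $\eta$ lay only in cones coming from one of $I_{\pm}$, as you assert, the signed sum would equal plus or minus the number of such cones containing $\eta$, i.e.\ it would be nonzero exactly when $\eta\in\mathrm{Cone}(Q_{0},\dots,Q_{n})$ --- no cancellation. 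The fact you actually need is that each of the two families is a simplicial subdivision of the \emph{same} cone $\mathrm{Cone}(Q_{0},\dots,Q_{n})$ (this is where the half-space condition enters a second time, to keep both subdivisions pointed and covering), so that a generic $\eta$ lies in exactly one cone of each family or in none of either, and the sum collapses to $1-1=0$ or $0-0=0$. Separately, the claim that circuit identities generate all linear relations among the $f_{\sigma}$ is carrying real weight and is only asserted; it is true, but it needs either a proof or a precise citation rather than an appeal to ``a standard argument.''
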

\begin{remark}
Note that there are linear relations among the basic fractions and then the above proposition claims that the JK-residue of each element in $S_{{\bf Q}_{*},\phi_{*}}$ does not depend on how we decompose it into some basic fractions.
\end{remark}
\begin{definition}[{\bf The Jeffrey-Kirwan residue}]
For a projective ${\bf Q}_{*}$ and a generic vector $\eta$, the Jeffrey-Kirwan residue
\begin{eqnarray}
\mathrm{JK}\text{-}\mathrm{Res}_{\eta,\phi_{*}}:\hat{R}_{{\bf Q}_{*},\phi_{*}}\to \mathbb{C}
\end{eqnarray}
is defined to be the composite of
\begin{eqnarray}\label{JKJK}
\hat{R}_{{\bf Q}_{*},\phi_{*}}\xrightarrow{\text{Project}}\hat{R}_{{\bf Q}_{*},\phi_{*}}[-n]\xrightarrow{\text{(\ref{decomp})}} S_{{\bf Q}_{*},\phi_{*}}\xrightarrow{\text{(\ref{JK})}}\mathbb{C}
\end{eqnarray}
\hspace{\fill}$\blacksquare$
\end{definition}
\begin{remark}
The union of hyperplanes of non-generic points divides the whole space $\mathbb{R}^{n}$ into connected components of generic points. Each connected component is often called a chamber. Note that two generic points in the same chamber give the same JK-residue. In other words, a chamber defines a JK-residue operator.
\end{remark}
\subsubsection{A setup to operate the JK-residue to the Nekrasov partition function}
Now we want to apply the JK-residue to the Nekrasov partition function $Z$. We denote its integrand by $\mathcal{Z}$, which is a rational function of $n$ variables $\phi_{1},\cdots,\phi_{n}$, and its denominator is factorized into a product of degree-one polynomials. 

Each factor in the denominator describes an affine hyperplane $H$ in $\mathbb{C}^{n}$. We can choose the coefficient vector $Q$, which defines $H=\{\phi\in\mathbb{C}^{n}|Q\cdot \phi+\mathrm{const}=0\}$, is in $\mathbb{R}^{n}$. We fix the vector $Q$ for each hyperplane $H$ and define $\widetilde{\bf Q}$ to be the set of all such vectors. 

\begin{definition}[\bf $\widetilde{\bf Q}$ for the Nekrasov partition functions]
We fix the vector for each factor in the denominators of the pure SYM's Nekrasov partition functions as follows:
\begin{eqnarray}\label{Qass}
\begin{array}{ccl}
\pm\phi_{j}\pm a_{l}-\varepsilon_{+}, \pm\phi_{j}-\varepsilon_{1,2,+}, \pm\phi_{j}-\frac{\varepsilon_{1,2}}{2}&\Rightarrow&Q=\pm e_{j}\\
\pm\phi_{i}\pm\phi_{j}-\varepsilon_{1,2}&\Rightarrow&Q=\pm e_{i}\pm e_{j}
\end{array}
\end{eqnarray}
Then we have two sets of vectors
\begin{eqnarray}
\widetilde{\bf Q}^{A}=\{\pm e_{j}|j=1,\cdots,n\}\sqcup\{\pm(e_{i}-e_{j})|1\le i<j\le n\}
\end{eqnarray}
for $G=U(N)$ and 
\begin{eqnarray}
\widetilde{\bf Q}^{BCD}=\{\pm e_{j}|j=1,\cdots,n\}\sqcup\{\pm(e_{i}\pm e_{j})|1\le i<j\le n\}
\end{eqnarray}
for $G=SO(N),Sp(N)$, where $\{ e_{j}| 1\le j\le n\}$ is the canonical basis of $\mathbb{R}^{n}$.\hspace{\fill}$\blacksquare$
\end{definition}
\begin{remark}
The above choices of signs reflect that the physically motivated contour of each $\phi_{j}$ should separate $+\varepsilon_{1,2}$ and $-\varepsilon_{1,2}$.
\end{remark}

Now we take a point $\phi_{*}\in\mathbb{C}^{n}$. Then we denote by ${\bf Q}_{*}={\bf Q}_{*}(\phi_{*})$ the subset of $\widetilde{\bf Q}$ that consists of all the coefficient vectors which come from the hyperplanes that contain $\phi_{*}$. The Taylor expansion of the factors except those in the denominator that vanish at $\phi_{*}$ gives an element of $\hat{R}_{{\bf Q}_{*},\phi_{*}}$.

We can see from (\ref{JKJK}) that the JK-residue at $\phi_{*}$ vanishes if ${\bf Q}_{*}$ does not contain any basis. Then there are only finite points where the JK-residue of $\mathcal{Z}$ do not vanish.

Also, for Nekrasov partition functions, the projectivity condition holds for each $\phi_{*}$. The $n$ components $\phi_{j*}$ of $\phi_{*}$ are partially ordered by its coefficients on $\varepsilon_{1}$ and $\varepsilon_{2}$. Thanks to the minus on every $\varepsilon_{1,2,+}$ in (\ref{Qass}), this order gives a vector $\delta$ such that $\delta\cdot Q>0$ for any $Q\in{\bf Q}_{*}$.

Then, provided a generic vector $\eta$ with respect to the $\widetilde{\bf Q}$, we get a finite sum of the JK-residues
\begin{eqnarray}
Z_{\eta}\equiv\sum_{\phi_{*}}\mathrm{JK\text{-}Res}_{\eta,\phi_{*}}\mathcal{Z}.
\end{eqnarray}
by regarding the integrand $\mathcal{Z}$ as an element of $\hat{R}_{{\bf Q}_{*},\phi_{*}}$ for each point $\phi_{*}$.
\begin{remark}
We should choose the $\eta$ physically. In this regard, the authors of \cite{Benini:2013xpa,Hori:2014tda,Hwang:2014uwa} identified the vector as follows,
\begin{eqnarray}
\begin{array}{lcl}
G=U(N)&\Rightarrow& \eta=(1,1,\cdots,1)\\
G=SO(N),Sp(N)&\Rightarrow& \eta\ \text{is arbitrary}.
\end{array} 
\end{eqnarray} 
In the following, for easier calculations, we will choose $\eta=(1,\alpha,\cdots,\alpha^{n-1})\in\mathbb{R}^{n}$ with a large $\alpha$ when we tackle with the $BCD$-type Nekrasov partition functions.
\end{remark}
\subsubsection{The sum of Jeffrey-Kirwan residues over a Weyl orbit}
Before going to explicit calculations, it may be useful to consider a sum of JK-residues over a Weyl orbit $[\phi_{*}]$ of a pole $\phi_{*}$. 

Let $\hat{G}$ be a simple Lie group with rank $n$ and $\hat{\mathfrak{g}}=\mathrm{Lie}(\hat{G})$. Set $(\phi_{1},\cdots,\phi_{n})$ be a linear coordinate of the Cartan subalgebra of the complexified Lie algebra $\hat{\mathfrak{g}}_{\mathbb{C}}$. We denote the Weyl group of $\hat{G}$ by $W\subset O(n)$. Given a Weyl symmetric rational function $\mathcal{Z}=\mathcal{Z}(\phi_{1},\cdots,\phi_{n})$, let us consider the sum of all the JK-residues
\begin{eqnarray}
Z_{\eta}\equiv\sum_{\phi_{*}}\mathrm{JK\text{-}Res}_{\eta,\phi_{*}}\mathcal{Z}\ .
\end{eqnarray}
Here we assume that the the denominator of $\mathcal{Z}$ is completely factorized into distinct affine hyperplanes and the JK-residues are non-zero only at finite $\phi_{*}$s. Also we assume that the ${\bf Q}_{*}$ for each pole $\phi_{*}$ is contained in $\mathbb{R}^{n}$ and is projective. We take the vector $\eta$ to be at a regular point with respect to any ${\bf Q}_{*}$s.

Thanks to the Weyl symmetry of $\mathcal{Z}$, we can relate JK-residue operators in the same Weyl orbit as follows:
\begin{proposition}
For $g\in W$, we have
$
\mathrm{JK\text{-}Res}_{g\cdot\eta,g\cdot\phi_{*}}\mathcal{Z}=\mathrm{JK\text{-}Res}_{\eta,\phi_{*}}\mathcal{Z}
$.
\hspace{\fill}$\blacksquare$
\end{proposition}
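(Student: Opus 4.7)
The plan is to leverage two features simultaneously: the Weyl-invariance of the rational function $\mathcal{Z}$, and the fact that $W\subset O(n)$ acts by linear isometries, so both the combinatorics of cones and the absolute value of determinants are preserved. Concretely, for any $g\in W$, I will set up a one-to-one correspondence $\sigma\leftrightarrow g\cdot\sigma$ between bases of ${\bf Q}_{*}(\phi_{*})$ and bases of ${\bf Q}_{*}(g\cdot\phi_{*})$, and then check that the basic fractions and their JK-residues match term by term.

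First I would observe that because the denominator of $\mathcal{Z}$ is Weyl-symmetric, the set of hyperplanes through $g\cdot\phi_{*}$ is the $g$-image of the set of hyperplanes through $\phi_{*}$, so ${\bf Q}_{*}(g\cdot\phi_{*})=g\cdot{\bf Q}_{*}(\phi_{*})$. Second, I would perform the change of variables $\tilde{\phi}=g^{-1}\phi$ in the Taylor expansion of $\mathcal{Z}$ around $g\cdot\phi_{*}$. Weyl-invariance gives $\mathcal{Z}(\phi)=\mathcal{Z}(g^{-1}\phi)$, and using $(g^{-1})^{T}=g$, one gets the identity
\begin{eqnarray*}
Q\cdot(g^{-1}\phi-\phi_{*})=(gQ)\cdot(\phi-g\cdot\phi_{*}),
\end{eqnarray*}
which shows $f_{\sigma}(g^{-1}\phi)=f_{g\sigma}(\phi)$ for each basis $\sigma$. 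So if the decomposition (\ref{decomp}) at $\phi_{*}$ reads $\mathcal{Z}=\sum_{\sigma}c_{\sigma}f_{\sigma}+(\text{non-basic})$, the decomposition at $g\cdot\phi_{*}$ is $\mathcal{Z}=\sum_{\sigma}c_{\sigma}f_{g\sigma}+(\text{non-basic})'$, with the same coefficients $c_{\sigma}$.

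Third, I would verify the two ingredients in the formula (\ref{JK}) separately. The cone condition is preserved: $g\cdot\eta\in\mathrm{Cone}(g\sigma)$ iff $\eta\in\mathrm{Cone}(\sigma)$, since $g$ acts linearly and positive scalars are invariant under its action. The determinant is preserved up to sign: if $M_{\sigma}$ is the matrix with rows $Q_{i}$, then $M_{g\sigma}=M_{\sigma}g^{T}$, so $|\det M_{g\sigma}|=|\det g|\cdot|\det M_{\sigma}|=|\det M_{\sigma}|$ because $g\in O(n)$. Putting these together term by term gives $\mathrm{JK\text{-}Res}_{g\eta,g\phi_{*}}(f_{g\sigma}d\phi)=\mathrm{JK\text{-}Res}_{\eta,\phi_{*}}(f_{\sigma}d\phi)$, and summing over $\sigma$ yields the proposition.

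The one step that is not purely formal is the claim that the non-basic remainder stays non-basic under the substitution $\tilde{\phi}\mapsto g^{-1}\phi$; this is the main thing to check carefully. It follows because a subset $\kappa\subset{\bf Q}_{*}(\phi_{*})$ contains a basis of $\mathbb{R}^{n}$ if and only if $g\kappa$ does, so sending each factor $Q\cdot(\tilde\phi-\phi_{*})$ to $(gQ)\cdot(\phi-g\phi_{*})$ maps $NS_{{\bf Q}_{*}(\phi_{*}),\phi_{*}}$ isomorphically onto $NS_{{\bf Q}_{*}(g\phi_{*}),g\phi_{*}}$; this ensures the decomposition (\ref{decomp}) at the two points is compatible and that the well-definedness asserted by the preceding proposition may be applied.
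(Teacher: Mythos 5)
Your proposal is correct and follows essentially the same route as the paper's own proof: the key identity $Q\cdot(g^{-1}\phi-\phi_{*})=(gQ)\cdot(\phi-g\cdot\phi_{*})$, the transport of the decomposition $S_{{\bf Q}_{*}}\oplus NS_{{\bf Q}_{*}}$ along the Weyl action, and the separate checks that the cone indicator and $|\det(Q_{i}\cdot e_{j})|$ are preserved because $W\subset O(n)$. Your explicit verification that the non-basic remainder maps to the non-basic space is a point the paper only asserts, so it is a welcome (if minor) refinement rather than a departure.
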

\begin{proof}
The action of an element $g\in W$ gives a map from $\hat{R}_{{\bf Q}_{*}(\phi_{*}),\phi_{*}}[-n]$ to $\hat{R}_{{\bf Q}_{*}(g\cdot\phi_{*}),g\cdot\phi_{*}}[-n]$, which preserves their grades. More precisely, for a vector $P\in\mathbb{R}^{n}$ and $g\in W$, we have 
\begin{eqnarray}
P\cdot (g^{-1}\cdot\phi-\phi_{*})=(g\cdot P)\cdot(\phi-g\cdot\phi_{*})\ .
\end{eqnarray}
This action also conserves the decomposition $\hat{R}_{{\bf Q}_{*}}[-n]=S_{{\bf Q}_{*}}\oplus NS_{{\bf Q}_{*}}$ and maps a basic fraction $f_{\sigma}\in S_{{\bf Q}_{*},\phi_{*}}$ to another basic fraction $f_{g\cdot\sigma}\in S_{{\bf Q}_{*},g\cdot\phi_{*}}$, where $g\cdot\sigma:=\{ g\cdot Q\ |\ Q\in\sigma\}$ forms a basis of ${\bf Q}_{*}(g\cdot\phi_{*})$.

Now let us consider the map $\hat{R}_{{\bf Q}_{*},\phi_{*}}\to S_{{\bf Q}_{*},\phi_{*}}$ in the part of (\ref{JKJK}) at $\phi_{*}$. Thanks to the Weyl symmetry of $\mathcal{Z}$, the above discussion shows that if we have a linear combination $\sum_{\sigma} c_{\sigma}(\phi_{*})f_{\sigma}\in S_{{\bf Q}_{*},\phi_{*}}$ of basic fractions as the image of $\mathcal{Z}$ by the map at $\phi_{*}$, then we also have $\sum_{\sigma} c_{\sigma}(\phi_{*})f_{g\cdot\sigma}\in S_{{\bf Q}_{*},g\cdot\phi_{*}}$ as the one at the position $g\cdot\phi_{*}$.

For the projectivity of $Q_{*}$, the JK-residue of the $\sum_{\sigma} c_{\sigma}(\phi_{*})f_{\sigma}$ at $\phi_{*}$ is
\begin{eqnarray}
\mathrm{JK\text{-}Res}_{\eta,\phi_{*}}\left(\sum_{\sigma} c_{\sigma}(\phi_{*})f_{\sigma}\right)=\sum_{\sigma} \frac{c_{\sigma}(\phi_{*})}{|\mathrm{det}(Q^{\sigma}_{i}\cdot e_{j})|}{\bf 1}_{\mathrm{Cone}(\sigma)}(\eta),
\end{eqnarray}
where we write $\sigma=\{ Q^{\sigma}_{1},\cdots,Q^{\sigma}_{n}\}$ and ${\bf 1}_{A}$ is the characteristic function for a subset $A\subset\mathbb{R}^{n}$. On the other hand, we have
\begin{eqnarray}\label{JKoffractions}
\mathrm{JK\text{-}Res}_{g\cdot\eta,g\cdot\phi_{*}}\left(\sum_{\sigma} c_{\sigma}(\phi_{*})f_{g\cdot\sigma}\right)=\sum_{\sigma} \frac{c_{\sigma}(\phi_{*})}{|\mathrm{det}(Q^{g\cdot\sigma}_{i}\cdot e_{j})|}{\bf 1}_{\mathrm{Cone}(g\cdot\sigma)}(g\cdot\eta).
\end{eqnarray}
Clearly, ${\bf 1}_{\mathrm{Cone}(g\cdot\sigma)}(g\cdot\eta)={\bf 1}_{\mathrm{Cone}(\sigma)}(\eta)$. Also we have $|\mathrm{det}(Q^{g\cdot\sigma}_{i}\cdot e_{j})|=|\mathrm{det}(Q^{\sigma}_{i}\cdot e_{j})|$ since $W\subset O(n)$. Then we conclude that $\mathrm{JK\text{-}Res}_{g\cdot\eta,g\cdot\phi_{*}}\mathcal{Z}=\mathrm{JK\text{-}Res}_{\eta,\phi_{*}}\mathcal{Z}$.
\end{proof}

Next we consider a sum of JK-residues over the Weyl orbit of $\phi_{*}$ with respect to a fixed chamber $\eta$. From (\ref{JKoffractions}), we have
\begin{eqnarray}
\frac{1}{C_{[\phi_{*}]}}\sum_{g\in W}\mathrm{JK\text{-}Res}_{\eta,g\cdot\phi_{*}}\mathcal{Z}
=\frac{1}{C_{[\phi_{*}]}}\sum_{\sigma}\frac{c_{\sigma}(\phi_{*})}{|\mathrm{det}(Q^{\sigma}_{i}\cdot e_{j})|}\sum_{g\in W}{\bf 1}_{\mathrm{Cone}(g\cdot\sigma)}(\eta).
\end{eqnarray}
where $\mathcal{Z}\in\hat{R}_{{\bf Q}_{*},\phi_{*}}\mapsto\sum_{\sigma} c_{\sigma}(\phi_{*})f_{\sigma}\in S_{{\bf Q}_{*},\phi_{*}}$ and $C_{[\phi_{*}]}$ is the order of the stabilizer subgroup at $\phi_{*}$. Especially, the $\eta$ only appears in the factor $A_{\sigma,\eta}\equiv\sum_{g\in W}{\bf 1}_{\mathrm{Cone}(g\cdot\sigma)}(\eta)\in\mathbb{N}$.

As a result, we have a following formula\footnote{The sum $\sum_{\sigma\in\mathfrak{B}({\bf Q}_{*})}$ is taken over all the bases of ${\bf Q}_{*}$. In other words the range depends on $\phi_{*}$. So we denote by $\mathfrak{B}({\bf Q}_{*})\subset\widetilde{{\bf Q}}^{BCD}$ the set of all the bases of ${\bf Q}_{*}$ and express the dependence explicitly. }:
\begin{proposition} With the notations above, we have
\begin{eqnarray}\label{Zeta}
Z_{\eta}=\sum_{\phi_{*}}\mathrm{JK\text{-}Res}_{\eta,\phi_{*}}\mathcal{Z}=\sum_{[\phi_{*}]}\frac{1}{C_{[\phi_{*}]}}\sum_{\sigma\in\mathfrak{B}({\bf Q}_{*})}\frac{A_{\sigma,\eta}c_{\sigma}(\phi_{*})}{|\mathrm{det}(Q^{\sigma}_{i}\cdot e_{j})|}.
\end{eqnarray}
\hspace{\fill}$\blacksquare$
\end{proposition}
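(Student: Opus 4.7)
The plan is to partition the outer sum $\sum_{\phi_{*}}$ according to Weyl orbits and then to invoke the Weyl-covariance machinery already set up above. Concretely, I would first write
\begin{eqnarray*}
\sum_{\phi_{*}}\mathrm{JK\text{-}Res}_{\eta,\phi_{*}}\mathcal{Z}=\sum_{[\phi_{*}]}\sum_{\phi'\in[\phi_{*}]}\mathrm{JK\text{-}Res}_{\eta,\phi'}\mathcal{Z}=\sum_{[\phi_{*}]}\frac{1}{C_{[\phi_{*}]}}\sum_{g\in W}\mathrm{JK\text{-}Res}_{\eta,g\cdot\phi_{*}}\mathcal{Z},
\end{eqnarray*}
where the second equality is the orbit-stabilizer relation: each orbit element is visited exactly $C_{[\phi_{*}]}=|\mathrm{Stab}_{W}(\phi_{*})|$ times as $g$ ranges over $W$. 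This moves the entire computation onto the single orbit representative $\phi_{*}$ at the cost of an overall factor $1/C_{[\phi_{*}]}$.

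Next, I would fix, once and for all, a decomposition $\mathcal{Z}\mapsto\sum_{\sigma\in\mathfrak{B}({\bf Q}_{*})}c_{\sigma}(\phi_{*})f_{\sigma}\in S_{{\bf Q}_{*},\phi_{*}}$ coming from the projection and the direct-sum splitting (\ref{decomp}) at the representative $\phi_{*}$. By the Weyl-symmetry argument in the preceding proposition, the image of $\mathcal{Z}$ at $g\cdot\phi_{*}$ under the same composite (\ref{JKJK}) is $\sum_{\sigma}c_{\sigma}(\phi_{*})f_{g\cdot\sigma}\in S_{{\bf Q}_{*},g\cdot\phi_{*}}$ with the \emph{same} coefficients. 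Applying the explicit formula (\ref{JKoffractions}) to this image and using $|\det(Q_{i}^{g\cdot\sigma}\cdot e_{j})|=|\det(Q_{i}^{\sigma}\cdot e_{j})|$ (because $W\subset O(n)$), I would obtain
\begin{eqnarray*}
\mathrm{JK\text{-}Res}_{\eta,g\cdot\phi_{*}}\mathcal{Z}=\sum_{\sigma\in\mathfrak{B}({\bf Q}_{*})}\frac{c_{\sigma}(\phi_{*})}{|\det(Q_{i}^{\sigma}\cdot e_{j})|}\,{\bf 1}_{\mathrm{Cone}(g\cdot\sigma)}(\eta).
\end{eqnarray*}

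The final step is to swap the $g$- and $\sigma$-sums and to recognize $\sum_{g\in W}{\bf 1}_{\mathrm{Cone}(g\cdot\sigma)}(\eta)=A_{\sigma,\eta}$ by its defining expression. This yields
\begin{eqnarray*}
\sum_{g\in W}\mathrm{JK\text{-}Res}_{\eta,g\cdot\phi_{*}}\mathcal{Z}=\sum_{\sigma\in\mathfrak{B}({\bf Q}_{*})}\frac{A_{\sigma,\eta}\,c_{\sigma}(\phi_{*})}{|\det(Q_{i}^{\sigma}\cdot e_{j})|},
\end{eqnarray*}
which, combined with the orbit decomposition above, gives (\ref{Zeta}).

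The only real subtlety, and the step I would be most careful about, is the well-definedness of the coefficients $c_{\sigma}(\phi_{*})$: basic fractions $f_{\sigma}$ satisfy nontrivial linear relations, so the decomposition of $\mathcal{Z}$ into them is not unique. I would handle this by choosing one fixed decomposition at $\phi_{*}$ and transporting it to $g\cdot\phi_{*}$ via the $W$-action $f_{\sigma}\mapsto f_{g\cdot\sigma}$; the remark following the proposition on $S_{{\bf Q}_{*},\phi_{*}}$ guarantees that the JK-residue on either side is insensitive to this choice, so the intermediate identity (\ref{JKoffractions}) and hence the final formula are meaningful. Beyond this bookkeeping point, the argument is essentially a reindexing of computations already performed in the preceding discussion.
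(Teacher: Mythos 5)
Your proposal is correct and follows essentially the same route as the paper: decompose the sum over poles into Weyl orbits via orbit--stabilizer, transport the fixed decomposition $\sum_{\sigma}c_{\sigma}(\phi_{*})f_{\sigma}$ to $g\cdot\phi_{*}$ using the Weyl covariance established in the preceding proposition, apply (\ref{JKoffractions}) with the fixed $\eta$, and swap the sums over $g$ and $\sigma$ to recognize $A_{\sigma,\eta}$. Your added remark on the well-definedness of the $c_{\sigma}(\phi_{*})$ despite the linear relations among basic fractions is exactly the point the paper addresses in its remark following the proposition on $S_{{\bf Q}_{*},\phi_{*}}$.
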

This formula will be used in the following of this paper and we will concentrate on how to determine the number $A_{\sigma,\eta}$.
\begin{remark}
The $\eta$ only affects the number $A_{\sigma,\eta}$. This structure implicitly appears in the case of the $U(N)$ Nekrasov partition function. In this case, one encounters poles that can be described as an $N$-vector of Young tableaux. The corresponding vector of Young diagrams implies its orbit. When $\hat{G}=U(n)$ and $\eta=(1,1,\cdots,1)\in\mathbb{R}^{n}$, we have 
\begin{eqnarray}\label{Amulti}
A_{\sigma,\eta}=\begin{cases} n!&(\eta\in\mathrm{Cone}(\sigma))\\
0&(\text{otherwise})
\end{cases}.
\end{eqnarray}
In other words, any two poles in the same orbit contribute the result with the same value. So we usually see the famous formula of the $U(N)$ instanton counting whose summation is taken over all the vector of Young diagrams, rather than Young tableaux.\end{remark}

Going back to the $BCD$ Nekrasov partition functions, we want the formula of $A_{\sigma,\eta}$ which is an analogue of (\ref{Amulti}). For the case that one can extract only one basic fraction at the pole $\phi_{*}$, the $A_{\sigma,\eta}$ was partly calculated and called as the multiplicity factor in \cite{Hollands:2010xa}. Unlike the unitary case, the $A_{\sigma,\eta}$ varies for each orbit. In the appendix \ref{multiplicityfactors}, we comment more on the multiplicity from a viewpoint of the notion $A_{\sigma,\eta}$. 

Thanks to previous works, especially to \cite{Hwang:2014uwa}, we know we take the $\eta$ arbitrarily in these $BCD$ cases. In the next section, we will fix the $\eta$ and give an algorithm that computes $A_{\sigma,\eta}$ with respect to this fixed vector.

\begin{remark}
For the arbitrariness of a choice of $\eta$, we may take an average of $Z_{\eta}$ over all the chambers. Set $N_{\widetilde{{\bf Q}}}$ to be the number of chambers for $\widetilde{{\bf Q}}$ and we have
\begin{align}
\frac{1}{N_{\widetilde{{\bf Q}}}}\sum_{\eta}Z_{\eta}&=\sum_{[\phi_{*}]}\frac{1}{C_{[\phi_{*}]}}\sum_{\sigma\in\mathfrak{B}({\bf Q}_{*})}\frac{c_{\sigma}(\phi_{*})}{|\mathrm{det}(Q^{\sigma}_{i}\cdot e_{j})|}\sum_{\eta}\frac{A_{\sigma,\eta}}{N_{\widetilde{{\bf Q}}}}\nonumber\\
&=\sum_{[\phi_{*}]}\frac{|W|}{C_{[\phi_{*}]}}\sum_{\sigma\in\mathfrak{B}({\bf Q}_{*})}\frac{c_{\sigma}(\phi_{*})}{|\mathrm{det}(Q^{\sigma}_{i}\cdot e_{j})|}\mathrm{Prop}(\mathrm{Cone}(\sigma))
\end{align}
where $\mathrm{Prop}(\mathrm{Cone}(\sigma))$ is the proportion of the cone in $\mathbb{R}^{n}$, which is no other than the ratio of the solid angle of the cone, or of the number of the chambers in the cone.
\end{remark}

\section{Distinction rule}\label{section3}
In the previous section, we have encountered the factor
\begin{eqnarray}
A_{\sigma,\eta}\equiv\sum_{g\in W}{\bf 1}_{\mathrm{Cone}(g\cdot\sigma)}(\eta)\in\mathbb{N}.
\end{eqnarray}
So it is important to judge whether $\eta\in\mathrm{Cone}(\sigma)$ or not. In principle, one can do it by solving a system of linear equations. In this section, however, we give another way. We take a special vector as $\eta$, and then provide the main theorem of this section that claims a distinction rule to see whether $\eta\in\mathrm{Cone}(\sigma)$ or not for a given cone of a basis $\sigma$ of $\widetilde{{\bf Q}}^{BCD}$. The rule will be expressed graphically, and we will see in the next section that this graphical notation matches the Weyl group actions on poles.

In this section, we firstly introduce some notations for clear discussions. Then we give the distinction rule.
\subsection{Notations}
Here we introduce some notations to see discussions in this paper more clearly. 
Firstly, we introduce a graphical notation for the elements of $\widetilde{{\bf Q}}^{BCD}$. 

For each element of $\widetilde{{\bf Q}}^{BCD}$, we write it as follows;
\begin{eqnarray}
\begin{matrix}
\pm e_{i}& \Rightarrow&
\begin{tikzpicture}[auto,node distance=1cm,
  thick,main node/.style={circle,draw,minimum size=3mm},baseline=-.1cm]
  \node[main node] (1) {$$};
  \node[above] at (1.north){$i$};
  \draw (1) to [in=150,out=210,loop] node {$\pm$} (1);
\end{tikzpicture}
\\
+e_{i}\pm e_{j} & \Rightarrow&
\begin{tikzpicture}[auto,node distance=1cm,
  thick,main node/.style={circle,draw,minimum size=3mm},baseline=-.1cm]
  \node[main node] (1) {$$};
  \node[above] at (1.north){$i$};
  \node[main node] (2) [right of=1] {$$};
  \node[above] at (2.north){$j$};
  \draw (1) to node[very near start,below] {$+$} node[very near end,below]{$\pm$}(2);
\end{tikzpicture}\ .
\end{matrix}
\end{eqnarray}
Following this notation, we define the graph of a subset ${\bf Q}_{*}\subset\widetilde{{\bf Q}}^{BCD}$ which consists of $n$ vertices with numbers and $|{\bf Q}_{*}|$ lines with signs on their ends\footnote{This graph may be disconnected and have some isolated vertices.}. 
\begin{example} The graph of ${\bf Q}_{*}=\{ e_{2}, e_{1}-e_{2}\}$ is
\begin{tikzpicture}[auto,node distance=1cm,
  thick,main node/.style={circle,draw,minimum size=3mm},baseline=-.1cm]
  \node[main node] (1) {$$};
  \node[above] at (1.north){$2$};
  \node[main node] (2) [right of=1] {$$};
  \node[above] at (2.north){$1$};
  \draw (1) to [in=150,out=210,loop] node[] {$+$}  node[near end, above]{$$}(1);
  \draw (1) to node[very near start,below] {$-$} node[above]{$$} node[very near end,below]{$+$}(2);
  \end{tikzpicture}\ .\hspace{\fill}$\blacksquare$
\end{example}

In the following of this paper, we only treat such a graph and then we call it as a graph for short. We often abbreviate numbers or signs in a graph. Also, if $u$ is a vertex in a graph, we write by $m(u)$ the number assigned to $u$.

\begin{remark}Note that the Weyl group $\mathfrak{S}_{n}\ltimes\mathbb{Z}^{n}_{2}$ naturally acts on a graph. More precisely, $\mathfrak{S}_{n}$ permutates the assigned numbers for the vertices and $\mathbb{Z}^{n}_{2}$ flips the signs on the lines.\end{remark}

We also introduce some notions for tree graphs, a connected graphs with no cycles.  Note that we fix a vertex in a tree graph and call it the root of the tree.

For given a vertex $u$ in a tree graph, we call that a vertex $v$ is descendant to the vertex $u$ when the path that connects $v$ and the root has the vertex $u$, and define $D(u)$ as the set of all the descendants of $u$. (Here we declare that $u$ is an element of $D(u)$.) Also, we define $M(u)\equiv\max_{v\in D(u)}m(v)$.

Finally, we introduce a concept of $\eta$-orientation, only for abbreviations. 
\begin{definition}
For given a subset $\{Q_{1},\cdots,Q_{l}\}$ of\ $\widetilde{{\bf Q}}^{BCD}$, we say that the set and its graph are $\eta$-oriented if the set satisfies $\eta\in\mathrm{Cone}(Q_{1},\cdots,Q_{l})$. \hspace{\fill}$\blacksquare$\end{definition}
\subsection{The specific choice of a chamber}
In this paper, we choose 
\begin{eqnarray}
\eta=(1,\alpha,\alpha^{2},\cdots,\alpha^{n-1})\in\mathbb{R}^{n},
\end{eqnarray}
where $\alpha>0$ is a sufficient large parameter. Before we consider JK-residues with respect to this vector, we should show the following lemma.
\begin{lemma}
Any set of $q(<n)$ vectors in $\widetilde{{\bf Q}}^{BCD}$ is not $\eta$-oriented. In other words, $\eta$ is generic.\hspace{\fill}$\blacksquare$
\end{lemma}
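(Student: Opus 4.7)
The plan is to reduce the cone statement to a genericity statement about spans, and then exploit the fact that $\eta$ is essentially a Vandermonde-style vector in the parameter $\alpha$.

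First, I would observe that the two formulations in the lemma are equivalent because $\widetilde{{\bf Q}}^{BCD}$ is symmetric under negation: if $\eta \in \mathrm{Cone}(Q_1,\ldots,Q_q)$ with $q<n$, then automatically $\eta$ lies in the span of these $q \leq n-1$ vectors; conversely, if $\eta = \sum_{i=1}^{n-1} c_i Q_i$ with $Q_i \in \widetilde{{\bf Q}}^{BCD}$, then replacing each $Q_i$ (with $c_i \neq 0$) by $\mathrm{sgn}(c_i)Q_i \in \widetilde{{\bf Q}}^{BCD}$ exhibits $\eta$ as a positive combination of at most $n-1$ vectors in $\widetilde{{\bf Q}}^{BCD}$. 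Thus it suffices to prove that $\eta$ is not contained in the $\mathbb{R}$-linear span of any $n-1$ elements of $\widetilde{{\bf Q}}^{BCD}$.

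Second, fix any $n-1$ vectors $Q_1,\dots,Q_{n-1} \in \widetilde{{\bf Q}}^{BCD}$. Their span $V$ is a rational subspace of $\mathbb{R}^n$ of dimension at most $n-1$, so the orthogonal complement $V^{\perp}$ contains a nonzero integer vector $v=(v_1,\ldots,v_n) \in \mathbb{Z}^n$, obtainable (for instance) from $(n-1)\times(n-1)$ minors of the matrix $[Q_1 \mid \cdots \mid Q_{n-1}]$. The membership $\eta \in V$ is then equivalent to
\begin{equation*}
v \cdot \eta \;=\; v_1 + v_2\alpha + v_3\alpha^2 + \cdots + v_n\alpha^{n-1} \;=\; 0.
\end{equation*}
This is a polynomial in $\alpha$ of degree at most $n-1$ with integer coefficients, and it is not identically zero as a polynomial since $v\neq 0$.

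Third, since $\widetilde{{\bf Q}}^{BCD}$ is a finite set, there are only finitely many choices of $(n-1)$-subsets, hence finitely many polynomials $v \cdot \eta$ arising this way. Each has at most $n-1$ real roots, so their union is a finite subset of $\mathbb{R}$. Choosing $\alpha$ larger than every such root (for example, larger than the Cauchy bound applied uniformly to all these polynomials, which is possible because the coefficients are bounded integers) guarantees $v\cdot \eta \neq 0$ for every candidate $v$. The only real work is the bookkeeping that makes "$\alpha$ sufficiently large" uniform over the finite combinatorial data; once that is in hand, no $(n-1)$-subset can contain $\eta$ in its span, and the lemma follows from the reduction in the first step.
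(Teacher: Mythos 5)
Your proof is correct, but it takes a genuinely different route from the paper's. The paper proves a slightly stronger statement by induction on $n$: for any increasing integer sequence $p_{1}<\cdots<p_{n}$, the vector $(\alpha^{p_{1}},\ldots,\alpha^{p_{n}})$ is not a linear combination of $n-1$ elements of $\widetilde{\bf Q}^{BCD}$. The induction step projects out coordinates to show that the graph of the $n-1$ vectors must be a tree, then solves the linear system from the leaves inward to extract a single relation $\sum_{i}\pm\alpha^{p_{i}}=0$, which fails for large $\alpha$. Your argument bypasses the induction and the graph analysis entirely: any $n-1$ integer vectors admit a nonzero integer normal vector $v$, and $v\cdot\eta$ is a nonzero integer polynomial in $\alpha$ of degree at most $n-1$, so it is nonvanishing for $\alpha$ beyond a uniform bound over the finitely many subsets. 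This is shorter, and it works for \emph{any} finite set of integer (or rational) vectors, not just $\widetilde{\bf Q}^{BCD}$. What the paper's approach buys in exchange is twofold: the tree decomposition and the leaf-to-root solving procedure are exactly the machinery reused in the subsequent distinction-rule proofs, and the stronger form with arbitrary exponent sequences $(p_{1},\ldots,p_{n})$ is invoked later (e.g.\ to kill components with more lines than vertices). Your method would also deliver that stronger form with no extra work --- the restriction of $\eta$ to any coordinate subset is still a vector of distinct powers of $\alpha$ --- but you would need to state it. One cosmetic remark: your claim that $\eta\in V$ is \emph{equivalent} to $v\cdot\eta=0$ holds only when $\dim V=n-1$ and $v$ spans $V^{\perp}$; for the proof you only need the implication $\eta\in V\Rightarrow v\cdot\eta=0$, which always holds, so nothing breaks.
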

In the following, we prove this lemma. This is a corollary of the following lemma.
\begin{lemma}
For any sequences $(p_{j})_{1\le j\le n}$ of integers such that $p_{1}<p_{2}<\cdots<p_{n}$, the vector $\eta_{(p_{1},\cdots,p_{n})}\equiv(\alpha^{p_{1}},\cdots,\alpha^{p_{n}})\in\mathbb{R}^{n}$ can not be expressed as a linear combination of arbitrary $n-1$ vectors in $\widetilde{{\bf Q}}^{BCD}$.\hspace{\fill}$\blacksquare$
\end{lemma}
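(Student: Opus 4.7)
The plan is to argue by contradiction: assume some $n-1$ vectors $Q_1,\ldots,Q_{n-1}\in\widetilde{{\bf Q}}^{BCD}$ span a subspace containing $\eta_{(p_1,\cdots,p_n)}$, and derive an explicit upper bound on $\alpha$ that is violated once $\alpha$ is taken sufficiently large. Concretely, I aim to extract from this hypothesis a nonzero polynomial identity in $\alpha$ whose integer coefficients are bounded by a constant depending only on $n$, and then to apply a standard root bound to that polynomial.

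First I would pass to a linearly independent subfamily. Set $r=\dim\mathrm{span}(Q_1,\ldots,Q_{n-1})\le n-1$, and choose among the $Q_i$'s a basis $Q_{j_1},\ldots,Q_{j_r}$ of that span, so that $\eta\in\mathrm{span}(Q_{j_1},\ldots,Q_{j_r})$. Since the $n\times r$ matrix $B=[Q_{j_1}\mid\cdots\mid Q_{j_r}]$ has rank $r$, it admits a nonzero $r\times r$ minor along some rows $i_1<\cdots<i_r$; because $r\le n-1<n$, one may pick an additional row index $i_0\notin\{i_1,\ldots,i_r\}$. The $(r+1)\times(r+1)$ submatrix of $[B\mid\eta]$ on rows $\{i_0,i_1,\ldots,i_r\}$ still has rank $r$ and therefore vanishing determinant. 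Expanding this determinant along the $\eta$-column yields a relation
\begin{equation*}
\sum_{k=0}^{r}(-1)^{k}\,N_k\,\alpha^{p_{i_k}}=0,
\end{equation*}
where each $N_k$ is an $r\times r$ minor of $B$ and in particular $N_0$ is the originally chosen nonzero minor.

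To finish I would invoke Cauchy's root bound. Since every entry of $B$ lies in $\{-1,0,1\}$, one has $|N_k|\le r!\le(n-1)!$ for each $k$, while the exponents $p_{i_0},\ldots,p_{i_r}$ are pairwise distinct integers. Dividing the relation above by the lowest power of $\alpha$ that appears produces an honest nonzero polynomial with integer coefficients of absolute value at most $(n-1)!$ and leading coefficient of absolute value at least one, whose positive real roots are therefore bounded above by $1+(n-1)!$. Choosing $\alpha>1+(n-1)!$ in the definition of $\eta$ contradicts this bound and establishes the lemma. The one delicate point is ensuring the cofactor expansion is nontrivial after restricting to the basis $Q_{j_1},\ldots,Q_{j_r}$; that is precisely why the rows $i_1,\ldots,i_r$ are chosen to realize a nonzero minor of $B$ \emph{before} adjoining the extra row $i_0$, guaranteeing that the monomial $\pm N_0\,\alpha^{p_{i_0}}$ in the relation displayed above has a nonzero coefficient.
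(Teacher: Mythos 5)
Your proof is correct, but it takes a genuinely different route from the paper's. The paper argues by induction on $n$: it projects away coordinates to show (via the inductive hypothesis) that no proper subset of coordinates can support too many of the $Q_j$, translates this into the statement that the graph of $\{Q_1,\dots,Q_{n-1}\}$ contains no loops and is therefore a single tree on $n$ vertices, and then back-substitutes from the leaves to the root to extract a relation $\sum_{i}\pm\alpha^{p_i}=0$, impossible for $\alpha$ large. You instead run a direct rank argument: since $\eta$ lies in the column span of $B=[Q_{j_1}\mid\cdots\mid Q_{j_r}]$ with $r\le n-1<n$, a well-chosen $(r+1)\times(r+1)$ minor of $[B\mid\eta]$ vanishes, and the cofactor expansion along the $\eta$-column produces a nontrivial integer-coefficient relation among distinct powers of $\alpha$; your care in fixing the rows $i_1,\dots,i_r$ to realize a nonzero minor \emph{before} adjoining $i_0$ is exactly what guarantees nontriviality, and Cauchy's bound then gives the explicit threshold $\alpha>1+(n-1)!$. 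Your argument is shorter, needs no induction, applies verbatim to any finite set of integer vectors with entries in $\{-1,0,1\}$ (the specific shape of $\widetilde{\bf Q}^{BCD}$ is irrelevant), and makes quantitative how large ``sufficiently large'' must be — something the paper leaves implicit. What you lose is the structural byproduct: the paper's proof establishes along the way that the support graph of an $(n-1)$-element subset is a forest, and this loop-free/tree picture is precisely the combinatorial scaffolding reused in the subsequent proposition and in the distinction rule (the loop-plus-trees decomposition), so the paper's longer route is doing double duty for the rest of Section 3.
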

\begin{proof}
We prove it by the mathematical induction for $n$. The $n=1$ case holds because $\eta\neq 0$.
Now we assume this claim holds for the $n<l$ cases and consider the $n=l$ case. If the claim does not hold for this case, we have a sequence of integers $p_{1}<\cdots<p_{n}$, vectors $Q_{1},\cdots,Q_{n-1}\in\widetilde{{\bf Q}}^{BCD}$ and $a_{1},\cdots, a_{n-1}\in\mathbb{R}$ such that
\begin{eqnarray}\label{eq1}
a_{1}Q_{1}+\cdots+a_{n-1}Q_{n-1}=\eta_{(p_{1},\cdots,p_{n})}.
\end{eqnarray}

Consider a projection $Q\in\widetilde{{\bf Q}}^{BCD}\mapsto Q'=Q-(e_{1},Q)e_{1}\in\mathbb{R}^{n-1}$. This projection gives a map $\widetilde{{\bf Q}}^{BCD}$ to the one dimensional lower $\widetilde{{\bf Q}}^{BCD}\sqcup\{0\}\subset\mathbb{R}^{n-1}$. So if $Q_{1}=e_{1}$, this projection gives
\begin{eqnarray}
a_{2}Q'_{2}+\cdots+a_{n-1}Q'_{n-1}=\eta_{(p_{2},p_{3},\cdots,p_{n})},
\end{eqnarray}
which contradicts the assumption. Therefore $Q_{1},\cdots, Q_{n-1}$ are of the form $\pm(e_{i}\pm e_{j})$. Moreover, if $Q_{1}$ and $Q_{2}$ are of the form $\pm(e_{1}\pm e_{2})$, we have a contradiction similarly by dropping $e_{1}$ and $e_{2}$. Similarly, if $q(<k)$ vectors $Q_{j_{1}},\cdots, Q_{j_{q}}$ are spanned by $e_{1},\cdots,e_{q}$, we have a contradiction.

The above result is translated into that this graph with $k$ vertices and $k-1$ lines contains no loops described below;
\begin{eqnarray}
\begin{tikzpicture}[auto,node distance=1cm,
  thick,main node/.style={circle,draw,minimum size=3mm},baseline=-.1cm]
  \node[main node] (1) {$$};
  \draw (1) to [in=150,out=210,loop] node {$$} (1);
\end{tikzpicture}
,
\begin{tikzpicture}[auto,node distance=1cm,
  thick,main node/.style={circle,draw,minimum size=3mm},baseline=-.1cm]
  \node[main node] (1) {$$};
  \node[main node] (2) [right of=1] {$$};
  \draw[bend right] (1) to node[very near start,below] {$$} node[very near end,below]{$$}(2);
  \draw[bend left] (1) to node[very near start,below] {$$} node[very near end,below]{$$}(2);
\end{tikzpicture}
,
\begin{tikzpicture}[auto,node distance=1cm,
  thick,main node/.style={circle,draw,minimum size=3mm},baseline=-.1cm]
  \node[main node] (1) at (90:.5cm) {$$};
  \node[main node] (2) at (210:.5cm) {$$};
  \node[main node] (3) at (330:.5cm) {$$};
  \draw (1) to node[very near start,below] {$$} node[very near end,below]{$$}(2);
  \draw (1) to node[very near start,below] {$$} node[very near end,below]{$$}(3);
  \draw (3) to node[very near start,below] {$$} node[very near end,below]{$$}(2);
\end{tikzpicture}
,\cdots.
\end{eqnarray}
In other words, the graph is made of trees\footnote{But we know for tree graphs
\begin{eqnarray}
\#(\text{connected component})=\#(\text{vertex})-\#(\text{line})=1
\end{eqnarray}
So we should have a connected tree diagram. 
}. 

The equation (\ref{eq1}) gives a system of $k$ linear equations. We start to solve this from the leaves of the graph to a fixed root. Finally, we have the following one relation
\begin{eqnarray}
\sum_{i=1}^{n} \pm \alpha^{p_{i}}=0
\end{eqnarray}
with appropriate signs. But $\alpha$ is large enough and then $\sum_{i=1}^{n} \pm \alpha^{p_{i}}\neq0$, which is a contradiction.

So the claim holds for the $n=l$ case, and then for any $n$.
\end{proof}
\subsection{Distinction rule}
To evaluate JK-residues, it is important to distinguish whether a basis $\sigma=(Q_{1},\cdots,Q_{n})$ of $\widetilde{{\bf Q}}^{BCD}$ is $\eta$-oriented. So here we give a distinction rule for a graph with the same number of lines and vertices and prove it. The resultant rule will be summarized as a theorem and a corollary in the last.

First we prove the following proposition that constraints the shape of the $\eta$-oriented graph.
\begin{proposition}
A graph with the same number of lines and vertices is $\eta$-oriented only if each of its connected components is expressed as a loop and trees associated to the vertices in the loop;
\begin{eqnarray}\label{loop}
\begin{tikzpicture}[auto,
  thick,main node/.style={circle,draw,minimum size=3mm},baseline=-.1cm]
  \node[] (1) at (90:.5cm) {$\cdots$};
  \node[main node] (2) at (210:.5cm) {$$};
  \node[main node] (3) at (330:.5cm) {$$};
  \node[main node,pattern=north west lines] (21) at ([shift=(180:1cm)]2){$$};
  \node[main node,pattern=north west lines] (22) at ([shift=(240:1cm)]2){$$};
  \node[main node,pattern=north west lines] (31) at ([shift=(0:1cm)]3){$$};
  \node[main node,pattern=north west lines] (32) at ([shift=(300:1cm)]3){$$};
  \node[] at ([shift=(120:.5cm)]22){\rotatebox{120}{$\cdots$}};
  \node[] at ([shift=(60:.5cm)]32){\rotatebox{60}{$\cdots$}};
  \draw[bend right] (1) to node[very near start,below] {$$} node[very near end,below]{$$}(2);
  \draw[bend right] (2) to node[very near start,below] {$$} node[very near end,below]{$$}(3);
  \draw[bend right] (3) to node[very near start,below] {$$} node[very near end,below]{$$}(1);
  \draw (2)--(21);
  \draw (2)--(22);
  \draw (3)--(31);
  \draw (3)--(32);
\end{tikzpicture}
\end{eqnarray}
where 
\begin{tikzpicture}[node distance=2cm,
  thick,main node/.style={circle,draw,minimum size=12pt},baseline=-3pt]
  \node[main node,pattern=north west lines] (1) {$$};
\end{tikzpicture}
means a tree graph. \hspace{\fill}$\blacksquare$
\end{proposition}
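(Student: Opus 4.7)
The plan is to deduce the structural claim from two ingredients: the genericity of $\eta$ (so that any $\eta$-oriented set of exactly $n$ vectors must be linearly independent), and a simple dimension count performed componentwise on the graph. First, I would argue that if $\eta\in\mathrm{Cone}(\sigma)$ with $|\sigma|=n$, then $\sigma$ is automatically a basis of $\mathbb{R}^n$: otherwise the $n$ vectors would lie in some hyperplane, so $\eta$ would be a positive combination of at most $n-1$ of them, contradicting the genericity lemma proved earlier in this section.

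Next, I would decompose the graph into its connected components $C_{1},\dots,C_{r}$ with vertex sets $V_{j}$ and edge sets $E_{j}$. Every vector corresponding to an edge of $C_{j}$ lies in the coordinate subspace $U_{j}:=\mathrm{span}\{e_{i}:i\in V_{j}\}$. Because the $V_{j}$ are pairwise disjoint and partition $\{1,\dots,n\}$, the subspaces $U_{j}$ are mutually orthogonal and satisfy $\bigoplus_{j}U_{j}=\mathbb{R}^{n}$. In order for the $n$ vectors of $\sigma$ to form a basis of $\mathbb{R}^{n}$, the vectors inside each $C_{j}$ must span the full subspace $U_{j}$, and this forces $|E_{j}|\ge \dim U_{j}=|V_{j}|$. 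Summing over components gives
\begin{equation*}
n=\sum_{j}|E_{j}|\ \ge\ \sum_{j}|V_{j}|=n,
\end{equation*}
so equality must hold for every component: $|E_{j}|=|V_{j}|$.

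Finally, I would invoke the standard graph-theoretic fact that a connected graph (allowing self-loops and multi-edges) whose number of edges equals its number of vertices has cyclomatic number $|E|-|V|+1=1$, i.e.\ it contains exactly one independent cycle. Such a unicyclic graph is precisely the shape pictured in (\ref{loop}): a single cycle with trees attached at its vertices. The cycle may degenerate to a self-loop $\pm e_{i}$, to a pair of parallel edges (for instance $\{e_{i}+e_{j},\,e_{i}-e_{j}\}$), or to a genuine cycle of length $\ge 3$; all three cases are admissible inside $\widetilde{\bf Q}^{BCD}$.

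The only nontrivial step is the transition from "$\eta$-oriented" to "basis", which uses the genericity lemma just established; after that, the argument reduces to elementary linear algebra across orthogonal coordinate blocks together with the well-known classification of connected graphs with $|E|=|V|$, so no further technical obstacle arises.
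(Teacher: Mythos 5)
Your proof is correct and follows the same strategy as the paper: use the genericity lemma for $\eta$ to force each connected component to have equally many lines as vertices, then invoke the standard classification of connected unicyclic graphs. The only difference is in how the per-component count is obtained --- the paper argues by contradiction (a component with more lines than vertices leaves an underdetermined subsystem $\eta_{(p_{1},\dots,p_{m})}=\sum a_{j}Q_{j}$ with fewer than $m$ vectors on the complementary coordinates, which the genericity lemma forbids), whereas you derive the inequality $|E_{j}|\ge|V_{j}|$ directly by first noting that genericity forces $\sigma$ to be a basis and hence each component's vectors to span its orthogonal coordinate block; this is a slightly cleaner packaging of the same idea and introduces no gap.
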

\begin{proof}
Take an $\eta$-oriented graph with $n$ lines $Q_{1},\cdots,Q_{n}\in\widetilde{{\bf Q}}^{BCD}$ and $n$ vertices. Then there is a solution of the system of $n$ linear equations
\begin{eqnarray}\label{eq2}
a_{1}Q_{1}+\cdots+a_{n}Q_{n}=\eta
\end{eqnarray}
with $a_{1},\cdots, a_{n}>0$.

If one connected component of the graph has more lines than vertices, we drop all vertices and lines in this component and get a subsystem of linear equations for the remaining graph  with more vertices than lines. But the previous discussion shows that this system and then (\ref{eq2}) have no solution. Therefore, each connected component of the $\eta$-oriented graph has the same number of lines as vertices, and then has just one loop. \end{proof}

\begin{remark}Dropping the lines which make the loop, we get tree subgraphs. For each tree subgraph, we assign the root of the tree to the unique vertex that is originally in the loop.
\end{remark}
What remains to get a distinction rule is to assign numbers and signs to the graph. To get intuition for the rule, we consider several examples below.
\begin{example}[{\bf 1. Straight Tree}]
Here we consider the $\eta$-oriented graph 
\begin{eqnarray}\label{treetypegraph}
\begin{tikzpicture}[auto,node distance=2cm,
  thick,main node/.style={circle,draw,minimum size=8mm},baseline=(current bounding box.center)]
  \node[main node] (1) {$$};
  \node[above] at (1.north){$u_{1}$};
  \node[main node] (2) [right of=1] {$$};
  \node[above] at (2.north){$u_{2}$};
  \node (c) [right of=2]{$\cdots$};
  \node[main node] (k) [right of=c]{$$};
  \node[above] at (k.north){$u_{n}$};
  \draw (1) to [in=150,out=210,loop] node {$p_{1}$}  node[near end, above]{$Q_{1}$}(1);
  \draw (1) to node[very near start,below] {$q_{1}$} node[above]{$Q_{2}$} node[very near end,below]{$p_{2}$}(2);
  \draw (2) to node[very near start,below] {$q_{2}$} node[above]{$Q_{3}$} node[very near end,below]{$$}(c);
  \draw (c) to node[very near start,below] {$$} node[above]{$Q_{n}$} node[very near end,below]{$p_{n}$}(k);
\end{tikzpicture}\ \ 
\end{eqnarray}
of $n$ vertices $u_{1},\cdots,u_{n}$ and $n$ lines $Q_{1},\cdots,Q_{n}$, where $p_{1},\cdots,p_{n}$ and $q_{1},\cdots,q_{n-1}$ are $\pm1$.

The distinction rule for this graph (\ref{treetypegraph}) is as follows.
\begin{lemma} The graph (\ref{treetypegraph}) is $\eta$-oriented if and only if $p_{n}=+1$ and
\begin{eqnarray}\label{distinction}
(p_{i},q_{i})=
\begin{cases}
(+,-),(-,+)&(M(u_{i})>m(u_{i}))\\
(+,\pm)&(M(u_{i})=m(u_{i})).
\end{cases}
\end{eqnarray}
for each $i<n$. \hspace{\fill}$\blacksquare$
\end{lemma}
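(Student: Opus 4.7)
The plan is to reduce the cone-membership condition to a triangular system of linear equations that can be solved explicitly from the leaf $u_n$ back toward the root $u_1$, and then to read off the sign restrictions that keep every $a_i$ positive.

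First I would translate the graph into vectors: $Q_1=p_1 e_{m(u_1)}$, and $Q_i=q_{i-1}e_{m(u_{i-1})}+p_i e_{m(u_i)}$ for $2\le i\le n$. Because $Q_1,\dots,Q_n$ span $\mathbb{R}^n$, the labels $m(u_1),\dots,m(u_n)$ must be a permutation of $\{1,\dots,n\}$, so in particular they are pairwise distinct. Writing $\sum_i a_i Q_i=\eta$ componentwise in the basis $\{e_{m(u_j)}\}_{j=1}^n$ gives the triangular system
\begin{eqnarray*}
a_n p_n&=&\alpha^{m(u_n)-1},\\
a_j p_j + a_{j+1}q_j&=&\alpha^{m(u_j)-1}\quad(2\le j\le n-1),\\
a_1 p_1+a_2 q_1&=&\alpha^{m(u_1)-1},
\end{eqnarray*}
which determines $a_n,a_{n-1},\dots,a_1$ uniquely once the signs are fixed.

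Next I would prove by downward induction on $j$ the following statement: \emph{if $p_n=+1$ and the rule (\ref{distinction}) holds for every index $i\ge j$, then $a_i=\alpha^{M(u_i)-1}+O(\alpha^{M(u_i)-2})>0$ for every $i\ge j$ and large $\alpha$, and conversely any choice of signs producing $a_i>0$ for all $i\ge j$ must obey $p_n=+1$ together with (\ref{distinction}) on that range.} The base $j=n$ is immediate: $a_n=p_n\alpha^{m(u_n)-1}$ is positive iff $p_n=+1$, and then $a_n=\alpha^{M(u_n)-1}$ since $M(u_n)=m(u_n)$. For the inductive step I rewrite the $j$-th equation as
\begin{eqnarray*}
a_j p_j=\alpha^{m(u_j)-1}-q_j a_{j+1}=\alpha^{m(u_j)-1}-q_j\alpha^{M(u_{j+1})-1}+O(\alpha^{M(u_{j+1})-2}).
\end{eqnarray*}
Because $m(u_j)$ and $M(u_{j+1})$ are distinct integers, one of the two leading monomials strictly dominates. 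If $m(u_j)>M(u_{j+1})$, then $M(u_j)=m(u_j)$, the right-hand side has leading term $+\alpha^{m(u_j)-1}$, so positivity of $a_j$ forces $p_j=+1$ with $q_j$ free, matching the $M(u_j)=m(u_j)$ case of (\ref{distinction}). If $m(u_j)<M(u_{j+1})$, then $M(u_j)=M(u_{j+1})$, the leading term is $-q_j\alpha^{M(u_{j+1})-1}$, and positivity forces $p_jq_j=-1$, i.e.\ $(p_j,q_j)\in\{(+,-),(-,+)\}$, matching the $M(u_j)>m(u_j)$ case. In either subcase the resulting $a_j$ has leading term $\alpha^{M(u_j)-1}$, completing the induction.

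The $j=1$ step is formally identical because the loop contributes $a_1p_1 e_{m(u_1)}$ on exactly the same footing as a line-endpoint, so no separate argument is needed. The proof is therefore essentially bookkeeping: the only real step is recognizing that the distinct-label property of the $m(u_i)$ makes the two candidate leading terms incomparable, which is what lets the sign of $a_j$ be read off unambiguously for large $\alpha$. I do not foresee a substantive obstacle; the mild care required is to keep track that $M(u_i)$ in the recursion equals $\max(m(u_i),M(u_{i+1}))$ for a straight tree, which is precisely what the two sub-cases above encode.
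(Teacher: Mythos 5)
Your proposal is correct and follows essentially the same route as the paper's proof: writing $\sum_i a_iQ_i=\eta$ as a triangular system solved from the leaf $u_n$ back to the root, showing by downward induction that each $a_i$ has leading term $\alpha^{M(u_i)-1}$, and reading off the sign constraints from which of the two monomials $\alpha^{m(u_j)-1}$ and $\alpha^{M(u_{j+1})-1}$ dominates. Your write-up is somewhat more explicit than the paper's (stating the two-sided induction hypothesis and the error terms), but the argument is the same.
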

\begin{proof}
Let us consider the $u_{n}$. We start to solve the system of linear equations (\ref{eq2}) from this vertex and we have
\begin{eqnarray}
p_{n}a_{n}=(\eta,e_{m(u_{n})})=\alpha^{m(u_{n})-1}.
\end{eqnarray}
Then we get $p_{n}=+1$ and $a_{n}=\alpha^{m(u_{n})-1}>0$.

Next we consider the vertex $u_{n-1}$ and we have
\begin{eqnarray}
p_{n-1}a_{n-1}=\alpha^{m(u_{n-1})-1}-q_{n}a_{n}=\alpha^{m(u_{n-1})-1}-q_{n}\alpha^{m(u_{n})-1}.
\end{eqnarray}
Since $\alpha$ is large enough, we get
\begin{align}
(p_{n-1},q_{n-1}; a_{n-1})&=
\begin{cases}
(+,-;\alpha^{m(u_{n-1})-1}+\alpha^{m(u_{n})-1}), (-,+;-\alpha^{m(u_{n-1})-1}\!\!\!\!\!\!&+\alpha^{m(u_{n})-1})\\&(m(u_{n})>m(u_{n-1}))\\
(+,\pm;\alpha^{m(u_{n-1})-1}\mp\alpha^{m(u_{n})-1})&(m(u_{n})<m(u_{n-1})).
\end{cases}
\end{align}
Especially, $a_{n-1}$ is of the order $\alpha^{M(u_{n-1})}$. 

Similarly, for a general vertex $u_{i}$, we have
\begin{eqnarray}
p_{i}a_{i}+q_{i}a_{i+1}=\alpha^{m(u_{i})-1}.
\end{eqnarray}
Note that $a_{i+1}$ is as large as $\alpha^{M(u_{i+1})}$. Then we get 
\begin{eqnarray}
(p_{i},q_{i})=
\begin{cases}
(+,-),(-,+)&(M(u_{i})>m(u_{i}))\\
(+,\pm)&(M(u_{i})=m(u_{i})).
\end{cases}
\end{eqnarray}
and $a_{i}$ is as large as $\alpha^{M(u_{i})}$.
\end{proof}

Note that one can see whether the signs are allowed for the $\eta$-oriented graph only by considering $M(u)$ and $m(u)$.
\end{example}
\begin{example}[{\bf 2. Loop}]
Next we consider a loop graph, 
\begin{eqnarray}\label{desired}
\begin{tikzpicture}[auto,node distance=2cm,
  thick,main node/.style={circle,draw,minimum size=6mm},baseline=(current bounding box.center)]
  \node[] (1) at (90:1.5cm) {$\cdots$};
  \node[main node] (2) at (90+90:1.5cm) {$$};
  \node[main node] (3) at (90+90*2:1.5cm) {$$};
  \node[main node] (4) at (90+90*3:1.5cm) {$$};
  \node[left] at (2.west) {$u_{n}$};
  \node[below] at (3.south) {$u_{1}$};
  \node[right] at (4.east) {$u_{2}$};
  \draw[bend right] (1) to node[above left] {$Q_{n-1}$} node[very near start,below]{$$}node[near end,left]{$q_{n}$}(2);
  \draw[bend right] (2) to node[below left] {$Q_{n}$} node[near start,left]{$p_{n}$}node[near end,below]{$q_{1}$}(3);
  \draw[bend right] (3) to node[below right] {$Q_{1}$} node[near start,below]{$p_{1}$}node[near end,right]{$q_{2}$}(4);
  \draw[bend right] (4) to node[above right] {$Q_{2}$} node[near start,right]{$p_{2}$}node[very near end,below]{$$}(1);
\end{tikzpicture}\ .
\end{eqnarray}
with $n$ lines and $n$ vertices. We may assume $m(u_{1})=n>m(u_{i})$ for each $i\ge2$. 
\begin{lemma}
The graph (\ref{desired}) is $\eta$-oriented if and only if it is of the form
\begin{eqnarray}
\begin{tikzpicture}[auto,node distance=2cm,
  thick,main node/.style={circle,draw,minimum size=6mm},baseline=(current bounding box.center)]
  \node[] (1) at (90:1.5cm) {$\cdots$};
  \node[main node] (2) at (90+90:1.5cm) {$$};
  \node[main node] (3) at (90+90*2:1.5cm) {$n$};
  \node[main node] (4) at (90+90*3:1.5cm) {$$};
  \node[left] at (2.west) {$$};
  \node[below] at (3.south) {$$};
  \node[right] at (4.east) {$$};
  \draw[bend right] (1) to node[above left] {$$} node[very near start,below]{$$}node[near end,left]{$-p_{n}$}(2);
  \draw[bend right] (2) to node[below left] {$$} node[near start,left]{$p_{n}$}node[near end,below]{$+$}(3);
  \draw[bend right] (3) to node[below right] {$$} node[near start,below]{$+$}node[near end,right]{$-p_{2}$}(4);
  \draw[bend right] (4) to node[above right] {$$} node[near start,right]{$p_{2}$}node[very near end,below]{$$}(1);
\end{tikzpicture}
\end{eqnarray}
where $p_{i}=\pm1$ for each $2\le i\le n$.
\hspace{\fill}$\blacksquare$
\end{lemma}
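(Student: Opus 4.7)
The plan is to translate the $\eta$-orientation condition into a linear system for the coefficients $a_1,\dots,a_n$ in $\sum_j a_j Q_j=\eta$, then use the fact that $\alpha$ is large to pin down the signs.

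Reading off the coefficient of $e_{m(u_i)}$ in $\sum_{j=1}^n a_j Q_j = \eta$, the vertex $u_i$ contributes the equation
\begin{equation*}
p_i a_i + q_i a_{i-1} = \alpha^{m(u_i)-1},
\end{equation*}
with indices taken mod $n$ (so that the edge $Q_n$ between $u_n$ and $u_1$ contributes $q_1 a_n$ to the $u_1$ equation). The graph is $\eta$-oriented iff this system admits a solution with every $a_i>0$.

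First I would iterate from $u_2$ up to $u_n$, using each equation in turn to express $a_i$ in terms of $a_1$. This yields
\begin{equation*}
a_i = A_i(\alpha) + B_i\, a_1,\qquad B_i = (-1)^{i-1}\prod_{j=2}^{i} p_j q_j \in\{\pm 1\},
\end{equation*}
where $A_i(\alpha)$ is a polynomial whose monomials are $\alpha^{m(u_k)-1}$ for $k=2,\dots,i$; in particular $\deg A_i \le n-2$ because $m(u_1)=n$ is the strict maximum. Substituting the expression for $a_n$ into the equation at $u_1$ gives the closing relation
\begin{equation*}
(p_1 + q_1 B_n)\, a_1 = \alpha^{n-1} - q_1 A_n(\alpha).
\end{equation*}

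Next I would analyze positivity. For $\alpha$ large the right hand side is dominated by the positive quantity $\alpha^{n-1}$, while the prefactor on the left lies in $\{-2,0,+2\}$. Requiring $a_1>0$ therefore forces $p_1 + q_1 B_n = +2$, i.e.\ $p_1=+$ and $q_1 B_n=+$. Then $a_1\sim \tfrac{1}{2}\alpha^{n-1}$, which dominates $A_i(\alpha)$ in the formula $a_i=A_i+B_i a_1$, so positivity of each $a_i$ ($i\ge 2$) forces $B_i=+1$. From $B_2=-p_2q_2=+1$ we read off $q_2=-p_2$, and the recursion $B_i = -p_i q_i B_{i-1}$ then propagates this to $q_i=-p_i$ for every $i\ge 2$. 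With all $p_j q_j=-1$ one gets $B_n = (-1)^{n-1}\cdot(-1)^{n-1}=+1$, and the condition $q_1 B_n=+$ finally yields $q_1=+$, matching the claimed sign pattern.

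For sufficiency I would check that, with $p_1=q_1=+$ and $q_i=-p_i$ for $i\ge 2$, the closing equation delivers $a_1 = \tfrac{1}{2}(\alpha^{n-1}-A_n(\alpha))>0$ for $\alpha$ large, whence $a_i = A_i+a_1>0$ as well. The only delicate point is the bookkeeping of orders of magnitude: the entire argument hinges on $m(u_1)$ being a \emph{strict} maximum so that the $a_1$ term dominates $A_i(\alpha)$ uniformly in $i$, which is precisely the hypothesis of the lemma.
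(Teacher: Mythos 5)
Your proof is correct and follows essentially the same route as the paper: you solve the loop's linear system exploiting that $\alpha^{n-1}$ dominates all other monomials, forcing $p_1=+$, $a_1\sim\tfrac12\alpha^{n-1}$ and $p_iq_i=-1$ for $i\ge 2$, and you use a global consistency condition around the loop to fix $q_1=+$. The only cosmetic difference is that your closing-relation prefactor $p_1+q_1B_n$ plays the role of the paper's explicit determinant computation $|\mathrm{det}(Q_1,\cdots,Q_n)|=|1+p_1q_1|\neq 0$.
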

\begin{proof}
Since $a_{1},a_{n}$ in (\ref{eq2}) are positive, either $p_{1}$ or $q_{1}$ is also positive. We may assume that $p_{1}=+1$ and $a_{1}$ is of the order $\alpha^{k-1}$ at least. If not, the system (\ref{eq2}) can not be satisfied for $a_{1},a_{n}>0$. 

Next we focus on the vertex $u_{2}$ and see that, in order to satisfy $a_{1}q_{2}+a_{2}p_{2}\le \alpha^{n-2}$, $p_{2}$ and $q_{2}$ have signs opposite to each other and $a_{2}$ is as large as $a_{1}$. 

Inductively, we can see that $p_{i}$ and $q_{i}$ have signs opposite to each other for each $1<i\le n$. Since the $n$ vectors $Q_{1},\cdots,Q_{n}$ becomes a basis of $\mathbb{R}^{n}$,
\begin{eqnarray}
0\neq|\mathrm{det}(Q_{1},\cdots,Q_{n})|=\left|\prod_{i=1}^{n}p_{i}+(-1)^{n-1}\prod_{i=1}^{n}q_{i}\right|=|1+p_{1}q_{1}|
\end{eqnarray}
must be satisfied. This determines $q_{1}=+1$. 

Conversely, if such assignments are given,the system (\ref{eq2}) gives
\begin{eqnarray}
a_{i}=p_{i}\alpha^{m(u_{i})-1}+a_{i-1} (i=2,\cdots,n)
\end{eqnarray}
and then
\begin{eqnarray}
a_{1}=\alpha^{n-1}-a_{n}=\alpha^{n-1}-a_{1}-\sum_{i=2}^{n}p_{i}\alpha^{m(u_{i})-1}.
\end{eqnarray}
So the system (\ref{eq2}) has a solution in the region where each $a_{i}\ (1\le i\le n)$ is as large as $\alpha^{n-1}$ and is necessarily positive. 
\end{proof}
\end{example}
\subsubsection*{Distinction rule}
Now we are ready to give a distinction rule for a general graph with the same number of lines and vertices. It is sufficient to give the rule for the graph (\ref{loop}). 

For the graph (\ref{loop}) with $n$ lines and $n$ vertices, the set $V$ of all the vertices are classified into the classes, $V=\sqcup_{i=1}^{5}V_{i}$, where
\begin{align}
V_{1}&=\{ u |\ u\ \text{is not in the loop and}\ m(v)=M(v).\}\\
V_{2}&=\{ u |\ u\ \text{is not in the loop and}\ m(v)<M(v).\}\\
V_{3}&=\{ u |\ u\ \text{is in the loop and}\ m(v)=M(v)=n.\}\\
V_{4}&=\{ u |\ u\ \text{is in the loop and}\ m(v)<M(v)=n.\}\\
V_{5}&=\{ u |\ u\ \text{is in the loop and}\ M(v)<n.\}.
\end{align}

For each vertex $u\in V_{1}\sqcup V_{2}$, we denote by $Q_{u}$ the line that connects $u$ and the parent of $u$, the unique vertex that links to $u$ by a line and is not descendant to $u$. Also for each vertex $u\in V_{2}\sqcup V_{4}$, we also denote by $T$ its descendant tree that has the vertex $v$ with $M(u)=m(v)$. 

Using the above notations, the distinction rule for the graph (\ref{loop}) is as follows.
\begin{theorem}[{\bf Distinction rule}]
The graph (\ref{loop}) is $\eta$-oriented if and only if all the following conditions are satisfied.

(i) For each $u\in V_{1}$, the assignment is of the form
\begin{eqnarray}\label{ass1}
\begin{tikzpicture}[auto,node distance=1cm,
  thick,main node/.style={circle,draw},baseline=(current bounding box.center)]
  \node[main node] (1) {$$};
  \node[above] at (1.north){$u$};
  \node[] (0)[left of=1] {$$};
  \node[] (23) [right of=1] {$\vdots$};
  \node[main node,pattern=north west lines] (24) [below of=23] {$$};
  \node[main node,pattern=north west lines] (22) [above of=23] {$$};
  \draw (1) to node[very near start,below] {$+$} node[above]{$Q_{u}$} node[very near end,below]{$$}(0);
  \draw (1) to node[very near start,right] {$\pm$} node[above]{$$} node[very near end,below]{$$}(22);
   \draw (1) to node[very near start,below] {$\pm$} node[above]{$$} node[very near end,below]{$$}(24);
\end{tikzpicture} 
\end{eqnarray} with double signs for the trees in any order, where each \begin{tikzpicture}[auto,node distance=1cm,
  thick,main node/.style={circle,draw},baseline=-3pt]
  \node[main node,pattern=north west lines] (1) {$$};
\end{tikzpicture} means a tree whose vertices are descendant to $u$. 

(ii) For each $u\in V_{2}$, the assignment is of the form
\begin{eqnarray} \label{ass2}
\begin{tikzpicture}[node distance=1cm,
  thick,main node/.style={circle,draw},baseline=(current bounding box.center)]
  \node[main node] (1) {$$};
  \node[] (0)[left of=1] {$$};  
  \node[] (22) [right of=1] {$$};   
  \node[main node,pattern=north west lines] (222) [right of=22] {$$};  
  \node[] (23) [below of=22] {$$};
  \node[main node,pattern=north west lines] (24) [below of=23] {$$};
  \node[main node,pattern=north west lines] (21) [above of=22] {$$};
  \node[right] at (21.east) {$T$};
  \node[above] at (1.north) {$u$};
  \draw (1) to node[near start,below] {$-p_{u}$} node[above]{$Q_{u}$} node[very near end,below]{$$}(0);
  \draw (1) to node[very near start,right] {$p_{u}$} node[above]{$$} node[very near end,below]{$$}(21); 
  \draw (1) to node[near start,below] {$\pm$} node[above]{$$} node[very near end,below]{$$}(222);
   \draw (1) to node[near start,below left] {$\pm$} node[above]{$$} node[very near end,below]{$$}(24);
   \draw[draw=none] (222) to node[pos=0.25]{$\cdot$}node[pos=0.5]{$\cdot$}node[pos=0.75]{$\cdot$} (24);
\end{tikzpicture}\ ,
\end{eqnarray}
with double signs in any order and $p_{u}=\pm1$.

(iii) For each vertex $u\in V_{3}$, the assignment is of the form
\begin{eqnarray}\label{general1}
\begin{tikzpicture}[node distance=1cm,
  thick,main node/.style={circle,draw},baseline=(current bounding box.center)]
  \node[] (2) at (180+90:1cm) {$$};
  \node[main node] (3) at (180+90*2:1cm) {$$};
  \node[above] at ([xshift=2pt,yshift=2pt]3.north) {$u$};
  \node[] (4) at (180+90*3:1cm) {$$};
  \draw[bend right] (2) to node[below left] {$$} node[near start,below]{$$}node[very near end,left]{$+$}(3);
  \draw[bend right] (3) to node[below right] {$$} node[very near start,left]{$+$}node[near end,above]{$$}(4);
  \node[] (33) [right of=3] {$$};
  \node[main node,pattern=north west lines] (34) [below of=33] {$$};
  \node[main node,pattern=north west lines] (32) [above of=33] {$$};
  \draw (3) to node[very near start,right] {$\pm$} node[above]{$$} node[very near end,below]{$$}(32);
   \draw (3) to node[very near start,right] {$\pm$} node[above]{$$} node[very near end,below]{$$}(34);
   \draw[draw=none] (34) to node[pos=0.25]{$\cdot$}node[pos=0.5]{$\cdot$}node[pos=0.75]{$\cdot$} (32);
\end{tikzpicture},
\end{eqnarray}
with double signs in any order, where the curves mean the lines in the loop.

(iv) For each vertex $u\in V_{4}$, the assignment is of the form
\begin{eqnarray}\label{general2}
\begin{tikzpicture}[node distance=1cm,
  thick,main node/.style={circle,draw},baseline=(current bounding box.center)]
  \node[] (2) at (180+90:1cm) {$$};
  \node[main node] (3) at (180+90*2:1cm) {$$};
  \node[] (4) at (180+90*3:1cm) {$$};
  \node[left] at (2.west) {$$};
  \node[above] at ([xshift=2pt,yshift=2pt]3.north) {$u$};
  \node[right] at (4.east) {$$};
  \draw[bend right] (2) to node[below left] {$$} node[near start,below]{$$}node[very near end,left]{$p_{u}$}(3);
  \draw[bend right] (3) to node[below right] {$$} node[very near start,left]{$p_{u}$}node[near end,above]{$$}(4);
   \node[] (32) [right of=3] {$$};   
  \node[main node,pattern=north west lines] (322) [right of=32] {$$};  
  \node[] (33) [below of=32] {$$};
  \node[] (34) [below of=32] {$$};
  \node[main node,pattern=north west lines] (35) [below of=34] {$$};
  \node[main node,pattern=north west lines] (31) [above of=32] {$$};
  \node[right] at (31.east) {$T$};
  \node[above] at (3.north) {$$};
  \draw (3) to node[very near start,right] {$-p_{u}$} node[above]{$$} node[very near end,below]{$$}(31); 
  \draw (3) to node[near start,below] {$\pm$} node[above]{$$} node[very near end,below]{$$}(322);
   \draw (3) to node[near start,below left] {$\pm$} node[above]{$$} node[very near end,below]{$$}(35);
    \draw[draw=none] (35) to node[pos=0.25]{$\cdot$}node[pos=0.5]{$\cdot$}node[pos=0.75]{$\cdot$} (322); 
\end{tikzpicture}\ ,
\end{eqnarray}
with double signs in any order and $p_{u}=\pm1$.

(v)  For each vertex $u\in V_{5}$, the assignment is of the form
\begin{eqnarray}\label{general3}
\begin{tikzpicture}[node distance=1cm,
  thick,main node/.style={circle,draw},baseline=(current bounding box.center)]
  \node[] (2) at (180+90:1cm) {$$};
  \node[main node] (3) at (180+90*2:1cm) {$$};
  \node[above] at ([xshift=2pt,yshift=2pt]3.north) {$u$};
  \node[] (4) at (180+90*3:1cm) {$$};
  \draw[bend right] (2) to node[below left] {$$} node[near start,below]{$$}node[very near end,left]{$p_{u}$}(3);
  \draw[bend right] (3) to node[below right] {$$} node[very near start,left]{$-p_{u}$}node[near end,above]{$$}(4);
  \node[] (33) [right of=3] {$$};
  \node[main node,pattern=north west lines] (34) [below of=33] {$$};
  \node[main node,pattern=north west lines] (32) [above of=33] {$$};
  \draw (3) to node[very near start,right] {$\pm$} node[above]{$$} node[very near end,below]{$$}(32);
   \draw (3) to node[very near start,right] {$\pm$} node[above]{$$} node[very near end,below]{$$}(34);
   \draw[draw=none] (34) to node[pos=0.25]{$\cdot$}node[pos=0.5]{$\cdot$}node[pos=0.75]{$\cdot$} (32);
\end{tikzpicture},
\end{eqnarray}
with double signs in any order and $p_{u}=\pm1$.\hspace{\fill}$\blacksquare$
\end{theorem}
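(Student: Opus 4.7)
The plan is to exploit the decoupling of the system $\sum_{i}a_{i}Q_{i}=\eta$ across connected components and then analyze one component, which by the previous proposition has the shape of a loop with hanging trees, by first processing the trees from their leaves inward and then closing up the loop. The uniform principle is that because $\alpha$ is taken very large, every local linear equation is dominated by its highest-power $\alpha$ term; this dominance, together with the requirement that every $a_{i}$ be strictly positive, forces the sign assignments in (i)--(v).

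For the tree step, I would carry the inductive invariant that after processing the subtree rooted at a non-loop vertex $u$, the coefficient $a_{Q_{u}}$ on the edge from $u$ to its parent is of magnitude $\Theta(\alpha^{M(u)-1})$ and positive, and the sign assignments inside that subtree agree with (i) or (ii). The step at $u$ reads
\begin{eqnarray*}
(\text{sign of parent edge at }u)\,a_{Q_{u}}+\sum_{v\,\text{child of }u}(\text{sign of }Q_{v}\text{ at }u)\,a_{Q_{v}}=\alpha^{m(u)-1}.
\end{eqnarray*}
If $M(u)=m(u)$, no child term exceeds $\alpha^{m(u)-1}$ in magnitude, so balance only forces the parent sign at $u$ to be $+$, with the child signs free, giving (\ref{ass1}). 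If $M(u)>m(u)$, a unique child edge $Q_{v}$---the one into the distinguished descendant tree $T$ containing a vertex with label $M(u)$---carries magnitude $\Theta(\alpha^{M(u)-1})$; dominance then forces its sign at $u$ to be opposite to the parent sign at $u$, while the remaining child signs stay free, giving (\ref{ass2}). This is the straight-tree lemma generalized to arbitrary branching.

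The loop step then uses the equations at the loop vertices, where at each loop vertex $u$ the contribution coming from its attached tree has magnitude $\Theta(\alpha^{M-1})$ with $M$ the maximum label in that tree. I would split into two subcases. If the globally maximal label $n$ sits at a loop vertex $u_{0}\in V_{3}$, the RHS $\alpha^{n-1}$ dominates at $u_{0}$ and forces both incident loop-line signs at $u_{0}$ to be $+$. Propagating this dominance around the loop as in the pure-loop example yields opposite signs on the two loop lines at every $V_{5}$ vertex, together with a compensating $-p_{u}/p_{u}$ pair at each $V_{4}$ vertex whose tree carries a $\Theta(\alpha^{n-1})$ term, reproducing (\ref{general1})--(\ref{general3}). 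If $n$ lies off the loop (so $V_{3}=\emptyset$ and a unique $V_{4}$ vertex is attached to the tree containing $n$), that $V_{4}$ tree injects a dominant current of magnitude $\Theta(\alpha^{n-1})$ into the loop which must circulate coherently; this forces opposite signs at every loop vertex, and the basis condition $|\det(Q_{1},\dots,Q_{n})|\neq 0$ (reducing to $|1+p_{1}q_{1}|\neq 0$ as in the pure-loop example) pins down the last sign. For the converse, substituting the stated sign patterns back into the system and back-substituting from the leaves gives each $a_{i}=\alpha^{M(u_{i})-1}+O(\alpha^{M(u_{i})-2})$, which is positive for $\alpha$ large, so the graph is $\eta$-oriented.

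The hard part, I expect, will be the loop step when $n$ sits at a $V_{3}$ vertex but one or more $V_{4}$ vertices also lie on the loop, each with trees carrying large (but sub-$n$) maximum labels. One must check that the $\Theta(\alpha^{n-1})$ currents flowing out of $u_{0}$ along the two arcs arrive at each $V_{4}$ vertex with the correct sign to cancel that tree's injection, and that the subordinate $\Theta(\alpha^{M-1})$ currents for $M<n$ impose no further constraints beyond those listed. Conceptually this is Kirchhoff-like accounting on a loop with multiple current sources, but the sign bookkeeping across the two arcs of the loop is delicate; once it is established, assembling the per-vertex rules (\ref{ass1})--(\ref{general3}) into the global statement is mechanical.
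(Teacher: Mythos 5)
Your proposal follows essentially the same route as the paper, whose proof of this theorem simply defers to the straight-tree and pure-loop computations of Examples 1 and 2 and asserts that ``a similar discussion'' handles general trees and loops with attached trees; your leaf-to-root induction with the invariant that $a_{Q_{u}}=\Theta(\alpha^{M(u)-1})>0$, followed by the loop propagation pinned down by the nonvanishing of $|\mathrm{det}(Q_{1},\cdots,Q_{n})|$, is exactly that similar discussion made explicit. The one step you flag as delicate is in fact harmless: since the maximal label $n$ occurs at a unique vertex, $V_{3}$ and $V_{4}$ cannot both be nonempty (so there is at most one $\Theta(\alpha^{n-1})$ source on the loop), and the $\Theta(\alpha^{M-1})$ injections with $M<n$ at $V_{5}$ vertices are subdominant to the loop current and are absorbed without constraining any signs, which is precisely why rule (v) leaves all tree-edge signs there free.
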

\begin{proof}
(i)(ii) A similar discussion in Example 1 shows that the coefficient for the $Q_{u}$ in (\ref{eq2}) should be as large as $\alpha^{M(u)-1}$ and then gives the distinction rule for a vertex that is not in the loop. 

(iii)(iv)(v) A similar discussion in Example 2 gives the distinction rule for a vertex in the loop. 
\end{proof}
\begin{corollary}
A graph with the same number of lines and vertices is $\eta$-oriented if and only if each connected component of the graph is of the form (\ref{loop}) and satisfies the above distinction rule.\hspace{\fill}$\blacksquare$
\end{corollary}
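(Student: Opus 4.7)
The plan is to reduce the corollary to the theorem by showing that $\eta$-orientedness decouples across the connected components of the graph. First, I would apply the earlier proposition: if the whole graph has $n$ lines and $n$ vertices and is $\eta$-oriented, then no connected component can have strictly more vertices than lines (the associated subsystem of (\ref{eq2}) would otherwise contradict the lemma about $n-1$ vectors spanning $\eta$). Combined with the identity that any connected graph with as many lines as vertices contains exactly one cycle, this forces each connected component to take the shape (\ref{loop}), i.e.\ a single loop together with tree branches rooted at the loop vertices. This secures the structural part of the ``only if'' direction.

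Next, I would note that the system (\ref{eq2}) splits as a direct sum over connected components, since vectors $Q_i$ belonging to different components are supported on disjoint subsets of $\{e_j\}$. Concretely, if a component has vertex labels $j_1<j_2<\cdots<j_k$, the projection of $\eta=(1,\alpha,\ldots,\alpha^{n-1})$ onto $\mathrm{span}\{e_{j_1},\ldots,e_{j_k}\}$ equals $(\alpha^{j_1-1},\ldots,\alpha^{j_k-1})$, which is an $\eta$-type vector of the same shape used in the theorem. Then I would invoke the theorem applied internally to the component: the arguments in the proofs of Examples 1 and 2, and of the theorem itself, depend only on comparisons of powers of $\alpha$ assigned to the vertices of the component via $m(\cdot)$ and $M(\cdot)$, not on the exponents forming the consecutive block $\{0,1,\ldots,n-1\}$. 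Hence the distinction rule (i)--(v) applies verbatim inside each component.

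Finally, the positivity requirement $a_1,\ldots,a_n>0$ is simply the conjunction of the positivity requirements within the subsystems, so the graph is $\eta$-oriented iff every component is $\eta$-oriented in the sense of the theorem, which is the content of the corollary.

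The main point to be careful about is the claim that the theorem's proof is genuinely oblivious to whether the vertex labels form a consecutive block. I would double-check this by re-inspecting the inductive step at each vertex $u$: the only quantitative input is the dominance of $\alpha^{M(u)-1}$ over $\alpha^{m(u)-1}$ whenever $M(u)>m(u)$, which is a purely order-theoretic statement about the labels of the descendants of $u$ and survives any labeling by a subset of positive integers. With that verification in hand, the decoupling argument completes the proof of the corollary.
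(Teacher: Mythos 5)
Your argument is correct and is exactly the reasoning the paper leaves implicit: the earlier proposition forces each component into the shape (\ref{loop}), the positive-combination condition decouples over components because the $Q_i$ of different components have disjoint supports, and the theorem applies componentwise. Your worry about non-consecutive vertex labels is already anticipated by the paper's lemma on $\eta_{(p_{1},\cdots,p_{n})}$ for arbitrary increasing integer sequences, so the order-theoretic inputs ($M(u)$ versus $m(u)$ dominance in powers of $\alpha$) indeed survive restriction to a subset of coordinates.
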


\section{Jeffrey-Kirwan residue of $Sp(0)$ instantons}\label{box_arrangement}
\ytableausetup{centertableaux,boxsize=1.5em}
In the previous section, we have obtained the distinction rule for bases of $\widetilde{{\bf Q}}^{BCD}$ with respect to the special vector $\eta$. In this section, we will consider applications of the rule to the $BCD$ Nekrasov partition functions. Firstly, we introduce a tableau notation that represents Weyl orbits of the poles. This tableau specifies a class of graphs whose signs are relatively fixed and then we use the distinction rule to the graph to calculate the $A_{\sigma,\eta}$. This notation resembles the famous Young diagrammatic notation for the poles of the $U(N)$ Nekrasov partition function. For simplicity, we focus on the $G=Sp(N)$ case, but one can do parallel discussions for $G=SO(N)$.

Next, we will calculate JK-residues of ``$Sp(0)$" instantons and see how our results work in the calculation. In the appendix \ref{listcalc}, we present a list of all the JK-residue of $Sp(0)$ instanton up to $k=8$. As a result, we observe
\begin{eqnarray}
Z^{Sp(0)}_{k}=\frac{(-1)^{k}}{2^{k}k!\varepsilon^{k}_{1}\varepsilon^{k}_{2}}
\end{eqnarray}
for $k\le 8$.
\subsection{Box expressions for poles of $Sp(N)$ instantons}

For $G=Sp(N)$, there are $n$ variables to be integrated when we consider the Nekrasov partition function with instanton number $k=2n$ or $k=2n+1$. We want to know the JK-residue at $\phi_{*}=(\phi_{1*},\cdots,\phi_{n*})$.

Following (\ref{Qass}), one can obtain a graph of ${\bf Q}_{*}(\phi_{*})$. 
If the graph may not contain an $\eta$-oriented subgraph with $n$ lines\footnote{A trivial example is a graph with $|{\bf Q}_{*}|<n$.}, its JK-residue vanishes. A non-trivial pole should have an $\eta$-oriented retract. 

This condition significantly restricts the position of $\phi_{*}$ for the existence of the loop in (\ref{loop}). Each $\phi_{*j} (1\le j\le n)$ is confined on $2N+4$ lattices, 
\begin{eqnarray}
\begin{cases}
\pm a_{l}+\left(\mathbb{Z}+\frac{1}{2}\right)\varepsilon_{1}+\left(\mathbb{Z}+\frac{1}{2}\right)\varepsilon_{2}\quad (1\le l\le N)\\
\mathbb{Z}\varepsilon_{1}+\mathbb{Z}\varepsilon_{2},
(\mathbb{Z}+\frac{1}{2})\varepsilon_{1}+\mathbb{Z}\varepsilon_{2},
\mathbb{Z}\varepsilon_{1}+(\mathbb{Z}+\frac{1}{2})\varepsilon_{2},
(\mathbb{Z}+\frac{1}{2})\varepsilon_{1}+(\mathbb{Z}+\frac{1}{2})\varepsilon_{2}.
\end{cases}
\end{eqnarray}
Mirroring by the overall $\mathbb{Z}_{2}$ flip, we have the $N+4$ classes of the lattices. 

Now we rewrite such a pole $\phi_{*}=(\phi_{1*},\cdots,\phi_{n*})$ by an ($N+4$)-vector of tableaux, which has $k$ boxes in total. Firstly, put a box with number $i$ at $\phi_{i*}$ and a box with character $i'$ at $-\phi_{i*}$ for each $1\le i\le n$. Then, if $k$ is odd, put one additional box at 0 with number 0. 
\begin{example}
We write\footnote{We omit null shapes in the other classes of lattices.} the pole of 4-instanton at $\phi_{1*}=\frac{1}{2}\varepsilon_{1}+\varepsilon_{2}, \phi_{2*}=\frac{1}{2}\varepsilon_{1}$ by
\begin{eqnarray}
\begin{ytableau}
\none& 1'\\
2'&2\\
1
\end{ytableau}
\end{eqnarray}
and the one of 3-instanton at $\phi_{1*}=\varepsilon_{1}$ by
\begin{eqnarray}
\begin{ytableau}
1'&0&1
\end{ytableau}\ .
\end{eqnarray}
It is possible that multiple boxes are assigned at the same position. In this case, we should improve the expression in order to see the degeneracy. We express, for example, the pole of 4-instanton at $\phi_{1*}=0, \phi_{2*}=\varepsilon_{1}$ by
\begin{eqnarray}
\begin{ytableau}
2'&\scriptstyle 1,1'&2
\end{ytableau}\ ,
\end{eqnarray}
and one can see that there are two boxes at 0.
\hspace{\fill}$\blacksquare$
\end{example}

Forgetting numbers and characters but not its degeneracy, one can get an $(N+4)$-vector of diagrams for $\phi_{*}$. Clearly, this vector\footnote{This object first appeared in \cite{Marino:2004cn} and was called a generalized Young diagram in \cite{Hollands:2010xa}. } represents a Weyl orbit $\left[\phi_{*}\right]$ of poles. 

\begin{remark}[{\bf The connection between box expressions and graphs}]
The notation merely gives another expression of poles. One can construct the graph from its box adjacency. For example, 
\begin{eqnarray}
\begin{ytableau}
\none& 1'\\
2'&2\\
1
\end{ytableau}
\longrightarrow
\begin{tikzpicture}[auto,node distance=1cm,
  thick,main node/.style={circle,draw,minimum size=3mm},baseline=-.1cm]
  \node[main node] (1) {$$};
  \node[above] at (1.north){$2'$};
  \node[main node] (2) [right of=1] {$$};
  \node[below] at (2.south){$2$};
  \node[main node] (3) [below of=1] {$$};
  \node[below] at (3.south){$1$};  
  \node[main node] (4) [above of=2] {$$};
  \node[above] at (4.north){$1'$};
  \draw (1) to node[very near start,below] {$-$} node[above]{$$} node[very near end,below]{$+$}(2);
    \draw (1) to node[very near start,left] {$-$} node[above]{$$} node[very near end,left]{$+$}(3);
    \draw (2) to node[very near start,right] {$+$} node[above]{$$} node[very near end,right]{$-$}(4);
  \end{tikzpicture}
\xrightarrow{\text{Fold}}
\begin{tikzpicture}[auto,node distance=1cm,
  thick,main node/.style={circle,draw,minimum size=3mm},baseline=-.1cm]
  \node[main node] (1) {$$};
  \node[above] at (1.north){$2$};
  \node[main node] (2) [right of=1] {$$};
  \node[above] at (2.north){$1$};
  \draw (1) to [in=150,out=210,loop] node[] {$+$}  node[near end, above]{$$}(1);
  \draw (1) to node[very near start,below] {$+$} node[above]{$$} node[very near end,below]{$+$}(2);
  \end{tikzpicture}\ .
\end{eqnarray}
\end{remark}
\begin{remark}
We can see from this box expression that some poles have their vanishing JK-residues.  For example, we consider the diagram \ytableausetup{centertableaux,boxsize=0.6em}\ydiagram{2,1+2}\ \ytableausetup{centertableaux,boxsize=1.5em}for the case of 4-instanton. Thanks to the vanishing factor $\pm(\phi_{1}\pm\phi_{2})-\varepsilon$ in the numerator, this shape gives poles whose JK-residues are zero. 
For a general pole of $k(=2n,2n+1)$-instanton, from its diagrammatic expression, we can see the number $p$ of the zero factors in the numerator of the integrand and the number $q$ of those in the denominator. If $p-q>-n$, the JK-residue of the pole vanishes.
\end{remark}
\subsection{$Sp(0)$ instanton correction}
From now on, we concentrate on the pure SYM's instanton correction with gauge group ``$Sp(0)$", or the JK-residues of the rational function
\begin{eqnarray}
\mathcal{Z}_{k=2n+\chi}^{Sp(0)}\equiv\frac{1}{2}\frac{(-1)^{n}}{2^{n-1}n!}\left(\frac{\varepsilon}{\varepsilon_{1}\varepsilon_{2}}\right)^{n}\left(-\frac{1}{2\varepsilon_{1}\varepsilon_{2}}\right)^{\chi}\frac{1}{\prod_{j=1}^{n}(4\phi_{j}^{2}-\varepsilon_{1}^{2})(4\phi_{j}^{2}-\varepsilon_{2}^{2})}\frac{\Delta(0)\Delta(\varepsilon)}{\Delta(\varepsilon_{1})\Delta(\varepsilon_{2})}\ ,
\end{eqnarray}
where
\begin{eqnarray}
\Delta(x)=\left(\prod_{i<j}((\phi_{i}-\phi_{j})^{2}-x^{2})((\phi_{i}+\phi_{j})^{2}-x^{2})\right)\left(\prod_{j=1}^{n}(\phi_{j}^{2}-x^{2})\right)^{\chi}\ .
\end{eqnarray}
Note that the integrand $\mathcal{Z}_{k}^{Sp(0)}$ is invariant under the Weyl group $\mathfrak{S}_{n}\ltimes \mathbb{Z}^{n}_{2}$ action on $(\phi_{1},\cdots,\phi_{n})$. Then we have
\begin{eqnarray}
Z^{Sp(0)}_{k}\equiv\sum_{\phi_{*}}\mathrm{JK\text{-}Res}_{\eta,\phi_{*}}\mathcal{Z}^{Sp(0)}_{k}=\sum_{[\phi_{*}]}\frac{1}{C_{[\phi_{*}]}}\sum_{\sigma\in\mathfrak{B}({\bf Q}_{*})}\frac{A_{\sigma,\eta}c_{\sigma}(\phi_{*})}{|\mathrm{det}(Q^{\sigma}_{i}\cdot e_{j})|},
\end{eqnarray}
where $C_{[\phi_{*}]}$ is the order of the stabilizer group at $\phi_{*}$ and 
\begin{align}
A_{\sigma,\eta}&\equiv\sum_{g\in S_{n}\ltimes\mathbb{Z}^{n}_{2}}{\bf 1}_{\mathrm{Cone}(g\cdot\sigma)}(\eta)\in\mathbb{N},\\
\mathcal{Z}^{Sp(0)}_{k}\in\hat{R}_{{\bf Q}_{*},\phi_{*}}&\mapsto\sum_{\sigma\in\mathfrak{B}({\bf Q}_{*})}c_{\sigma}(\phi_{*})f_{\sigma}\in S_{{\bf Q}_{*},\phi_{*}}.\label{extraction}
\end{align}
Here the map in the last is the part of the definition of the JK-residue (\ref{JKJK}).

A Weyl orbit $[\phi_{*}]$ that has non-vanishing JK-residue should be expressed as a 4-vector of diagrams with $k$ boxes in total. Each component belongs to each of the four lattices 
\begin{eqnarray}
\mathbb{Z}\varepsilon_{1}+\mathbb{Z}\varepsilon_{2}, \left(\mathbb{Z}+\frac{1}{2}\right)\varepsilon_{1}+\mathbb{Z}\varepsilon_{2}, \mathbb{Z}\varepsilon_{1}+\left(\mathbb{Z}+\frac{1}{2}\right)\varepsilon_{2}, \left(\mathbb{Z}+\frac{1}{2}\right)\varepsilon_{1}+\left(\mathbb{Z}+\frac{1}{2}\right)\varepsilon_{2}.
\end{eqnarray}
We do the extraction (\ref{extraction}) for each non-trivial orbit and calculate the $A_{\sigma,\eta}$ for each basis $\sigma$ with $c_{\sigma}(\phi_{*})\neq0$ by applying the distinction rule to the set. Here we set $\eta=(1,\alpha,\cdots,\alpha^{n-1})$ for a fixed sufficient large $\alpha$ and then the distinction rule determines the $A_{\sigma,\eta}$.
 
Also, we can determine the factor $|\mathrm{det}( Q^{\sigma}_{i}\cdot e_{j})|$. For a given class of subgraphs, each connected component contributes to this factor by $1$ if its loop is of the form \!\!\!\!\!
\begin{tikzpicture}[node distance=.5cm,
  thick,main node/.style={circle,draw,minimum size=1mm},baseline=-.1cm]
  \node[main node] (1) {$$};
  \draw (1) to [in=150,out=210,loop] node {$$}  node[near end, above]{$$}(1);
  \end{tikzpicture},
or by $2$ otherwise.

What remains to determine $Z^{Sp(0)}_{k,\eta}$ is $c_{\sigma}(\phi_{*})$, the decomposition coefficients of $\mathcal{Z}^{Sp(0)}_{k}$ into the basic fractions. {\it At the present stage, however, we have to calculate them explicitly.} 
\begin{example}[{\bf An explicit JK-residue calculation}]
Here we give an example of the JK-residue calculation in order to see the algorithm more clearly.

Let us consider the pole of $\mathcal{Z}^{Sp(0)}_{6}$ at $\phi_{*1}=\phi_{*2}=\frac{1}{2}\varepsilon_{1}, \phi_{*3}=\frac{3}{2}\varepsilon_{1}$. Then its orbit is expressed by \ytableausetup{centertableaux,boxsize=.6em}\ydiagram[*(gray)]{1+2}*[*(white)]{4}\ytableausetup{centertableaux,boxsize=1.5em}\ , where each gray box represents two boxes. We see $C_{[\phi_{*}]}=2$ from its overlap. 

From the numerator, we count $p=2$ for $\phi_{*1}-\phi_{*2}=0$. On the other hand, from the denominator, we count $q=5$ from $\phi_{*1}-\frac{1}{2}\varepsilon_{1}=\phi_{*2}-\frac{1}{2}\varepsilon_{1}=\phi_{*1}+\phi_{*2}-\varepsilon_{1}=0$ and $\phi_{*3}-\phi_{*1}-\varepsilon_{1}=\phi_{*3}-\phi_{*2}-\varepsilon_{1}=0$. So this orbit may have non-zero JK-residue for $2-5=-3=-n$. Especially, the total degree of zero factors at $\phi_{*}$ is $-3$, and then we expand the remain of the integrand at the order 0, which gives
\begin{eqnarray}
c=-\frac{(2\varepsilon_{1}+\varepsilon_{2})(3\varepsilon_{1}+\varepsilon_{2})}{27648\varepsilon_{1}^{6}(\varepsilon_{1}-\varepsilon_{2})^{3}(2\varepsilon_{1}-\varepsilon_{2})^{2}(3\varepsilon_{1}-\varepsilon_{2})\varepsilon_{2}^{2}}.
\end{eqnarray}

The corresponding class of graphs for the orbit $[\phi_{*}]$ is
\begin{eqnarray}
\begin{tikzpicture}[node distance=2cm,
  thick,main node/.style={circle,draw,minimum size=6mm},baseline=(current bounding box.center)]
  \node[main node] (1) at (90:1cm) {$$};
  \node[main node] (2) at (210:1cm) {$$};
  \node[main node] (3) at (330:1cm) {$$};
  \draw (1) to node[very near start,left] {$-p_{1}$} node[near end,left]{$-p_{2}$}(2);
  \draw (1) to node[very near start,right] {$p_{1}$} node[near end,right]{$p_{3}$}(3);
  \draw (3) to node[very near start,below] {$p_{3}$} node[very near end,below]{$p_{2}$}(2);
  \draw (1) to [in=60,out=120,loop] node[above] {$-p_{1}$}  node[near end, above]{$$}(1);
  \draw (2) to [in=180,out=240,loop] node[left] {$-p_{2}$}  node[near end, above]{$$}(2);
\end{tikzpicture}
\end{eqnarray} 
where $p_{1},p_{2}$ and $p_{3}$ are $\pm1$. The Weyl group action ends up with flipping these signs and permutating the assigned numbers. 

Note that the zero factors in the numerator become
\begin{eqnarray}
(\phi_{1}-\phi_{2})^{2}=-\left((\phi_{1}-\frac{\varepsilon_{1}}{2})-(\phi_{2}-\frac{\varepsilon_{1}}{2})\right)\left((\phi_{3}-\phi_{1}-\varepsilon_{1})-(\phi_{3}-\phi_{2}-\varepsilon_{1})\right)\ ,
\end{eqnarray}
then we have a decomposition of the zero factors into some basic fractions
\begin{eqnarray}
-f_{\sigma_{1}}+f_{\sigma_{2}}+f_{\sigma_{3}}-f_{\sigma_{4}} 
\end{eqnarray}
where 
\begin{eqnarray}
\sigma_{1}=\!\!\!\!\!\!\!
\begin{tikzpicture}[node distance=2cm,
  thick,main node/.style={circle,draw,minimum size=6mm},baseline=(current bounding box.center)]
  \node[main node] (1) at (90:1cm) {$$};
  \node[main node] (2) at (210:1cm) {$$};
  \node[main node] (3) at (330:1cm) {$$};
  \draw (1) to node[very near start,left] {$-p_{1}$} node[near end,left]{$-p_{2}$}(2);
  \draw (3) to node[very near start,below] {$p_{3}$} node[very near end,below]{$p_{2}$}(2);
  \draw (2) to [in=180,out=240,loop] node[left] {$-p_{2}$}  node[near end, above]{$$}(2);
\end{tikzpicture}\ ,
\sigma_{2}=\!\!\!\!\!\!\!
\begin{tikzpicture}[node distance=2cm,
  thick,main node/.style={circle,draw,minimum size=6mm},baseline=(current bounding box.center)]
  \node[main node] (1) at (90:1cm) {$$};
  \node[main node] (2) at (210:1cm) {$$};
  \node[main node] (3) at (330:1cm) {$$};
  \draw (1) to node[very near start,left] {$-p_{1}$} node[near end,left]{$-p_{2}$}(2);
  \draw (1) to node[very near start,right] {$p_{1}$} node[near end,right]{$p_{3}$}(3);
  \draw (2) to [in=180,out=240,loop] node[left] {$-p_{2}$}  node[near end, above]{$$}(2);
\end{tikzpicture}\ ,
\sigma_{3}=\!\!\!\!\!\!\!
\begin{tikzpicture}[node distance=2cm,
  thick,main node/.style={circle,draw,minimum size=6mm},baseline=(current bounding box.center)]
  \node[main node] (1) at (90:1cm) {$$};
  \node[main node] (2) at (210:1cm) {$$};
  \node[main node] (3) at (330:1cm) {$$};
  \draw (1) to node[very near start,left] {$-p_{1}$} node[near end,left]{$-p_{2}$}(2);
  \draw (3) to node[very near start,below] {$p_{3}$} node[very near end,below]{$p_{2}$}(2);
  \draw (1) to [in=60,out=120,loop] node[above] {$-p_{1}$}  node[near end, above]{$$}(1);
\end{tikzpicture}\ ,
\sigma_{4}=\!\!\!\!\!\!\!
\begin{tikzpicture}[node distance=2cm,
  thick,main node/.style={circle,draw,minimum size=6mm},baseline=(current bounding box.center)]
  \node[main node] (1) at (90:1cm) {$$};
  \node[main node] (2) at (210:1cm) {$$};
  \node[main node] (3) at (330:1cm) {$$};
  \draw (1) to node[very near start,left] {$-p_{1}$} node[near end,left]{$-p_{2}$}(2);
  \draw (1) to node[very near start,right] {$p_{1}$} node[near end,right]{$p_{3}$}(3);
  \draw (1) to [in=60,out=120,loop] node[above] {$-p_{1}$}  node[near end, above]{$$}(1);
\end{tikzpicture}\ .
\end{eqnarray}
Using the distinction rule, we have 
\begin{eqnarray}
A_{\sigma_{1},\eta}=A_{\sigma_{4},\eta}=6,\ A_{\sigma_{2},\eta}=A_{\sigma_{3},\eta}=3.
\end{eqnarray}
Thus we get the sum of the JK-residues over the orbit $[\phi_{*}]$
\begin{align}
\sum_{\psi_{*}\in[\phi_{*}]}\mathrm{JK\text{-}Res}_{\eta,\psi_{*}}\mathcal{Z}^{Sp(0)}_{6}&=\frac{c}{2}\left(-6+3+3-6\right)\nonumber\\&=\frac{(2\varepsilon_{1}+\varepsilon_{2})(3\varepsilon_{1}+\varepsilon_{2})}{9216\varepsilon_{1}^{6}(\varepsilon_{1}-\varepsilon_{2})^{3}(2\varepsilon_{1}-\varepsilon_{2})^{2}(3\varepsilon_{1}-\varepsilon_{2})\varepsilon_{2}^{2}}\ .
\end{align}
\hspace{\fill}$\blacksquare$
\end{example}
\begin{remark}
In general, the calculation becomes easier if one considers a pole $\phi_{*}\in\mathbb{C}^{n}$ where the integrand has no terms below the degree $-n$ at $\phi_{*}$, because one needs only the zeroth term of the Taylor series of the nonzero factors.
\end{remark}
We can do similar calculations for other Weyl orbits. We give a list of the explicit residues for $k\le 8$ in the appendix \ref{listcalc}. Finally, we observed the following property:
\begin{observation}
\begin{eqnarray}\label{exponential}
Z^{Sp(0)}_{k,\eta}=\frac{(-1)^{k}}{2^{k}k!\varepsilon^{k}_{1}\varepsilon^{k}_{2}}
\end{eqnarray}
at least for $k\le 8$.
\hspace{\fill}$\blacksquare$
\end{observation}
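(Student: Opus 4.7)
The plan is an algorithmic verification, case by case for each $k\le 8$, using the machinery assembled in the preceding sections; since the statement is a finite assertion rather than a uniform identity in $k$, I do not try to give a closed-form argument but rather a systematic enumeration.

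First, for each $k=1,2,\dots,8$ I enumerate the Weyl orbits of poles of $\mathcal{Z}^{Sp(0)}_{k}$. By the box description of Section \ref{box_arrangement}, each orbit corresponds to a $4$-vector of diagrams (one per lattice class among $\mathbb{Z}\varepsilon_{1}+\mathbb{Z}\varepsilon_{2}$, $(\mathbb{Z}+\tfrac12)\varepsilon_{1}+\mathbb{Z}\varepsilon_{2}$, $\mathbb{Z}\varepsilon_{1}+(\mathbb{Z}+\tfrac12)\varepsilon_{2}$, $(\mathbb{Z}+\tfrac12)\varepsilon_{1}+(\mathbb{Z}+\tfrac12)\varepsilon_{2}$) carrying $k$ boxes in total. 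I prune this list immediately with the criterion $p-q>-n$ stated just before the observation: orbits with too many vanishing factors in the numerator relative to the denominator contribute zero and can be discarded.

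Second, for each surviving orbit $[\phi_{*}]$ I translate the box picture into the graph of ${\bf Q}_{*}$, read off the stabiliser order $C_{[\phi_{*}]}$ from the multiplicities of boxes, and decompose the integrand locally as $\sum_{\sigma}c_{\sigma}(\phi_{*})f_{\sigma}\in S_{{\bf Q}_{*},\phi_{*}}$ by Taylor-expanding the nonzero factors to the required order. Generically the total degree at $\phi_{*}$ is exactly $-n$, so only the zeroth-order Taylor coefficients enter, exactly as in the worked $k=6$ example just before the observation; the more delicate orbits with degree strictly below $-n$ require expanding to a higher order, but the procedure is the same.

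Third, for each basis $\sigma$ with $c_{\sigma}(\phi_{*})\neq 0$ I compute $A_{\sigma,\eta}$ by applying the distinction rule of Section \ref{section3} to every Weyl image of $\sigma$, and I record $|\mathrm{det}(Q^{\sigma}_{i}\cdot e_{j})|$, which by the loop-structure observation equals $1$ when the loop is a single $\pm$-self-loop and $2$ otherwise. Substituting into formula (\ref{Zeta}) gives the contribution of $[\phi_{*}]$; summing over orbits yields $Z^{Sp(0)}_{k,\eta}$, which is then compared with $(-1)^{k}/(2^{k}k!\varepsilon_{1}^{k}\varepsilon_{2}^{k})$. The tabulated residues of appendix \ref{listcalc} provide the explicit per-orbit data for $k\le 8$.

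The main obstacle is bookkeeping rather than conceptual. The number of orbits and the combinatorial complexity of their graphs grow quickly with $k$, and at $k=7,8$ one has to track numerous cancellations among contributions whose $A_{\sigma,\eta}$ come with opposite signs, so a systematic, preferably machine-assisted, enumeration is essentially forced. A conceptual proof valid for all $k$ — presumably via an algebraic reformulation analogous to Nakajima's Jack-polynomial description of the $U(1)$ case, which would explain both the exponential shape and the factor $2^{-k}$ — lies beyond the reach of this algorithmic verification and is the motivating open problem flagged in the introduction.
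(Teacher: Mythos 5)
Your proposal is correct and follows essentially the same route as the paper: the observation is established there purely by the algorithmic, orbit-by-orbit evaluation of formula (\ref{Zeta}) — box enumeration, the $p-q>-n$ pruning, extraction of the $c_{\sigma}$, the distinction rule for $A_{\sigma,\eta}$, and the determinant factor — with the resulting per-orbit residues tabulated in appendix \ref{listcalc} and summed. You also correctly flag the exceptional orbits (such as (\ref{doublepole})) needing higher-order Taylor coefficients and the absence of a uniform-in-$k$ argument, both of which match the paper's treatment.
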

This observation seems to be generalized for any $k$ and looks to be an analogue of the combinational formula $Z^{U(1)}_{k}=(k!\varepsilon^{k}_{1}\varepsilon^{k}_{2})^{-1}$. 

\section{Discussions}\label{discuss}
In this paper, we have made an algorithm to calculate the $BCD$-type Nekrasov partition functions more easily and then observed (\ref{exponential}) for the $Sp(0)$ case for small instanton numbers. One can determine the residue of a Weyl orbit of poles, or a diagram without considering awkward cancellations of iterated residues, but with the distinction rule we have proposed. We hope that this algorithm will help readers to calculate some multi-contour integrals. 

However, we have calculated residues mainly by substituting positions of poles for the nonzero factors in the integrand. We have calculated the JK-residues of 106 Weyl orbits for the $Sp(0)$ 8-instanton correction and found that the 104 orbits contributes to the correction. For larger instanton numbers, one should encounter more orbits. Then it is desirable to express them only by some manipulations of box diagrams, like the formula with Young diagrams for the $U(N)$ case. Also, we have observed the two nontrivial orbits (\ref{nontrivpole1}) and (\ref{nontrivpole2}) whose joint residues vanish. We do not confirm the set of all the orbits with nonzero joint residues. One way to tackle with this problem is to construct a recursion relation among the fixed points, or to understand how the residues changes when one adds boxes to the poles. In this regard, our distinction rule may play a role to calculate the ratio of two $A_{\sigma,\eta}$ through a graphical consideration. However, the relation is not clear because one sometimes needs to differentiate the nonzero part of the integrand at a pole to evaluate its JK-residue. For example, we did such a differentiation when we evaluated (\ref{doublepole}). This differentiation makes it difficult to rephrase the recursion relation in terms of box manipulations.

This simple form of the sum for the $U(1)$ Nekrasov partition function can be understood from the perspective of Hilbert schemes of points on $\mathbb{C}^{2}$\cite{nakajimalectures,Nakajima:2003pg}. Roughly, this property is described as
\begin{eqnarray}
\int_{\mathbb{C}^{2k}/\mathfrak{S}_{k}} 1=\frac{1}{k!\varepsilon^{k}_{1}\varepsilon^{k}_{2}}.
\end{eqnarray}
Moreover, the residue formula for the $U(1)$ case can be expressed as the combinational formula of Jack polynomials. Maybe there are a similar background behinds the $Sp(0)$ case. Are there an analog $X_{k}$ which gives
\begin{eqnarray}
\int_{X_{k}} 1=\frac{(-1)^{k}}{2^{k}k!\varepsilon^{k}_{1}\varepsilon^{k}_{2}}
\end{eqnarray} 
and corresponding orthogonal polynomials? If one can show it explicitly, then it will enable ones to explore how outer automorphisms of Lie algebras act on instantons with different gauge groups, which may lead to deeper understandings of gauge theories with exceptional gauge groups because any exceptional Lie algebras relate to some classical Lie algebras through such automorphisms. 

Physically, the $U(N)$ Nekrasov partition function can be realized from an D0-D4 system. Similarly, an D0-O4 system shall realize the $Sp(0)$ situation. It may be interesting to understand the $Z^{Sp(0)}_{k}$ from the string theory. Moreover, the AGT correspondence suggests that some symmetries arising from 2d CFT act on poles of the Nekrasov partition functions. One may construct the AGT proof for $BCD$ cases through understanding their pole distributions. 

\section*{Acknowledgements}
The author thanks Yutaka Matsuo, Kantaro Ohmori and Futoshi Okazawa for many helpful discussions. SN is also grateful to Yuji Tachikawa for guiding SN to the notion of the Jeffrey-Kirwan residue as well as for helpful discussions.

\appendix

\section{A list of the Jeffrey-Kirwan residues of the $Sp(0)$ Nekrasov partition function}\label{listcalc}
\ytableausetup{centertableaux,boxsize=.6em}
Here we give a list of all the JK-residues of the $Sp(0)$ Nekrasov partition functions $Z^{Sp(0)}_{k}$ with instanton number $k\le 8$. We denote the sum of JK-residues over a Weyl orbit $[\phi_{*}]$ by $Z^{Sp(0)}_{k}([\phi_{*}])$. We denote two boxes by \ydiagram[*(gray)]{1}, and there boxes by \begin{ytableau}\times\end{ytableau}. We omit the poles whose residues vanishes trivially. From the list, one can get $Z^{Sp(0)}_{k}=\frac{(-1)^{k}}{2^{k}k!\varepsilon^{k}_{1}\varepsilon^{k}_{2}}$ for $k\le 8$.
\ytableausetup{centertableaux,boxsize=.6em}
\subsection{$k=0,1,2,3$}
\begin{align}
Z^{Sp(0)}_{0}\left(\cdot\right)&=1\\
Z^{Sp(0)}_{1}\left(\ydiagram{1}\right)&=-\frac{1}{2\varepsilon_{1}\varepsilon_{2}}\\
Z^{Sp(0)}_{2}\left(\ydiagram{2}\right)&=-\frac{1}{8\varepsilon^{2}_{1}\varepsilon_{2}(\varepsilon_{1}-\varepsilon_{2})}\\
Z^{Sp(0)}_{2}\left(\ \ydiagram{1,1}\ \right)&=-\frac{1}{8\varepsilon^{}_{1}\varepsilon^{2}_{2}(\varepsilon_{1}-\varepsilon_{2})}\\
Z^{Sp(0)}_{3}\left(\ydiagram{3}\right)&=-\frac{1}{24\varepsilon^{3}_{1}\varepsilon^{}_{2}(\varepsilon_{1}-\varepsilon_{2})(2\varepsilon_{1}-\varepsilon_{2})}\\
Z^{Sp(0)}_{3}\left(\ \ydiagram{1,1,1}\ \right)&=-\frac{1}{24\varepsilon^{}_{1}\varepsilon^{3}_{2}(\varepsilon_{1}-\varepsilon_{2})(\varepsilon_{1}-2\varepsilon_{2})}\\
Z^{Sp(0)}_{3}\left(\ \ydiagram{2}\ ,\ydiagram{1}\ \right)&=\frac{3\varepsilon_{1}+2\varepsilon_{2}}{48\varepsilon^{3}_{1}\varepsilon^{2}_{2}(\varepsilon_{1}-\varepsilon_{2})(\varepsilon_{1}-2\varepsilon_{2})}\\
Z^{Sp(0)}_{3}\left(\ \ydiagram{1,1}\ ,\ydiagram{1}\ \right)&=\frac{2\varepsilon_{1}+3\varepsilon_{2}}{48\varepsilon^{2}_{1}\varepsilon^{3}_{2}(\varepsilon_{1}-\varepsilon_{2})(2\varepsilon_{1}-\varepsilon_{2})}
\end{align}
\subsection{$k=4$}
\begin{align}
Z^{Sp(0)}_{4}\left(\ \ydiagram{1,1,1,1}\ \right)&=\frac{1}{128 \varepsilon_{1} (\varepsilon_{1}-3 \varepsilon_{2}) (\varepsilon_{1}-2 \varepsilon_{2})
   (\varepsilon_{1}-\varepsilon_{2}) \varepsilon_{2}^4}\\
Z^{Sp(0)}_{4}\left(\ \ydiagram{1+2,2}\ \right)&=\frac{2 \varepsilon_{1}+\varepsilon_{2}}{64 \varepsilon_{1}^2 (\varepsilon_{1}-2 \varepsilon_{2})
   (\varepsilon_{1}-\varepsilon_{2})^2 (2 \varepsilon_{1}-\varepsilon_{2}) (3 \varepsilon_{1}-\varepsilon_{2})
   \varepsilon_{2}^2}\\
Z^{Sp(0)}_{4}\left(\ \ydiagram{2}\ ,\ydiagram{1,1}\ \right)&=-\frac{3 (\varepsilon_{1}+\varepsilon_{2})^2}{64 \varepsilon_{1}^3 (\varepsilon_{1}-3 \varepsilon_{2})
   (\varepsilon_{1}-\varepsilon_{2})^2 (3 \varepsilon_{1}-\varepsilon_{2}) \varepsilon_{2}^3}\\
Z^{Sp(0)}_{4}\left(\ \ydiagram{1+1,2,1}\ \right)&=\frac{-\varepsilon_{1}-2 \varepsilon_{2}}{64 \varepsilon_{1}^2 (\varepsilon_{1}-3 \varepsilon_{2})
   (\varepsilon_{1}-2 \varepsilon_{2}) (\varepsilon_{1}-\varepsilon_{2})^2 (2 \varepsilon_{1}-\varepsilon_{2})
   \varepsilon_{2}^2}\\
Z^{Sp(0)}_{4}\left(\ \ydiagram{2,2}\ \right)&=\frac{1}{32 \varepsilon_{1}^2 (\varepsilon_{1}-2 \varepsilon_{2}) (\varepsilon_{1}-\varepsilon_{2})^2 (2
   \varepsilon_{1}-\varepsilon_{2}) \varepsilon_{2}^2}\\
Z^{Sp(0)}_{4}\left(\ \ydiagram[*(gray)]{0,1}*[*(white)]{1,1,1}\ \right)&=\frac{-\varepsilon_{1}-2 \varepsilon_{2}}{192 \varepsilon_{1}^2 (\varepsilon_{1}-2 \varepsilon_{2})
   (\varepsilon_{1}-\varepsilon_{2})^2 \varepsilon_{2}^4}\\
Z^{Sp(0)}_{4}\left(\ \ydiagram{4}\ \right)&=-\frac{1}{128 \varepsilon_{1}^4 (\varepsilon_{1}-\varepsilon_{2}) (2 \varepsilon_{1}-\varepsilon_{2}) (3
   \varepsilon_{1}-\varepsilon_{2}) \varepsilon_{2}}\\
Z^{Sp(0)}_{4}\left(\ \ydiagram[*(gray)]{1+1}*[*(white)]{3}\ \right)&=\frac{2 \varepsilon_{1}+\varepsilon_{2}}{192 \varepsilon_{1}^4 (\varepsilon_{1}-\varepsilon_{2})^2 (2
   \varepsilon_{1}-\varepsilon_{2}) \varepsilon_{2}^2}
\end{align}
\subsection{$k=5$}
\begin{align}
 Z^{Sp(0)}_{5}\left(\ \ydiagram{5}\ \right)&=-\frac{1}{640 \varepsilon_{1}^5 (\varepsilon_{1}-\varepsilon_{2}) (2 \varepsilon_{1}-\varepsilon_{2}) (3
   \varepsilon_{1}-\varepsilon_{2}) (4 \varepsilon_{1}-\varepsilon_{2}) \varepsilon_{2}} \\
 Z^{Sp(0)}_{5}\left(\ \ydiagram{4}\ ,\ydiagram{1}\ \right)&=\frac{3 (5 \varepsilon_{1}+2 \varepsilon_{2})}{1280 \varepsilon_{1}^5 (3 \varepsilon_{1}-2 \varepsilon_{2})
   (\varepsilon_{1}-\varepsilon_{2}) (2 \varepsilon_{1}-\varepsilon_{2}) (3 \varepsilon_{1}-\varepsilon_{2})
   \varepsilon_{2}^2} \\
 Z^{Sp(0)}_{5}\left(\ \ydiagram{3}\ ,\ydiagram{2}\ \right)&=\frac{(3 \varepsilon_{1}+2 \varepsilon_{2}) (5 \varepsilon_{1}+2 \varepsilon_{2})}{960 \varepsilon_{1}^5
   (\varepsilon_{1}-2 \varepsilon_{2}) (3 \varepsilon_{1}-2 \varepsilon_{2}) (\varepsilon_{1}-\varepsilon_{2})^2 (2
   \varepsilon_{1}-\varepsilon_{2}) \varepsilon_{2}^2} \\
 Z^{Sp(0)}_{5}\left(\ \ydiagram{2+1,3,1}\ \right)&=\frac{\varepsilon_{1}+2 \varepsilon_{2}}{64 \varepsilon_{1}^3 (2 \varepsilon_{1}-3 \varepsilon_{2})
   (\varepsilon_{1}-2 \varepsilon_{2}) (3 \varepsilon_{1}-2 \varepsilon_{2}) (\varepsilon_{1}-\varepsilon_{2})^2 (2
   \varepsilon_{1}-\varepsilon_{2}) (3 \varepsilon_{1}-\varepsilon_{2}) \varepsilon_{2}} \\
 Z^{Sp(0)}_{5}\left(\ \ydiagram{3}\ ,\ydiagram{1,1}\ \right)&=-\frac{(2 \varepsilon_{1}+\varepsilon_{2}) (4 \varepsilon_{1}+3 \varepsilon_{2})}{192 \varepsilon_{1}^4 (2
   \varepsilon_{1}-3 \varepsilon_{2}) (\varepsilon_{1}-\varepsilon_{2})^2 (2 \varepsilon_{1}-\varepsilon_{2}) (4
   \varepsilon_{1}-\varepsilon_{2}) \varepsilon_{2}^3} \\
 Z^{Sp(0)}_{5}\left(\ \ydiagram{1+1,2,1}\ ,\ydiagram{1}\ \right)&=\frac{(3 \varepsilon_{1}+2 \varepsilon_{2}) (\varepsilon_{1}+4 \varepsilon_{2})}{128 \varepsilon_{1}^3
   (\varepsilon_{1}-4 \varepsilon_{2}) (\varepsilon_{1}-3 \varepsilon_{2}) (3 \varepsilon_{1}-2 \varepsilon_{2})
   (\varepsilon_{1}-\varepsilon_{2})^2 (2 \varepsilon_{1}-\varepsilon_{2}) \varepsilon_{2}^3} \\
 Z^{Sp(0)}_{5}\left(\ \ydiagram{2}\ ,\ydiagram{1,1,1}\ \right)&=\frac{(\varepsilon_{1}+2 \varepsilon_{2}) (3 \varepsilon_{1}+4 \varepsilon_{2})}{192 \varepsilon_{1}^3
   (\varepsilon_{1}-4 \varepsilon_{2}) (\varepsilon_{1}-2 \varepsilon_{2}) (3 \varepsilon_{1}-2 \varepsilon_{2})
   (\varepsilon_{1}-\varepsilon_{2})^2 \varepsilon_{2}^4} \\
 Z^{Sp(0)}_{5}\left(\ \ydiagram{2}\ ,\ydiagram{1,1}\ ,\ydiagram{1}\ \right)&=-\frac{(\varepsilon_{1}+\varepsilon_{2})^2 (3 \varepsilon_{1}+2 \varepsilon_{2}) (2 \varepsilon_{1}+3
   \varepsilon_{2})}{384 \varepsilon_{1}^4 (\varepsilon_{1}-3 \varepsilon_{2}) (\varepsilon_{1}-2 \varepsilon_{2})
   (\varepsilon_{1}-\varepsilon_{2})^2 (2 \varepsilon_{1}-\varepsilon_{2}) (3 \varepsilon_{1}-\varepsilon_{2})
   \varepsilon_{2}^4} \\
Z^{Sp(0)}_{5}\left(\ \ydiagram{1+2,1+1,2}\ \right)&=-\frac{2 \varepsilon_{1}+\varepsilon_{2}}{64 \varepsilon_{1} (\varepsilon_{1}-3 \varepsilon_{2}) (2
   \varepsilon_{1}-3 \varepsilon_{2}) (\varepsilon_{1}-2 \varepsilon_{2}) (3 \varepsilon_{1}-2 \varepsilon_{2})
   (\varepsilon_{1}-\varepsilon_{2})^2 (2 \varepsilon_{1}-\varepsilon_{2}) \varepsilon_{2}^3} \\
 Z^{Sp(0)}_{5}\left(\ \ydiagram{2,2}\ ,\ydiagram{1}\ \right)&=-\frac{3 (\varepsilon_{1}+\varepsilon_{2})^2}{64 \varepsilon_{1}^3 (\varepsilon_{1}-3 \varepsilon_{2})
   (\varepsilon_{1}-2 \varepsilon_{2}) (\varepsilon_{1}-\varepsilon_{2})^2 (2 \varepsilon_{1}-\varepsilon_{2}) (3
   \varepsilon_{1}-\varepsilon_{2}) \varepsilon_{2}^3} \\
 Z^{Sp(0)}_{5}\left(\ \ydiagram{1+2,2}\ ,\ydiagram{1}\ \right)&=\frac{(4 \varepsilon_{1}+\varepsilon_{2}) (2 \varepsilon_{1}+3 \varepsilon_{2})}{128 \varepsilon_{1}^3 (2
   \varepsilon_{1}-3 \varepsilon_{2}) (\varepsilon_{1}-2 \varepsilon_{2}) (\varepsilon_{1}-\varepsilon_{2})^2 (3
   \varepsilon_{1}-\varepsilon_{2}) (4 \varepsilon_{1}-\varepsilon_{2}) \varepsilon_{2}^3} \\
 Z^{Sp(0)}_{5}\left(\ \ydiagram{1,1,1,1,1}\ \right)&=-\frac{1}{640 \varepsilon_{1} (\varepsilon_{1}-4 \varepsilon_{2}) (\varepsilon_{1}-3 \varepsilon_{2})
   (\varepsilon_{1}-2 \varepsilon_{2}) (\varepsilon_{1}-\varepsilon_{2}) \varepsilon_{2}^5} \\
 Z^{Sp(0)}_{5}\left(\ \ydiagram{1,1,1,1}\ ,\ydiagram{1}\ \right)&=\frac{3 (2 \varepsilon_{1}+5 \varepsilon_{2})}{1280 \varepsilon_{1}^2 (\varepsilon_{1}-3 \varepsilon_{2}) (2
   \varepsilon_{1}-3 \varepsilon_{2}) (\varepsilon_{1}-2 \varepsilon_{2}) (\varepsilon_{1}-\varepsilon_{2})
   \varepsilon_{2}^5} \\
 Z^{Sp(0)}_{5}\left(\ \ydiagram{1,1,1}\ ,\ydiagram{1,1}\ \right)&=-\frac{(2 \varepsilon_{1}+3 \varepsilon_{2}) (2 \varepsilon_{1}+5 \varepsilon_{2})}{960 \varepsilon_{1}^2 (2
   \varepsilon_{1}-3 \varepsilon_{2}) (\varepsilon_{1}-2 \varepsilon_{2}) (\varepsilon_{1}-\varepsilon_{2})^2 (2
   \varepsilon_{1}-\varepsilon_{2}) \varepsilon_{2}^5}
\end{align}
\subsection{$k=6$}\small
\begin{align}
Z^{Sp(0)}_{6}\left(\ \ydiagram{1,1,1,1,1,1}\ \right)&=\frac{1}{4608 \varepsilon_{1} (\varepsilon_{1}-5 \varepsilon_{2}) (\varepsilon_{1}-4 \varepsilon_{2})
   (\varepsilon_{1}-3 \varepsilon_{2}) (\varepsilon_{1}-2 \varepsilon_{2}) (\varepsilon_{1}-\varepsilon_{2})
   \varepsilon_{2}^6}\\
Z^{Sp(0)}_{6}\left(\ \ydiagram[*(gray)]{0,1,1}*[*(white)]{1,1,1,1}\ \right)&=\frac{(\varepsilon_{1}+2 \varepsilon_{2}) (\varepsilon_{1}+3 \varepsilon_{2})}{9216 \varepsilon_{1}^2
   (\varepsilon_{1}-3 \varepsilon_{2}) (\varepsilon_{1}-2 \varepsilon_{2})^2 (\varepsilon_{1}-\varepsilon_{2})^3
   \varepsilon_{2}^6}\\
Z^{Sp(0)}_{6}\left(\ \ydiagram{1+2,1+1,1+1,2}\ \right)&=\frac{5 (2 \varepsilon_{1}+\varepsilon_{2})}{6912 \varepsilon_{1} (\varepsilon_{1}-4 \varepsilon_{2})
   (\varepsilon_{1}-3 \varepsilon_{2}) (2 \varepsilon_{1}-3 \varepsilon_{2}) (\varepsilon_{1}-2 \varepsilon_{2})^2
   (\varepsilon_{1}-\varepsilon_{2})^3 \varepsilon_{2}^4}\\
Z^{Sp(0)}_{6}\left(\ \ydiagram{1,1,1,1}\ ,\ydiagram{2}\ \right)&=-\frac{(\varepsilon_{1}+3 \varepsilon_{2}) (3 \varepsilon_{1}+5 \varepsilon_{2})}{3072 \varepsilon_{1}^3
   (\varepsilon_{1}-5 \varepsilon_{2}) (\varepsilon_{1}-3 \varepsilon_{2}) (\varepsilon_{1}-2 \varepsilon_{2})
   (\varepsilon_{1}-\varepsilon_{2})^3 \varepsilon_{2}^5}\\
Z^{Sp(0)}_{6}\left(\ \ydiagram{2+1,1+2,2,1}\ \right)&=-\frac{(2 \varepsilon_{1}+\varepsilon_{2}) (\varepsilon_{1}+2 \varepsilon_{2})}{768 \varepsilon_{1}^2
   (\varepsilon_{1}-3 \varepsilon_{2}) (2 \varepsilon_{1}-3 \varepsilon_{2}) (\varepsilon_{1}-2 \varepsilon_{2})^2
   (3 \varepsilon_{1}-2 \varepsilon_{2}) (\varepsilon_{1}-\varepsilon_{2})^3 (2 \varepsilon_{1}-\varepsilon_{2})^2
   \varepsilon_{2}^2}\\
Z^{Sp(0)}_{6}\left(\ \ydiagram{1,1}\ ,\ydiagram{1+1,2,1}\ \right)&=-\frac{(3 \varepsilon_{1}+\varepsilon_{2}) (\varepsilon_{1}+2 \varepsilon_{2}) (\varepsilon_{1}+5
   \varepsilon_{2})}{512 \varepsilon_{1}^3 (\varepsilon_{1}-5 \varepsilon_{2}) (\varepsilon_{1}-3 \varepsilon_{2})
   (\varepsilon_{1}-2 \varepsilon_{2}) (\varepsilon_{1}-\varepsilon_{2})^3 (2 \varepsilon_{1}-\varepsilon_{2}) (3
   \varepsilon_{1}-\varepsilon_{2}) \varepsilon_{2}^4}\\
Z^{Sp(0)}_{6}\left(\ \ydiagram{1,1}\ ,\ydiagram{2,2}\ \right)&=\frac{(\varepsilon_{1}+2 \varepsilon_{2}) (3 \varepsilon_{1}+2 \varepsilon_{2}) (3 \varepsilon_{1}+4
   \varepsilon_{2})}{768 \varepsilon_{1}^3 (\varepsilon_{1}-4 \varepsilon_{2}) (\varepsilon_{1}-2 \varepsilon_{2})^2
   (3 \varepsilon_{1}-2 \varepsilon_{2}) (\varepsilon_{1}-\varepsilon_{2})^3 (2 \varepsilon_{1}-\varepsilon_{2})
   \varepsilon_{2}^4}\\
Z^{Sp(0)}_{6}\left(\ \ydiagram{3,3}\ \right)&=\frac{11}{2304 \varepsilon_{1}^3 (\varepsilon_{1}-2 \varepsilon_{2}) (\varepsilon_{1}-\varepsilon_{2})^3 (2
   \varepsilon_{1}-\varepsilon_{2})^2 (3 \varepsilon_{1}-\varepsilon_{2}) \varepsilon_{2}^2}\\
Z^{Sp(0)}_{6}\left(\ \ydiagram{1,1}\ ,\ydiagram[*(gray)]{0,1}*[*(white)]{1,1,1}\ \right)&=\frac{(\varepsilon_{1}+2 \varepsilon_{2}) (2 \varepsilon_{1}+3 \varepsilon_{2})^2 (2 \varepsilon_{1}+5
   \varepsilon_{2})}{23040 \varepsilon_{1}^3 (2 \varepsilon_{1}-3 \varepsilon_{2}) (\varepsilon_{1}-2
   \varepsilon_{2}) (\varepsilon_{1}-\varepsilon_{2})^3 (2 \varepsilon_{1}-\varepsilon_{2})^2 \varepsilon_{2}^6}\\
Z^{Sp(0)}_{6}\left(\ \ydiagram[*(gray)]{1+1,1+1}*[*(white)]{1+2,2}\ \right)&=\frac{(2 \varepsilon_{1}+\varepsilon_{2})^2 (\varepsilon_{1}+2 \varepsilon_{2})}{1536 \varepsilon_{1}^3
   (\varepsilon_{1}-2 \varepsilon_{2})^2 (\varepsilon_{1}-\varepsilon_{2})^3 (2 \varepsilon_{1}-\varepsilon_{2})^2
   (3 \varepsilon_{1}-\varepsilon_{2}) \varepsilon_{2}^4}\\
Z^{Sp(0)}_{6}\left(\ \ydiagram{2+3,3}\ \right)&=\frac{(2 \varepsilon_{1}+\varepsilon_{2}) (3 \varepsilon_{1}+\varepsilon_{2})}{768 \varepsilon_{1}^3 (3
   \varepsilon_{1}-2 \varepsilon_{2}) (\varepsilon_{1}-\varepsilon_{2})^3 (2 \varepsilon_{1}-\varepsilon_{2})^2 (3
   \varepsilon_{1}-\varepsilon_{2}) (4 \varepsilon_{1}-\varepsilon_{2}) (5 \varepsilon_{1}-\varepsilon_{2})
   \varepsilon_{2}^2}\\
Z^{Sp(0)}_{6}\left(\ \ydiagram{2}\ ,\ydiagram{1+2,2}\ \right)&=-\frac{(2 \varepsilon_{1}+\varepsilon_{2}) (5 \varepsilon_{1}+\varepsilon_{2}) (\varepsilon_{1}+3
   \varepsilon_{2})}{512 \varepsilon_{1}^4 (\varepsilon_{1}-3 \varepsilon_{2}) (\varepsilon_{1}-2 \varepsilon_{2})
   (\varepsilon_{1}-\varepsilon_{2})^3 (2 \varepsilon_{1}-\varepsilon_{2}) (3 \varepsilon_{1}-\varepsilon_{2}) (5
   \varepsilon_{1}-\varepsilon_{2}) \varepsilon_{2}^3}\\
Z^{Sp(0)}_{6}\left(\ \ydiagram{1,1}\ ,\ydiagram{4}\ \right)&=-\frac{(3 \varepsilon_{1}+\varepsilon_{2}) (5 \varepsilon_{1}+3 \varepsilon_{2})}{3072 \varepsilon_{1}^5
   (\varepsilon_{1}-\varepsilon_{2})^3 (2 \varepsilon_{1}-\varepsilon_{2}) (3 \varepsilon_{1}-\varepsilon_{2}) (5
   \varepsilon_{1}-\varepsilon_{2}) \varepsilon_{2}^3}\\
Z^{Sp(0)}_{6}\left(\ \ydiagram{1,1}\ ,\ydiagram[*(gray)]{1+1}*[*(white)]{3}\ \right)&=-\frac{(2 \varepsilon_{1}+\varepsilon_{2})^2 (2 \varepsilon_{1}+3 \varepsilon_{2}) (4 \varepsilon_{1}+3
   \varepsilon_{2})}{4608 \varepsilon_{1}^5 (2 \varepsilon_{1}-3 \varepsilon_{2})
   (\varepsilon_{1}-\varepsilon_{2})^3 (2 \varepsilon_{1}-\varepsilon_{2})^2 (4 \varepsilon_{1}-\varepsilon_{2})
   \varepsilon_{2}^4}\\
Z^{Sp(0)}_{6}\left(\ \ydiagram[*(gray)]{0,0,2}*[*(white)]{1+1,1+1,2,1,1}\ \right)&=-\frac{(\varepsilon_{1}+2 \varepsilon_{2}) (\varepsilon_{1}+3 \varepsilon_{2})}{768 \varepsilon_{1}^2
   (\varepsilon_{1}-5 \varepsilon_{2}) (\varepsilon_{1}-4 \varepsilon_{2}) (\varepsilon_{1}-3 \varepsilon_{2}) (2
   \varepsilon_{1}-3 \varepsilon_{2}) (\varepsilon_{1}-2 \varepsilon_{2})^2 (\varepsilon_{1}-\varepsilon_{2})^3
   \varepsilon_{2}^3}\\
Z^{Sp(0)}_{6}\left(\ \ydiagram{1+1,2,2,1}\ \right)&=\frac{\varepsilon_{1}+3 \varepsilon_{2}}{144 \varepsilon_{1}^2 (\varepsilon_{1}-4 \varepsilon_{2})
   (\varepsilon_{1}-3 \varepsilon_{2}) (2 \varepsilon_{1}-3 \varepsilon_{2}) (\varepsilon_{1}-2 \varepsilon_{2})^2
   (\varepsilon_{1}-\varepsilon_{2})^2 (2 \varepsilon_{1}-\varepsilon_{2}) \varepsilon_{2}^3}\\
Z^{Sp(0)}_{6}\left(\ \ydiagram{2,2,2}\ \right)&=-\frac{11}{2304 \varepsilon_{1}^2 (\varepsilon_{1}-3 \varepsilon_{2}) (\varepsilon_{1}-2 \varepsilon_{2})^2
   (\varepsilon_{1}-\varepsilon_{2})^3 (2 \varepsilon_{1}-\varepsilon_{2}) \varepsilon_{2}^3}\\
Z^{Sp(0)}_{6}\left(\ \ydiagram[*(gray)]{0,0,1}*[*(white)]{1,1,1,1,1}\ \right)&=-\frac{\varepsilon_{1}+3 \varepsilon_{2}}{2880 \varepsilon_{1}^2 (\varepsilon_{1}-4 \varepsilon_{2})
   (\varepsilon_{1}-3 \varepsilon_{2}) (\varepsilon_{1}-2 \varepsilon_{2})^2 (\varepsilon_{1}-\varepsilon_{2})
   \varepsilon_{2}^6}\\
Z^{Sp(0)}_{6}\left(\ \ydiagram[*(gray)]{0,2}*[*(white)]{1+1,2,1}\ \right)&=\frac{(2 \varepsilon_{1}+\varepsilon_{2}) (\varepsilon_{1}+2 \varepsilon_{2})^2}{1536 \varepsilon_{1}^4
   (\varepsilon_{1}-3 \varepsilon_{2}) (\varepsilon_{1}-2 \varepsilon_{2})^2 (\varepsilon_{1}-\varepsilon_{2})^3 (2
   \varepsilon_{1}-\varepsilon_{2})^2 \varepsilon_{2}^3}\\
Z^{Sp(0)}_{6}\left(\ \ydiagram{2+2,1+2,2}\ \right)&=-\frac{(2 \varepsilon_{1}+\varepsilon_{2}) (\varepsilon_{1}+2 \varepsilon_{2})}{768 \varepsilon_{1}^2 (2
   \varepsilon_{1}-3 \varepsilon_{2}) (\varepsilon_{1}-2 \varepsilon_{2})^2 (3 \varepsilon_{1}-2 \varepsilon_{2})
   (\varepsilon_{1}-\varepsilon_{2})^3 (2 \varepsilon_{1}-\varepsilon_{2})^2 (3 \varepsilon_{1}-\varepsilon_{2})
   \varepsilon_{2}^2}\\
Z^{Sp(0)}_{6}\left(\ \ydiagram{3+1,4,1}\ \right)&=\frac{5 (\varepsilon_{1}+2 \varepsilon_{2})}{6912 \varepsilon_{1}^4 (3 \varepsilon_{1}-2 \varepsilon_{2})
   (\varepsilon_{1}-\varepsilon_{2})^3 (2 \varepsilon_{1}-\varepsilon_{2})^2 (3 \varepsilon_{1}-\varepsilon_{2}) (4
   \varepsilon_{1}-\varepsilon_{2}) \varepsilon_{2}}\\
Z^{Sp(0)}_{6}\left(\ \ydiagram{1+3,3}\ \right)&=\frac{3 \varepsilon_{1}+\varepsilon_{2}}{144 \varepsilon_{1}^3 (\varepsilon_{1}-2 \varepsilon_{2}) (3
   \varepsilon_{1}-2 \varepsilon_{2}) (\varepsilon_{1}-\varepsilon_{2})^2 (2 \varepsilon_{1}-\varepsilon_{2})^2 (3
   \varepsilon_{1}-\varepsilon_{2}) (4 \varepsilon_{1}-\varepsilon_{2}) \varepsilon_{2}^2}\\
Z^{Sp(0)}_{6}\left(\ \ydiagram{2}\ ,\ydiagram{2,2}\ \right)&=\frac{(2 \varepsilon_{1}+\varepsilon_{2}) (2 \varepsilon_{1}+3 \varepsilon_{2}) (4 \varepsilon_{1}+3
   \varepsilon_{2})}{768 \varepsilon_{1}^4 (2 \varepsilon_{1}-3 \varepsilon_{2}) (\varepsilon_{1}-2 \varepsilon_{2})
   (\varepsilon_{1}-\varepsilon_{2})^3 (2 \varepsilon_{1}-\varepsilon_{2})^2 (4 \varepsilon_{1}-\varepsilon_{2})
   \varepsilon_{2}^3}\\
Z^{Sp(0)}_{6}\left(\ \ydiagram[*(gray)]{0,1+1}*[*(white)]{1+1,3,1+1}\ \right)&=\frac{(\varepsilon_{1}+\varepsilon_{2})^2}{432 \varepsilon_{1}^4 (\varepsilon_{1}-2 \varepsilon_{2})^2
   (\varepsilon_{1}-\varepsilon_{2})^2 (2 \varepsilon_{1}-\varepsilon_{2})^2 \varepsilon_{2}^4}\\
Z^{Sp(0)}_{6}\left(\ \ydiagram[*(gray)]{0,1+1}*[*(white)]{1+2,1+1,2}\ \right)&=-\frac{(2 \varepsilon_{1}+\varepsilon_{2}) (\varepsilon_{1}+2 \varepsilon_{2})}{96 \varepsilon_{1}^2
   (\varepsilon_{1}-3 \varepsilon_{2}) (2 \varepsilon_{1}-3 \varepsilon_{2}) (\varepsilon_{1}-2 \varepsilon_{2})^2
   (3 \varepsilon_{1}-2 \varepsilon_{2}) (\varepsilon_{1}-\varepsilon_{2}) (2 \varepsilon_{1}-\varepsilon_{2})^2
   \varepsilon_{2}^4}\\
Z^{Sp(0)}_{6}\left(\ \ydiagram{2}\ ,\ydiagram[*(gray)]{0,1}*[*(white)]{1,1,1}\ \right)&=\frac{(\varepsilon_{1}+2 \varepsilon_{2})^2 (3 \varepsilon_{1}+2 \varepsilon_{2}) (3 \varepsilon_{1}+4
   \varepsilon_{2})}{4608 \varepsilon_{1}^4 (\varepsilon_{1}-4 \varepsilon_{2}) (\varepsilon_{1}-2
   \varepsilon_{2})^2 (3 \varepsilon_{1}-2 \varepsilon_{2}) (\varepsilon_{1}-\varepsilon_{2})^3 \varepsilon_{2}^5}\\
Z^{Sp(0)}_{6}\left(\ \ydiagram[*(gray)]{0,1+1}*[*(white)]{2+1,3,1}\ \right)&=-\frac{(2 \varepsilon_{1}+\varepsilon_{2}) (\varepsilon_{1}+2 \varepsilon_{2})}{96 \varepsilon_{1}^4 (2
   \varepsilon_{1}-3 \varepsilon_{2}) (\varepsilon_{1}-2 \varepsilon_{2})^2 (3 \varepsilon_{1}-2 \varepsilon_{2})
   (\varepsilon_{1}-\varepsilon_{2}) (2 \varepsilon_{1}-\varepsilon_{2})^2 (3 \varepsilon_{1}-\varepsilon_{2})
   \varepsilon_{2}^2}\\
Z^{Sp(0)}_{6}\left(\ \ydiagram{6}\ \right)&=-\frac{1}{4608 \varepsilon_{1}^6 (\varepsilon_{1}-\varepsilon_{2}) (2 \varepsilon_{1}-\varepsilon_{2}) (3
   \varepsilon_{1}-\varepsilon_{2}) (4 \varepsilon_{1}-\varepsilon_{2}) (5 \varepsilon_{1}-\varepsilon_{2})
   \varepsilon_{2}}\\
Z^{Sp(0)}_{6}\left(\ \ydiagram[*(gray)]{1+2}*[*(white)]{4}\ \right)&=\frac{(2 \varepsilon_{1}+\varepsilon_{2}) (3 \varepsilon_{1}+\varepsilon_{2})}{9216 \varepsilon_{1}^6
   (\varepsilon_{1}-\varepsilon_{2})^3 (2 \varepsilon_{1}-\varepsilon_{2})^2 (3 \varepsilon_{1}-\varepsilon_{2})
   \varepsilon_{2}^2}\\
Z^{Sp(0)}_{6}\left(\ \ydiagram{2}\ ,\ydiagram[*(gray)]{1+1}*[*(white)]{3}\ \right)&=-\frac{(2 \varepsilon_{1}+\varepsilon_{2}) (3 \varepsilon_{1}+2 \varepsilon_{2})^2 (5 \varepsilon_{1}+2
   \varepsilon_{2})}{23040 \varepsilon_{1}^6 (\varepsilon_{1}-2 \varepsilon_{2})^2 (3 \varepsilon_{1}-2
   \varepsilon_{2}) (\varepsilon_{1}-\varepsilon_{2})^3 (2 \varepsilon_{1}-\varepsilon_{2}) \varepsilon_{2}^3}\\
Z^{Sp(0)}_{6}\left(\ \ydiagram[*(gray)]{2+1}*[*(white)]{5}\ \right)&=\frac{3 \varepsilon_{1}+\varepsilon_{2}}{2880 \varepsilon_{1}^6 (\varepsilon_{1}-\varepsilon_{2}) (2
   \varepsilon_{1}-\varepsilon_{2})^2 (3 \varepsilon_{1}-\varepsilon_{2}) (4 \varepsilon_{1}-\varepsilon_{2})
   \varepsilon_{2}^2}
\end{align}
\large
\subsection{$k=7$}\scriptsize

\normalsize
Note that the numerator in (\ref{doublepole}) is not factorized. The degree $-(4+1)$ part appears in its formal series around the pole (\ref{doublepole}) and then one should pick up the first degree of the Taylor series of the nonzero factors.

\section{A Comment on the multiplicity factors}\label{multiplicityfactors}
To see relations between the $A_{\sigma,\eta}$ and the notion of the multiplicity factor in \cite{Hollands:2010xa}, we rephrase the distinction rule as follows.
\begin{proposition}
Given a graph of the set $\{Q_{1},\cdots,Q_{n}\}\subset\widetilde{{\bf Q}}^{BCD}$ which is of the form (\ref{loop}), the distinction rule for the graph is equivalent to the following condition; for any $1\le i\le n$, there is a number $j\ge i$ and at least one of two vector $e_{i}\pm e_{j}$ ($e_{i}$ for $j=i$) can be expressed as a non-negative linear combinations of $Q_{1},\cdots,Q_{n}$.\hspace{\fill}$\blacksquare$
\end{proposition}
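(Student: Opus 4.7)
We establish the equivalence by constructing explicit non-negative combinations in each direction, exploiting the recursive structure of the graph (\ref{loop}) as loops with attached tree branches. The underlying intuition is that the distinction rule encodes positivity of the solution to $\sum_k a_k Q_k=\eta$, while the rephrased condition extracts the ``leading-order'' contribution of that solution index by index, with the ordering $j\ge i$ reflecting the dominant $\alpha$-scaling of $\eta=(1,\alpha,\dots,\alpha^{n-1})$.

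For the forward direction ($\Rightarrow$), assume the distinction rule. The goal is, for each index $i\in\{1,\dots,n\}$, to exhibit a non-negative combination of the $Q_k$'s equal to $e_i$ or $e_i\pm e_j$ for some $j\ge i$. I would proceed by structural recursion, processing vertices in decreasing order of $M(u)$. At a loop vertex $u\in V_3$ with $m(u)=M(u)=n$, the two incident loop edges carry $+$ at $u$'s end by rule (iii); summing them contributes $2e_n$ plus single-$e$ contributions at the neighboring loop vertices, and walking around the loop using the alternating sign pattern of rules (iv) and (v) telescopes these side contributions, yielding $e_n$ as a non-negative combination of loop edges and handling $i=n$ with $j=n$. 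For non-root vertices $u$ with $m(u)=i$, combine $Q_u$ (the edge directed toward the root of the component) with the already-built combination for the higher component; the sign rule at $u$ cancels the unwanted $e_{m(\mathrm{parent})}$ contribution and yields $e_i\pm e_j$ with $j\ge i$. For $V_2$ vertices, a subtlety is that the index $j$ equals $M(u)$ (attained in a descendant subtree) rather than the parent's index, but since $M(u)>m(u)=i$ we still have $j>i$.

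For the backward direction ($\Leftarrow$), assume that for each $i$ we have a non-negative combination $S_i=\sum_k c_k^{(i)} Q_k$ equal to $e_i+\sigma_i e_{j_i}$ with $\sigma_i\in\{0,\pm 1\}$ and $j_i\ge i$ (with $\sigma_i=0$ iff $j_i=i$). Solve the triangular recursion $\tilde a_k=\alpha^{k-1}-\sum_{i<k,\,j_i=k}\sigma_i\tilde a_i$ (for $k=1,\dots,n$) for coefficients $\tilde a_i$ satisfying $\sum_i \tilde a_i S_i=\eta$; by induction $\tilde a_k=\alpha^{k-1}+O(\alpha^{k-2})$, so all $\tilde a_k$ are strictly positive for large $\alpha$. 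Then $\eta=\sum_k \bigl(\sum_i \tilde a_i c_k^{(i)}\bigr) Q_k$ is a non-negative combination. Since $Q_1,\dots,Q_n$ form a basis of $\mathbb{R}^n$ and $\eta$ is generic by the earlier lemma, this decomposition is unique and no coefficient can vanish, so every coefficient is strictly positive, establishing $\eta\in\mathrm{Cone}(Q_1,\dots,Q_n)$, i.e., the distinction rule.

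The main obstacle is the forward direction: the case analysis across the five vertex classes $V_1,\dots,V_5$ must be threaded through the tree-loop structure to show the recursive construction always produces $j\ge i$. The most delicate sub-step is at a $V_4$ vertex, where a tree branch realizing $M(u)=n$ attaches to the loop and the telescoping around the loop must be combined with the branch's non-negative combination in a sign-compatible way; here rule (iv)'s $-p_u$ sign on the tree edge at $u$ is essential. Verifying $j\ge i$ throughout, especially at $V_2$ vertices whose $M$-descendant lies in a child subtree rather than at the parent, is the bulk of the technical bookkeeping.
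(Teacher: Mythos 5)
Your forward direction follows essentially the same route as the paper: starting from the sign rules (i)--(v), you build the required non-negative combinations by walking through the tree--loop structure, isolating $e_{n}$ via the half-sum of the loop edges (the $V_{5}$ signs cancel pairwise, leaving only the unique $V_{3}$ or $V_{4}$ vertex) and telescoping along tree paths toward the vertex realizing $M(u)$; the paper performs the same case analysis, including the same delicate points at $V_{2}$ and $V_{4}$ that you single out. Your backward direction, however, is a genuinely different and cleaner argument. The paper re-derives each of the five sign assignments from the stated condition by inspecting the relevant vertex classes, which is terse and leaves the reader to verify that the claimed sign patterns are forced. You instead take the given non-negative combinations $S_{i}=e_{i}+\sigma_{i}e_{j_{i}}$, note that the system $\sum_{i}\tilde a_{i}S_{i}=\eta$ is strictly triangular because $\sigma_{i}\neq 0$ forces $j_{i}>i$, solve it with $\tilde a_{k}=\alpha^{k-1}+O(\alpha^{k-2})>0$ for large $\alpha$, and then transfer positivity to the $Q_{k}$-coefficients using the genericity lemma (a vanishing coefficient in a basis expansion of $\eta$ would place $\eta$ in the span of $n-1$ elements of $\widetilde{\bf Q}^{BCD}$, which is excluded). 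This establishes $\eta$-orientedness directly and recovers the distinction rule through the already-proven Theorem, rather than reconstructing the rules by hand; it is arguably more convincing than the paper's sketch. Two minor points: you assert rather than derive that the $Q_{k}$ form a basis (this actually follows from the condition, since the $S_{i}$ are triangular, hence independent, and lie in the span of the $Q_{k}$), and your forward direction remains a sketch whose bookkeeping you acknowledge but do not fully execute --- neither affects correctness.
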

\begin{proof}
 Firstly, we show that the latter condition is satisfied when the graph is $\eta$-oriented. We use the notation in the section \ref{section3}.

We first consider a vertex $v\in V_{1}\sqcup V_{2}$. For the case $(Q_{v}, e_{m(v)})=-1$, the sum of the lines in the path between $v$ and the vertex $w\neq v$ with $m(w)=M(v)$, becomes a desired vector for the number $m(v)$.

For the other case $(Q_{v}, e_{m(v)})=+1$, we trace back to the ascendants of $v$. Let $u$ be the parent vertex of  $v$. If $m(v)<m(u)$, $Q_{v}$ is a vector that we want to have. If not, $u\in V_{2}\sqcup V_{3}\sqcup V_{4}\sqcup V_{5}$. Here we assume that $u\in V_{2}$. Thanks to $\pm p_{u}$ in (\ref{ass2}), there is a line $Q\neq Q_{v}$ that satisfies $(Q+Q_{v}, e_{m(u)})=0$. Here we can choose such $Q$ from $Q_{u}$ and the vector $Q_{T}$, the line between $u$ and its descendant tree $T$ in (\ref{ass2}). If we can choose $Q=Q_{T}$, we see that there exists a vertex $w$ in $T$, with $m(w)=M(u)>m(v)$ and then the sum of all the lines in the unique path between $v$ and $w$ becomes a desired vector for the $m(v)$. If we can not choose $Q=Q_{T}$, we choose $Q=Q_{u}$ and repeat the above process. 

As a result, we can get a desired vector or a vector $e_{m(v)}\pm e_{m(u)}$ by a non-negative combination of $Q$'s, where $u=u(v)$ is the unique vertex satisfying $u\in V_{3}\sqcup V_{4}\sqcup V_{5}$ and $v\in D(u)$. We focus on the latter vector. We may assume $m(u)<m(v)$, or $e_{m(v)}\pm e_{m(u)}$ is a desired vector.

If $u\in V_{5}$, an appropriate path around the loop gives a vector $e_{m(v)}+e_{m(t)}$ by a non-negative combination of $Q$'s plus $e_{m(v)}\pm e_{m(u)}$, where $t$ is the unique vertex in the $V_{3}\sqcup V_{4}$. Then we go down in the tree of the $t$ to the vertex $w$ with $m(w)=n$, and then we get a desired vector $e_{m(v)}+e_{n}$. On the other hand, if $u\in V_{3}\sqcup V_{4}$, the vector $e_{m(v)}\pm e_{m(u)}$ plus the half of the sum of all the lines in the loop becomes a desired vector $e_{m(v)}$. So we can get a desired vector for each vertex $v\in V_{1}\sqcup V_{2}$.

Next we consider vertices in the loop. If $v\in V_{3}\sqcup V_{5}$, repeat the above discussion and get a desired vector for the $m(v)$. If $v\in V_{4}$, the sum of all the lines in the path from $v$ to the vertex $w$ with $m(w)=n$ is a desired vector $e_{m(u_{1})}+e_{n}$. 

As a result, we can get a desired vector for each $1\le i\le n$.

Conversely, we assume the latter condition. Take a vertex $v$ that is not in the loop of the graph. The rule (\ref{ass1}) is trivial for the case $v\in V_{1}$. For the other case $v\in V_{2}$, an application of the assumption to the vertex $w\in D(v)$ with $m(w)=M(v)$ leads the rule (\ref{ass2}).

Next, if $v\in V_{3}$, we have to make $e_{n}$ by a non-negative combination of the $Q$s. This leads the rule (\ref{general1}). If $V_{3}$ is empty, we apply the condition to the vertex $w$ with $m(w)=n$, and then we get the rules (\ref{general2},\ref{general3}) for the case $v\in V_{4}\sqcup V_{5}$. As a result, we derive the distinction rule from the latter condition.

Thus, the two conditions are equivalent.
\end{proof}

Now we interpret the distinction rule in the box language. We have the following corollaries that judge whether a graph contains an $\eta$-oriented subgraph or not.
\begin{corollary}
The graph for a pole $\phi_{*}\in \mathbb{C}^{n}$ contains an $\eta$-oriented subgraph if and only if, for each box with number $1\le i\le n$ in the box expression of $\phi_{*}$, there is another box with number $j(>i)$ or character $j' (j\ge i)$ in its upper-left region in the lattice where the box belongs.\hspace{\fill}$\blacksquare$
\end{corollary}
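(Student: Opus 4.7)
My plan is to derive this corollary as a direct translation of the preceding proposition into the box language, using the dictionary between factors of the integrand and adjacencies of boxes. The preceding proposition already reduces $\eta$-orientation to the algebraic statement that for every $1\le i\le n$ at least one of $e_i$, $e_i+e_j$, $e_i-e_j$ (with $j\ge i$) is a non-negative combination of the $Q$'s assigned to edges of the graph; the task is to restate that condition geometrically.

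First I would set up the dictionary between edges in the graph and adjacencies in the box expression. Each $Q\in\widetilde{{\bf Q}}^{BCD}$ comes from a linear factor of the denominator that vanishes at $\phi_*$: the self-loop $\pm e_j$ corresponds to $\phi_{*j}$ lying on one of the distinguished lines $\pm\varepsilon_{1,2,+}$ or $\pm\varepsilon_{1,2}/2$, i.e.\ to box $j$ being adjacent (in the appropriate lattice) to the ``center'' where box $0$ or a mirror of box $j$ sits; and an edge with vector $\pm e_i\mp e_j$ or $\pm e_i\pm e_j$ corresponds to two boxes that are $\varepsilon_{1,2}$-adjacent, either both signed or of opposite sign (the latter being an adjacency between box $i$ and box $j'$). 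With the sign convention of (\ref{Qass}) fixed, the positive sense of each edge points from a box to the one situated one step up or to the left in its own lattice.

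Next I would show that ``non-negative combination of $Q$'s equals $e_i$'' corresponds to a chain of such upper-left adjacencies emanating from box $i$ and terminating at a self-loop location, while ``non-negative combination equals $e_i-e_j$ ($j>i$)'' corresponds to such a chain from box $i$ terminating at box $j$, and ``$e_i+e_j$ ($j\ge i$)'' to a chain from box $i$ terminating at box $j'$. The monotone character of each step in the chain is what encodes non-negativity: negative coefficients would force a step in the opposite, lower-right direction, which would contradict the sign assignment on the edges. Once this is established, the existence of such a chain for box $i$ is exactly the existence of another box (either numbered $j>i$, or a mirror $j'$ with $j\ge i$) sitting in the upper-left region of box $i$ within the same lattice, since the terminal vertex of the chain is the witness.

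The hard part will not be the individual implications but the careful bookkeeping across the four lattices of $Sp(N)$ poles and the half-integer positions: the self-loop terminations $\pm\varepsilon_{1,2,+},\pm\varepsilon_{1,2}/2$ need to be absorbed as ``virtual'' boxes at the center of the diagram so that the statement ``there is a box in the upper-left region'' uniformly covers the $j=i$ (self-loop), $j>i$ numbered and $j\ge i$ mirror cases. After this case analysis is written down, the forward and reverse directions match term-by-term with the rephrasing of the previous proposition, so the corollary follows immediately without any additional combinatorial input.
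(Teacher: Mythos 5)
The paper itself offers no written proof of this corollary: it is asserted as a direct reading of the preceding proposition (the non-negative-linear-combination criterion) in the box language, so your plan to make the dictionary between edges of the graph and adjacencies of boxes explicit is the right starting point. However, your central identification is only half correct. The forward implication (a non-negative combination of elements of ${\bf Q}_{*}$ equal to $e_{i}-e_{j}$, $e_{i}+e_{j}$ or $e_{i}$ yields a witness box $j$, $j'$ or $i'$ in the upper-left region of box $i$) is best proved not by assuming the combination telescopes into a chain but by evaluating the linear identity at $\phi_{*}$: each $Q\in{\bf Q}_{*}$ satisfies $Q\cdot\phi_{*}=c_{Q}$ with $c_{Q}$ a non-negative combination of $\varepsilon_{1},\varepsilon_{2}$, so $\phi_{i*}-\phi_{j*}$ lands in the closed positive cone. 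This matters because the proof of the preceding proposition produces the vector $e_{i}$ as, for instance, a tree contribution plus \emph{half} the sum of all the lines in the loop of (\ref{loop}); such a combination is not a single monotone walk, so ``non-negative combination $=$ chain of upper-left adjacencies'' is false as an equivalence, and your chain picture does not cover all the cases the proposition actually uses.

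The genuine gap is the reverse implication, which you dismiss with ``the forward and reverse directions match term-by-term.'' They do not. A box merely lying in the upper-left quadrant of box $i$ does not, for an arbitrary configuration, produce any non-negative combination of elements of ${\bf Q}_{*}$: the two boxes may fail to be joined by a monotone sequence of adjacent boxes (a widely separated mirror pair, or a C-shaped cluster where the only connecting path backtracks), in which case no element of ${\bf Q}_{*}$ links them and the graph need not contain a basis at all, even though the quadrant condition holds. Closing this direction requires invoking the specific structure of the admissible poles --- each relevant connected component of the box configuration is a cluster of adjacent boxes, closed under the mirror $\phi\mapsto-\phi$, whose graph carries exactly one loop --- and then showing that inside such a cluster a witness in the quadrant can always be traded for one reachable by a monotone path. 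This is precisely the combinatorial input that your closing sentence claims is unnecessary. Finally, the corollary asserts the existence of an $\eta$-oriented \emph{subgraph}: even granting the witnesses, you must still exhibit $n$ lines of ${\bf Q}_{*}$ forming a basis of the form (\ref{loop}) that satisfies the distinction rule, and this existential selection step is absent from your outline.
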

\begin{corollary}
The graph for a pole $\phi_{*}\in \mathbb{C}^{n}$ is $\eta$-oriented if and only if for each box with number $1\le i\le n$ in the box expression of $\phi_{*}$, there is a box with number $j(>i)$ or character $j' (j\ge i)$ in its upper-left region in the lattice where the box belongs.\hspace{\fill}$\blacksquare$
\end{corollary}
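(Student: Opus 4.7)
The plan is to deduce this corollary from the preceding one (which characterizes when the graph contains an $\eta$-oriented subgraph) together with the elementary observation that the full graph of a pole is $\eta$-oriented if and only if it already contains an $\eta$-oriented subgraph of the form (\ref{loop}). Once this equivalence is in place, the box-theoretic condition transfers verbatim from the previous corollary.

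First I would establish the equivalence. One direction is trivial: since the cone of any subset is contained in the cone of the full set of lines, containing an $\eta$-oriented subgraph immediately forces the full graph to be $\eta$-oriented. For the converse, suppose the full graph is $\eta$-oriented, so $\eta=\sum_{\ell}c_{\ell}Q_{\ell}$ with all $c_{\ell}\ge 0$. The genericity statement for $\eta=(1,\alpha,\ldots,\alpha^{n-1})$ proved earlier in Section \ref{section3} tells us that $\eta$ cannot lie in the span of any $n-1$ vectors of $\widetilde{{\bf Q}}^{BCD}$. Consequently the sub-collection of $Q_{\ell}$ with strictly positive coefficients must span all of $\mathbb{R}^{n}$. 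A standard Carath\'eodory-type reduction then allows one to retain exactly $n$ linearly independent vectors among them, still with non-negative coefficients summing to $\eta$. By the earlier structural proposition, the subgraph on these $n$ vectors has each connected component of the form (\ref{loop}), so it is an $\eta$-oriented subgraph in the precise sense demanded by the previous corollary.

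Applying that previous corollary then translates the existence of such a subgraph into the stated box-theoretic condition, completing the argument. The main technical point will be the Carath\'eodory reduction: extracting $n$ linearly independent vectors out of the positively weighted part of a representation of $\eta$ without destroying the non-negativity of the coefficients. This is the classical fact that any vector in a finitely generated polyhedral cone $\mathrm{Cone}(v_{1},\ldots,v_{m})\subset\mathbb{R}^{n}$ lies in $\mathrm{Cone}(v_{i_{1}},\ldots,v_{i_{n}})$ for some linearly independent subset of the generators; since this reduction only discards generators and never produces new vectors, the resulting $n$-element subset remains inside $\widetilde{{\bf Q}}^{BCD}$ and gives an honest subgraph of the original graph, which is all that is required to invoke the previous corollary.
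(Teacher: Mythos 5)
Your argument is correct, but it reaches the corollary by a different route than the paper does. The paper states this corollary, like the one before it, as an unproved consequence of the rephrased distinction rule (the proposition characterizing $\eta$-orientedness of an $n$-line graph of the form (\ref{loop}) via non-negative combinations yielding $e_{i}\pm e_{j}$), leaving implicit both the translation into box language and the passage from $n$-line subgraphs to the full graph ${\bf Q}_{*}(\phi_{*})$. You instead take the preceding corollary as a black box and supply exactly the missing bridge: the equivalence between ``contains an $\eta$-oriented subgraph'' and ``is $\eta$-oriented.'' Your converse direction -- Carath\'eodory reduction inside the positively weighted generators, combined with the genericity lemma for $\eta=(1,\alpha,\dots,\alpha^{n-1})$ to force the surviving linearly independent set to have exactly $n$ elements -- is the right tool, is nowhere spelled out in the paper, and is what makes the corollary valid even when $|{\bf Q}_{*}|>n$. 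Two points to tighten. First, the ``trivial'' direction is not literally an inclusion of cones, since the paper's $\mathrm{Cone}$ demands all coefficients strictly positive; you need the perturbation $\eta-\epsilon\sum_{Q\notin\sigma}Q=\sum_{Q\in\sigma}b_{Q}(\epsilon)\,Q$ with $b_{Q}(\epsilon)\to a_{Q}>0$ as $\epsilon\to 0$, which works because $\sigma$ is a basis. Second, the box condition does not transfer quite ``verbatim'': the previous corollary says ``another box'' where this one says ``a box.'' You should observe that the extra word is vacuous -- the required witness must carry a number $j>i$ or a character $j'$, so it can never be the box $i$ itself -- hence the two conditions coincide and your reduction does prove the stated corollary.
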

The latter corollary provides an algorithm to determine the multiplicity factor that appeared in \cite{Hollands:2010xa}. 
\begin{example}
Consider a Weyl orbit whose diagram is of the form \ytableausetup{centertableaux,boxsize=.6em}\ydiagram{4}\ytableausetup{centertableaux,boxsize=1.5em}. The corresponding graph to the diagram is\!\!\!
\begin{tikzpicture}[auto,node distance=1cm,
  thick,main node/.style={circle,draw,minimum size=3mm},baseline=-.1cm]
  \node[main node] (1) {$$};
  \node[above] at (1.north){$$};
  \node[main node] (2) [right of=1] {$$};
  \node[above] at (2.north){$$};
  \draw (1) to [in=150,out=210,loop] node {$$}  node[near end, above]{$$}(1);
  \draw (1) to node[very near start,below] {$$} node[above]{$$} node[very near end,below]{$$}(2);
  \end{tikzpicture}. For the box \ytableausetup{centertableaux,boxsize=1em}\begin{ytableau}1\end{ytableau}, at least one of the three boxes \begin{ytableau}1'\end{ytableau}, \begin{ytableau}2\end{ytableau} and \begin{ytableau}2'\end{ytableau} is in the left of \begin{ytableau}1\end{ytableau}. Also, \begin{ytableau}2'\end{ytableau} must lie in the left of \begin{ytableau}2\end{ytableau}. So just the three poles \begin{ytableau}2'&1'&1&2\end{ytableau}, \begin{ytableau}2'&1&1'&2\end{ytableau} and \begin{ytableau}1'&2'&2&1\end{ytableau} give $\eta$-oriented graphs. 
  
Similarly, for the diagram \ytableausetup{centertableaux,boxsize=.6em}\ydiagram{1+2,2}\ytableausetup{centertableaux,boxsize=1em}, we have just one pole \begin{ytableau}\none&2'&1\\1'&2\end{ytableau} whose graph is $\eta$-oriented. As a result, we have the multiplicity factors \ytableausetup{centertableaux,boxsize=.4em}$A_{\ydiagram{4},\eta}=3$ and $A_{\ydiagram{1+2,2},\eta}=1$. \hspace{\fill}$\blacksquare$
\end{example}
\bibliographystyle{utphys}
\bibliography{reference}
\end{document}